\theoremstyle{definition}
\newtheorem{definition}{Definition}
\theoremstyle{plain}
\newtheorem{lemma}{Lemma}
\theoremstyle{plain}
\newtheorem{theorem}{Theorem}
\theoremstyle{plain}
\newtheorem{proposition}{Proposition}
\theoremstyle{definition}
\newtheorem{example}{Example}
\theoremstyle{definition}
\theoremstyle{plain}
\newtheorem{corollary}{Corollary}
\newcommand{\alloc}{\mathcal{A}}
\newcommand{\pp}{\mathscr{P}}
\newcommand{\obj}{\mathcal{O}}
\DeclareMathOperator*{\argmax}{arg\,max}
\title{Local Priority Mechanisms\thanks{The authors are grateful to Brian Fu for expert research assistance.}}
\author{Joseph Root\thanks{Department of Economics, University of Chicago, Email: jroot@uchicago.edu} \and David S.\ Ahn\thanks{Olin Business School, Washington University in St.\ Louis. Email: ahnd@wustl.edu}}
\date{\date{\today}}
\begin{document}

\maketitle

\begin{abstract}
We introduce a novel family of mechanisms for constrained allocation problems which we call local priority mechanisms. These mechanisms are parameterized by a function which assigns a set of agents -- the local compromisers -- to every infeasible allocation. The mechanism then greedily attempts to match agents with their top choices. Whenever it reaches an infeasible allocation, the local compromisers move to their next favorite alternative. Local priority mechanisms exist for any constraint so this provides a method of constructing new designs for any constrained allocation problem. We give axioms which characterize local priority mechanisms. Since constrained allocation includes many canonical problems as special constraints, we apply this characterization to show that several well-known mechanisms, including deferred acceptance for school choice, top trading cycles for house allocation, and serial dictatorship can be understood as instances of local priority mechanisms. Other mechanisms, including the Boston mechanism, are not local priority mechanisms. We give sufficient conditions for a local priority mechanism to be group strategy-proof. We also provide conditions which enable welfare comparisons across local priority mechanisms.
\end{abstract}

We introduce a new family of mechanisms for constrained allocation problems with private values. Like many mechanisms used in real-world market design, our class is defined by an algorithm. Our algorithm moves through the space of infeasible allocations to approach a final feasible outcome. It begins by trying to give each agent her favorite choice. If that is infeasible, a collection of agents, who we call the local compromisers, then move to their next-favorite choices. The mechanism checks if this adjustment makes the new allocation feasible. If not, then another set of compromisers, defined by the new allocation, must move down their preference. This process continues until the algorithm terminates with a feasible allocation. Constrained allocation encompass several well-known problems \cite{RoAh23}, including social choice, house allocation, one-sided and two-sided matching, and school choice, and our new class of mechanisms provides a unified parameterization of several canonical mechanisms. Since the constraint is left general, it also provides methods for constructing mechanisms given any constrained allocation problem with private values.

The central parameter in our model is the designation of local compromisers to each infeasible allocation. This is akin to the ordering of dictators in serial dictatorship, the priorities in deferred acceptance, or the initial property rights in top trading cycles.  As in these mechanisms, the parameter allows a transparent view into the structure of the mechanism. The assignment of local compromisers can also be viewed as an assignment of local property rights. In every potential conflict of interest, a subset of agents - the local compromisers - are required to relinquish their claims. 

We give three main results. The first describes which kinds of mechanisms can be written as a local priority mechanism for some assignment of compromisers. This characterization unearths what implicit limitations are assumed by restricting attention to local priority mechanisms. It also is methodologically useful in verifying whether known mechanisms are in the class. The second result gives conditions under which a given local priority mechanism is group strategy-proof. This is done through conditions on the paths the algorithm can take while searching through the infeasible space. This provides an understanding of what kinds of assignments are immune to group collusion in misreporting preferences to the planner. This complements an earlier implicit characterization for all mechanisms, local priority or otherwise, provided in \citeasnoun{RoAh23}. Here, since we are considering a specific class of mechanisms, we can provide more explicit conditions. Finally, we provide comparative statics results which enable comparisons across mechanisms and constraints.

The introduction of a new class of mechanisms expands the applicability of market design to new problems. The class of local priority mechanisms provides a recipe for constructing mechanisms for arbitrary constraints. To our knowledge, this is the first general blueprint for defining mechanisms that can be applied to any constrained allocation problem. This is important because for many problems, even those that seem close to canonical ones, there is no off-the-shelf method for constructing mechanisms beyond serial dictatorship. For example, consider a small variation in the house allocation problem where one of the houses can either be assigned to a single agent, or all the agents at the same time. We see no obvious way to adapt top trading cycles to handle this perturbation. On the other hand, we show that local priority mechanisms immediately yield a class of interesting mechanisms for this perturbed problem. Beyond the ability to construct new mechanisms, the fact that an algorithm is attached has additional practical benefits. First, it provides a way of explaining to the participant how the mechanism reaches its final allocation. Second and related, it provides a simple way of implicitly coding the mechanism as a computer program, without having to explicitly list the outcome for every possible submission.

Local priority mechanisms are also theoretically interesting. As mentioned, \citeasnoun{RoAh23} represented several important market design problems as special constraints in a constrained object allocation problem with private values. Local priority mechanisms turn out to include several canonical mechanisms across these environments. We show the following prominent mechanisms are local priority mechanisms: deferred acceptance for the school choice problem; top trading cycles for the house allocation problem; and serial dictatorship generally. Our class of local priority mechanisms unifies these mechanisms and exposes a commonality between them. Each can be implemented by a greedy algorithm that gives priority in a specific way: through school priorities, through property rights, or through the ordering of dictators. The parameter of local priority mechanisms provides a way of understanding these different distributions of power in a unified way. 

This unification does not come from excessive generality. Many mechanisms are not local priority mechanisms. Perhaps the most well-known market design mechanism is the deferred acceptance mechanisms for the marriage problem, where the accepting side has true preferences over the proposers, as opposed to the priorities of schools over students in the school choice problem. We show that the deferred acceptance algorithm for the marriage market is not a local priority mechanism. Another famous mechanism that is not in our class is the Boston mechanism, or immediate acceptance mechanism, for school choice.

The rest of the paper is organized as follows. In section \ref{sec: preliminaries} we introduce the constrained allocation problem, give examples and fix notation for the rest of the paper. In section \ref{sec:constraint_traversing_section} we introduce and characterize local priority mechanisms, giving several examples and non-examples. In section \ref{sec: characterization} we consider incentives and provide two conditions which are sufficient for a local priority mechanism to be group strategy-proof for any constraint. In section \ref{sec: comparisons} we show how to compare welfare across different local priority mechanisms. Finally, in section \ref{sec: discussion} we close with a discussion and outline several open problems. We leave most proofs to the appendix. 

\section{Preliminaries}\label{sec: preliminaries}

Let $N$ be a set of $n$ agents, with generic representatives $i,j,k$. Let $\obj$ denote the set of objects, with typical elements $a,b,c$. Each agent is assigned an individual object from $\obj$ and the family of allocations is $\alloc = \obj^n$, where each coordinate is an agent's allocation. An allocation can also be viewed as a function $\sigma$ from $N$ to $\obj$. Social allocations will typically be given names $x,y,z$ or $\mu,\nu$ when it is convenient view an allocation as a function. For a fixed allocation $\mu$, a suballocation for a subset $M \subset N$ is the restriction $\nu = \mu|_{M}$.

Not all allocations will be feasible. A nonempty constraint $C \subset \alloc$ records the feasible allocations.\footnote{Given this level of generality, there is no loss in having a single set of objects $\obj$ rather than a set of objects $\obj_i$ for each $i$ since in the latter case we can set $\obj=\cup \obj_i$ and design the constraint to require that each agent $i$
s allocation is in $\obj_i$.} Let $\bar{C} = \obj^n - C$ denote the infeasible allocations. For a subset $M \subset N$ of agents, the set of feasible suballocations is defined by:
\begin{align*}
    C^M = \left\{ \, y \in \obj^M \, \, : \, \, \exists \, x \in C \text{  such that } y_j = x_j \, \forall j \in M \, \right\}. 
\end{align*}
Given a subset $B\subset C$ we will also write $B^M = \left\{ \, y \in \obj^M \, \, : \, \, \exists \, x \in B \text{  such that } y_j = x_j \, \forall j \in M \, \right\}$. Let $\nu : M \rightarrow \obj$ be a suballocation. An extension of $\nu$ to a larger subset of agents $M' \supset M$ is a suballocation $\mu : M' \rightarrow \obj$ such that $\mu\vert_{M}=\nu$. The extension $\mu$ is feasible if $\mu \in C^{M'}$. A complete extension is an extension to the set $N$ of all agents. For a fixed suballocation $\nu$, let $C(\nu)$ denote the set of all feasible and complete extensions of $\nu$.

Let $P$ be the set of all strict preferences on $\obj$. We consider a private-values environment where each agent $i$ only cares about her own assigned object $\mu(i)$, and experiences no externalities from others' assignments. Each agent $i$ has a ranking $\succ_i \, \in P$ over the objects. We write $a\succ_i b$ to mean that $i$ ranks object $a$ over object $b$ and $a\succsim_i b$ to mean that either $a\succ_i b$ or $a=b$. Extending the ranking to a partial order over sets, say that $A \succ_i B$ if $a \succ_i b$ for all $a \in A, b \in B$. Let $\pp = P^n$ be the set of preference profiles. Typical profiles will be denoted $\succ$ so that individual preferences will have an index, for example $\succ_i$. As with allocations, we can consider a projection of preference profile to a subset of agents $M \subset N$, denoted by $\succ_M$. If $\succ_M$ is a vector of preferences for agents in $M$ and $\succ_{-M}$ is a vector of preferences for agents not in $M$, then $(\succ_M, \succ_{-M})$ carries its obvious interpretation. We sometimes abuse notation and write $\succ_{-}$ instead of $\succ_{-M}$ when the set $-M$ is clear. It is sometimes useful to record the set of preferred and dis-preferred objects to a given object $a$. Fix a preference $\succ_i$ and an object $a \in \obj$. The  (strict) lower and upper contour sets are 
\[LC_{\succ_i} (a) = \{b \in \obj : a \succ_i b \}\hspace{0.5cm} UC_{\succ_i} (a) = \{b \in \obj : b \succ_i a \}\] respectively. Since we consider strict rankings, it is unambiguous to refer to the $n$th-top choice given a preference $\succ_i$, which we denote $\tau_n (\succ_i)$. Similarly, for a preference profile $\succ$, let $\tau_n (\succ)$ denote the allocation where every agents gets her $n$th choice. Given an allocation $\mu$, let $P^\mu$ be the set of all preference profiles such that each agent $i$ top-ranks $\mu(i)$.

A mechanism must select an outcome from the set $C$ of feasible allocations. A feasible mechanism is a function $f: \pp \rightarrow C$ taking every profile of preferences $\succ$ to a corresponding feasible allocation $f(\succ)$. 

Constrained object allocation captures a variety of important problems as special cases. 

\begin{example}
    The following are prominent examples of constrained allocation problems:
    \begin{enumerate}
        \item Arrovian social choice: $C=\{\mu: \mu(i) = \mu(j) \text{ for all }i,j\}$
        \item House allocation: $C=\{\mu: \mu(i) \neq \mu(j) \text{ for }i\neq j\}$
        \item School choice: for each $a$ there is a capacity $q_a$ so that $C=\{\mu: \vert \mu^{-1}(a)\vert \leq q_a \text{ for all }a\}$
        \item One-sided matching: $\obj=N$ and $C=\{\mu: \mu(\mu(i))=i \text{ for all }i\}$
        \item Two-sided matching: $\obj=N=M\sqcup W$ and $C=\{\mu : \mu(i)\in W\cup\{i\} \text{ for all }i\in M,  \mu(j)\in M\cup\{j\} \text{ for all }j\in W, \text{ and } \mu(\mu(k))=k\text{ for all }k\in N\}$.\footnote{The symbol $\sqcup$ refers to the disjoint union.}
    \end{enumerate}
\end{example}

The classical social choice problem can be represented as a constraint by requiring that all agents be assigned the same object. By contrast, house allocation requires that no two agents be assigned the same object. Of course, for the set of feasible allocations to be nonempty, one must have that $\vert \obj \vert \geq n$.  In school choice, each school has a capacity and the only constraint is that the allocation not exceed the capacity of any school. Here, the feasible set is nonempty if and only if $\sum_a q_a \geq n$. Note that house allocation is the special case of school choice where each capacity is $1$. In one-sided matching, also known as the roommates problem, the objects are potential partners. The constraint is that if an agent $i$ is matched with an agent $j$, then $j$ must also be matched with $i$. Agents can also be unmatched, meaning $\mu(i)=i$. These two possibilities are captured by the requirement that $\mu(\mu(i))=i$. In two-sided matching, the set of agents is partitioned into two groups $M$ and $W$ and there is the additional requirement that if $i$ and $j\neq i$ are matched, they are matched across the two sides.

\section{Local Priority Mechanisms}\label{sec:constraint_traversing_section}

We now introduce the novel class of mechanisms that is the main contribution of the paper. These mechanisms are implemented by ``greedy'' algorithms that search through the set of allocations in steps, approaching the feasible set in sequence. At each step, a new allocation is considered and if the algorithm has not terminated, it is because the current allocation is infeasible. Some set of agents must be assigned different objects to achieve feasibility. The algorithm begins with the allocation where each agent receives her favorite object. If this is feasible, then it ends. If this is not feasible, a set of agents must compromise. Those agents are then assigned their next-favorite objects. The algorithm checks whether the new allocation is now feasible. If not, then another set of agents are required to compromise. 

The key parameter of the algorithm is the set of compromisers assigned to each infeasible allocation. This assignment is fixed and is independent of preferences.

\begin{definition}
    A \textbf{local compromiser assignment} is a map $\alpha:\alloc \rightarrow 2^{N}$ such that for every infeasible $x\in \bar{C}, \alpha(x)$ is nonempty and for every feasible $y \in C$, $\alpha(y)=\emptyset$. 
\end{definition}
An agent $i\in \alpha(x)$ is referred to as a \textbf{local compromiser} at $x$. This assignment parameterizes the following algorithm, called the \textbf{local priority algorithm} for $\alpha$, which takes a profile of preferences as an input and returns a feasible allocation, or, if unable to do so, returns the symbol $\emptyset$. For a given preference profile $\succ$:
\begin{tcolorbox}
    \fbox{Step 0} Let $x^{0}=\tau_{1}(\succ)$ 
    
    \vspace{0.2cm}
    \fbox{Step $t\geq 1$} If $x^{t-1}$ is feasible, stop and return $x^{t-1}$. Otherwise, if there is any $i\in \alpha(x^{t-1})$, such that $LC_{\succ_{i}}(x^{t-1}_{i})$ is empty, stop and return $\emptyset$. If not, define $x^{t}_{j}=x^{t-1}_{j}$ for all $j \notin \alpha(x^{t-1})$ and for all $k\in \alpha(x^{t-1})$ let $x^{t}_{k}=\argmax_{\succ_{k}}LC_{\succ_{k}}(x^{t-1}_{k})$.
\end{tcolorbox}

If the algorithm terminates in $m$ steps, we call $x^0,\dots x^{m-1}$ the allocations considered by the local priority algorithm under $\alpha$. Given two sequences of allocations $x^0,\dots x^m$ and $y^0,\dots, y^p$ we say that the second is a truncation of the first if there is an $l$ such that $m=l+p$ and $y^0=x^l$, $y^1=x^{l+1}$ and so on until $y^p=x^{l+p}=x^{m}$. There is no guarantee that the local priority algorithm will terminate in an allocation as it might exhaust an agents' possible allocations. When the local priority algorithm yields a well-defined allocation for every preference profile, that is, if the algorithm never terminates in the symbol $\emptyset$, we say that $\alpha$ is \textbf{implementable}. In this case, the algorithm determines a well-defined mechanism which we call the \textbf{local priority mechanism under} $\alpha$ and denote as $LP_{\alpha}$. More generally, a \textbf{local priority mechanism} is a feasible mechanism $f:\pp \rightarrow C$ such that there is some local compromiser assignment $\alpha$ such that for any preference profile $\succ$, $LP_{\alpha}(\succ)=f(\succ)$. In this case, $\alpha$ is said to induce $f$.

This construction can be used both to generate new mechanisms and to analyze existing mechanisms. We now characterize the class of mechanisms which are local priority mechanisms. That is, we describe local priority mechanisms as a subset of all mechanisms. This exercise is analogous to finding which mechanisms for the house allocation problem are equivalent to top trading cycles for some assignment of property rights; while is is straightforward to construct the resulting mechanism from a fixed assignment of property rights, it leaves open whether an abstract social choice function is a top-trading-cycles mechanism. This provides some transparency to what types of rules are excluded by limiting attention to this class of mechanisms. It also provides a convenient test of whether a general mechanism is a local compromiser mechanism. This test will later prove useful for understanding whether several canonical mechanisms are local priority mechanisms.

Three conditions characterize local priority mechanisms. First, all local priority mechanisms are unanimous; if all agents top-rank an allocation which is feasible, it is chosen. The second condition is related to the familiar tops-only property from the social choice literature.\footnote{See for instance \citeasnoun{weymark2008strategy} and \citeasnoun{chatterji2011tops}} Tops-only requires that the outcome of a mechanism depends only on the top choices of the agents. More formally, recall that $P^{\mu}$ is the set of all preference profiles where the vector of favorite objects across agents is the allocation $\mu$. The tops-only property requires that the mechanism be constant on $P^{\mu}$. Local priority mechanisms are generally not tops-only. However, there is a specific invariance across $P^{\mu}$. For any infeasible $\mu$, there must be some agent who never gets her top choice across all of $P^{\mu}$ irrespective of the rest of the preference profile below the top choices. The set of such agents is identified with the local compromisers.

The final condition requires that the mechanism is unresponsive to a compromiser's ranking in a particular way. If the compromiser's favorite choice is moved to the bottom of her ranking, while other preferences are held constant, then the outcome of the mechanism does not change.
This condition is similar in spirit to Maskin monotonicity but is much weaker. Like our condition, Maskin monotonicity requires a particular invariance across profiles. Suppose that $f(\succ)=\mu$ and $\succ'$ is a profile such that for any agent $i$ and any object $b$ if $\mu(i)\succ_i b$ then $\mu(i)\succ_i' b$. Maskin monotonicity requires that $f(\succ')=\mu$. The well-known intuition is that moving an option up in a preference ranking cannot hurt its status as the social choice. Compromiser invariance is clearly implied by Maskin monotonicity. In Appendix \ref{appendix: preliminary} we show that Maskin monotonicity is equivalent to group strategy-proofness in our setting. In Section \ref{sec: DA} we show that the deferred acceptance algorithm is a local priority mechanism, though it is well-known to violate group strategy-proofness. Thus compromiser invariance is strictly weaker than Maskin monotonicity.

These conditions characterize the class of local priority mechanisms as shown in the following proposition.

\begin{proposition} \label{nec suff conditions}
    Fix a constraint $C$. A feasible mechanism $f$ is a local priority mechanism if and only if it satisfies:
    \begin{enumerate}
        \item (Unanimity) If $\tau_1(\succ)$ is feasible, then $f(\succ) = \tau_1(\succ)$.
    
        \item (Fixed compromiser) For all infeasible allocations $\mu$, 
        \[ \bigcap_{\succ \in P^{\mu}} \{ i : f_i (\succ) \neq \mu_i \} \neq \emptyset.\]
        
        \item (Compromiser invariance) For all $\mu$ and $\succ \, \in P^\mu$, if $M$ is the set of agents such that $f(\succ)(i) \neq \mu(i)$ for all $\succ \, \in P^\mu$, then for the preference profile $\succ'$ in which all agents $i\in M$ bottom-rank $\mu(i)$ and rank every other object the same as in $\succ_i$, we have $f(\succ') = f(\succ)$.
    \end{enumerate}
\end{proposition}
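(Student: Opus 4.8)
The plan is to prove both implications, and the workhorse throughout is the \emph{monotone descent} property of the algorithm: at every step each compromiser is reassigned the best object strictly below her current one, so an agent's object only moves down her ranking and, once she leaves an object, she never returns to it. Consequently, along the run on $\succ$ each agent walks down her list $\tau_1(\succ_i),\tau_2(\succ_i),\dots$ consecutively and stops at $f_i(\succ)$, so the run consults her preference only on the initial segment from her top down to $f_i(\succ)$. This single observation drives the easy parts of necessity and the matching step of sufficiency.

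For necessity, suppose $f=LP_\alpha$. Unanimity is immediate: if $\tau_1(\succ)$ is feasible the algorithm halts at Step $1$ and returns $x^0=\tau_1(\succ)$. For the fixed compromiser property, fix infeasible $\mu$ and any $\succ\in P^\mu$; since $x^0=\mu$ is infeasible, every $i\in\alpha(\mu)$ is reassigned at Step $1$ and by monotone descent never regains $\mu_i=\tau_1(\succ_i)$, so $\alpha(\mu)\subseteq\{i:f_i(\succ)\neq\mu_i\}$. As this holds for all $\succ\in P^\mu$, the intersection contains the nonempty set $\alpha(\mu)$. Compromiser invariance is the heart of the matter. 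I would compare the run on $\succ$ with the run on $\succ'$: the objects demoted to the bottom in forming $\succ'$ are exactly top choices that the relevant agents abandon along the run on $\succ$ and hence never occupy again, so once they are placed below the agents' final objects they are never queried, and every next-choice computation — and therefore the encountered compromiser set $\alpha$, which depends only on the current allocation — is unchanged. The clean case is when the demoted agents are the immediate compromisers $\alpha(\mu)$: starting them one step lower merely skips Step $1$, and the two runs coincide thereafter.

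The main obstacle is the rest of compromiser invariance: a demoted agent may \emph{hold} her eventually-abandoned top for several initial steps of the run on $\succ$ before compromising, whereas in the run on $\succ'$ she starts already below it, so the two runs genuinely differ on the initial segment. I must show that deleting these holding phases does not alter the terminal feasible allocation. The plan is to track both runs through the shared dynamics describing how far each agent has descended, started from two offset points, and to argue the head start of the demoted agents is absorbed so that both runs reach a common feasible fixed point. I expect this to be the delicate step, and the place where one must use that the demoted agents are \emph{robustly} forced to compromise together with implementability (the run never exhausts), rather than the weaker fact that they happen to compromise in the single profile $\succ$.

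For sufficiency, assume $f$ satisfies the three properties and set $\alpha(\mu)=\bigcap_{\succ\in P^\mu}\{i:f_i(\succ)\neq\mu_i\}$ for infeasible $\mu$ and $\alpha(y)=\emptyset$ for feasible $y$, a legitimate assignment by the fixed compromiser property. To prove $LP_\alpha(\succ)=f(\succ)$ I would run the algorithm on $\succ$ and carry a simultaneous induction on $t$ maintaining: (a) a profile $\succ^{(t)}$ with $\tau_1(\succ^{(t)})=x^t$, obtained from $\succ^{(t-1)}$ by applying compromiser invariance at $x^{t-1}$ (demoting the tops $x^{t-1}_i$ for $i\in\alpha(x^{t-1})$); (b) $f(\succ^{(t)})=f(\succ)$, since $\succ^{(t-1)}\in P^{x^{t-1}}$ and the demoted set is exactly $\alpha(x^{t-1})$; and (c) the descent bound $x^t_i\succsim_{\succ_i} f_i(\succ)$ for all $i$. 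Fact (c) together with $f_i(\succ^{(t)})=f_i(\succ)\neq x^t_i$ for compromisers shows each compromiser lies strictly above her final object, so the algorithm never exhausts (yielding implementability) and the step is well defined; moreover the objects demoted in building $\succ^{(t)}$ sit at the bottom, so the new top of a compromiser under $\succ^{(t+1)}$ equals $\argmax_{\succ_i}LC_{\succ_i}(x^t_i)=x^{t+1}_i$, preserving (a). Since every step strictly lowers some agent while (c) bounds the descent from below, the run terminates at the first feasible $x^T$; applying unanimity to $\succ^{(T)}$, whose feasible top vector is $x^T$, gives $f(\succ^{(T)})=x^T$, and with (b) this yields $f(\succ)=x^T=LP_\alpha(\succ)$.
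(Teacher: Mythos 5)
Your sufficiency argument is correct, and it is essentially the paper's own proof: define $\alpha$ as the fixed\hyphenation{compromiser}-compromiser map, walk the profile sequence obtained by repeatedly demoting the compromisers' tops, apply compromiser invariance at each step and unanimity at the end, and identify the successive top-vectors with the algorithm's allocations. You are in fact more careful than the paper on one point it leaves implicit: your invariant (c), $x^t_i \succsim_i f_i(\succ)$ together with $f_i(\succ^{(t)})=f_i(\succ)\neq x^t_i$ for compromisers, is what rules out exhaustion and forces termination at a feasible allocation, i.e.\ it delivers implementability of the constructed $\alpha$. The necessity of unanimity and of the fixed-compromiser condition is also fine, and your ``clean case'' of compromiser invariance (demoted set equal to $\alpha(\mu)$, so the run on $\succ'$ is the truncation $x^1,\dots,x^m$) is, word for word, the paper's \emph{entire} necessity argument for condition 3.

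The step you deferred, however, is a genuine gap, and you should know that it cannot be closed: when the fixed-compromiser set $M$ strictly contains $\alpha(\mu)$, the claimed invariance is simply false, even using robustness and implementability as you proposed. Concretely, take two agents, objects $\{a,b,c\}$, constraint $C=\{x: x_2\neq a\}$, and $\alpha((a,a))=\{1\}$, $\alpha((b,a))=\{1,2\}$, $\alpha((c,a))=\{2\}$. This $\alpha$ is implementable (every compromise occurs from the top or second position of a three-object list, and each run ends after at most two rounds). At the profile $\succ$ where both agents rank $a\succ b\succ c$, the run is $(a,a)\to(b,a)\to(c,b)$, so $LP_\alpha(\succ)=(c,b)$. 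The fixed-compromiser set at $\mu=(a,a)$ is $M=\{1,2\}$: agent 1 is in $\alpha((a,a))$ and descends monotonically, and agent 2 can never receive $a$ because every allocation assigning her $a$ is infeasible. Yet demoting $a$ for both agents gives the profile $\succ'$ with top-vector $(b,b)$, which is feasible, so $LP_\alpha(\succ')=(b,b)\neq(c,b)$: condition 3, read with $M$ the full intersection $\bigcap_{\succ\in P^{\mu}}\{i: f_i(\succ)\neq\mu_i\}$, is violated by a bona fide local priority mechanism. What is true, and what your truncation argument proves, is invariance with respect to $\alpha(\mu)$ itself; the paper's necessity proof silently makes exactly this substitution when it asserts that the run on $\succ'$ is $x^1,\dots,x^m$, which presupposes that the demoted agents are precisely those who move between $x^0$ and $x^1$. (The paper's own appendix example of non-unique assignments already shows $M$ can strictly contain $\alpha(\mu)$.) So the honest resolution is not a cleverer ``absorption'' argument but a restatement of condition 3: require invariance for \emph{some} nonempty set of fixed compromisers at each infeasible $\mu$ (for $LP_\alpha$ the set $\alpha(\mu)$ works); your sufficiency proof then goes through verbatim with $\alpha$ taken to be that selection rather than the full intersection.
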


As mentioned, compromiser invariance is implied by Maskin monotonicity. However, Maskin monotonicity is equivalent to group strategy-proofness, yielding the following result.

\begin{corollary}\label{cor: gsp}
    A mechanism $f$ is a group strategy-proof local priority mechanism if and only if it satisfies Unanimity, Fixed compromiser, and Maskin monotonicity.
\end{corollary}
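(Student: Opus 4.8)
The plan is to derive the corollary by combining Proposition \ref{nec suff conditions} with two auxiliary facts: that Maskin monotonicity implies compromiser invariance, and that in this setting Maskin monotonicity is equivalent to group strategy-proofness. Given these, the corollary is a short logical rearrangement, so essentially all of the substantive work lies in the two facts rather than in the corollary itself.

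First I would record the implication ``Maskin monotonicity $\Rightarrow$ compromiser invariance,'' which the text flags as immediate. Fix an infeasible $\mu$, a profile $\succ \, \in P^\mu$, and let $M$ be the set of fixed compromisers, so $f(\succ)(i) \neq \mu(i)$ for every $i \in M$; write $\nu = f(\succ)$. Forming $\succ'$ by sending each compromiser's top object $\mu(i)$ to the bottom of $\succ_i$ while preserving all other comparisons can only weakly raise $\nu(i)$ in the ranking $\succ'_i$: the only pair whose order flips is $\mu(i)$ versus $\nu(i)$, and since $\mu(i)$ was top-ranked with $\nu(i) \neq \mu(i)$ we have $\mu(i) \succ_i \nu(i)$ but $\nu(i) \succ'_i \mu(i)$. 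Agents outside $M$ are left unchanged. Hence the hypothesis of Maskin monotonicity holds for the pair $(\succ, \succ')$, yielding $f(\succ') = \nu = f(\succ)$, which is precisely the conclusion of compromiser invariance. For the equivalence of Maskin monotonicity and group strategy-proofness I would simply invoke the result established in Appendix \ref{appendix: preliminary}.

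With these in hand the two directions are immediate. For necessity, suppose $f$ is a group strategy-proof local priority mechanism. By Proposition \ref{nec suff conditions} it satisfies Unanimity and Fixed compromiser, and by the equivalence its group strategy-proofness gives Maskin monotonicity. For sufficiency, suppose $f$ satisfies Unanimity, Fixed compromiser, and Maskin monotonicity. Since Maskin monotonicity implies compromiser invariance, $f$ satisfies all three conditions of Proposition \ref{nec suff conditions} and is therefore a local priority mechanism; and by the equivalence, Maskin monotonicity delivers group strategy-proofness, so $f$ is a group strategy-proof local priority mechanism.

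The only genuine obstacle is external to the corollary: it is the appendix equivalence between Maskin monotonicity and group strategy-proofness, which I am taking as given. Within that equivalence, the direction $\text{GSP} \Rightarrow \text{Maskin monotonicity}$ is the less routine half and is what powers the necessity argument. Everything internal to the corollary is a mechanical recombination of Proposition \ref{nec suff conditions} with the two facts above, so I would keep the written proof to a few lines once those facts are cited.
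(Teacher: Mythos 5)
Your proposal is correct and takes essentially the same route as the paper, which derives the corollary by combining Proposition \ref{nec suff conditions} with the two facts you isolate: Maskin monotonicity implies compromiser invariance, and Maskin monotonicity is equivalent to group strategy-proofness (Proposition \ref{GSP equivalences} in the appendix). Your explicit lower-contour-set verification that moving each fixed compromiser's top object $\mu(i)$ to the bottom only enlarges $LC_{\succ_i}(f_i(\succ))$ is simply a spelled-out version of the implication the paper treats as immediate.
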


\subsection{Examples}

We now show that many well-known mechanisms are local priority mechanisms and provide two examples of mechanisms which are not local priority. We find that deferred acceptance for school choice, top trading cycles for house allocation, and serial dictatorship are all special cases of local priority mechanisms. However, the Boston mechanism for school choice and deferred acceptance for the marriage problem are not local priority mechanisms.

We first note that serial dictatorship is a local priority mechanism for any constraint. This serves to show that the set of local priority mechanisms is nonempty for every constraint. Specifically, fix an ordering of the agents, $\sigma:\{1,2,\dots, n\}\rightarrow N$. Let agent $\sigma(1)$ pick their favorite object from $C^{\{\sigma(1)\}}$. Let $\nu_1$ be this suballocation. Now let agent $\sigma(2)$ pick their favorite object from $C(\nu_1)^{\{\sigma(2)\}}$, the set of objects for $\sigma(2)$ compatible with agent $1$'s selection. Let $\nu_2$ be this suballocation for agents $1$ and $2$. Continue in this way until all agents have been allocated an object. This mechanism can be realized as a local priority mechanism for any constraint $C$ by setting $\alpha(\mu)=\{\sigma(r)\}$ where $r$ 
is the smallest integer such that $\mu(\sigma(r))\notin C(\mu\vert_{\sigma(1),\dots,\sigma(r-1)})^{\{\sigma(r)\}}$ for any infeasible $\mu$. That is, we simply set $\alpha(\mu)$ to be the first dictator whose allocation is not compatible with the assignments of earlier dictators. Unanimity and compromiser invariance are immediate.

\subsubsection{Deferred Acceptance}\label{sec: DA}

In this section, we demonstrate that the deferred acceptance mechanism applied to the school choice problem is a local priority mechanism. What makes the school choice problem different than, say, the two-sided marriage problem is that the priorities of the schools are interpreted as parameters of the mechanism, rather than preferences of agents. The only preferences that are important for welfare are those on the students' side. In contrast, the marriage problem involves preferences across both sides of the markets. This distinction will turn out to be important. Here, we demonstrate that deferred acceptance is a local priority mechanism for the school choice setting. Later, we will show that, in contrast, the deferred acceptance mechanism applied to the marriage problem is not a local priority mechanism. 

 The school choice problem consists of a set of students $N$ and a set of schools $\obj$. Each student $i$ has a preference $\succ_i$ over schools and each school $s$ has a capacity $q_s$ reflecting their maximum enrollment. We assume that $\sum q_s \geq \vert N \vert$, since a dummy school with capacity $N$ can always be added as an outside option. Schools also have strict priority rankings $>_s$ over the students.\footnote{The priority of the outside option does not matter; it will never reject a student.} We now introduce the deferred acceptance mechanism \cite{GaSh62}. The mechanism can be implemented using a number of related algorithms. Here we introduce the cumulative deferred acceptance algorithm since this implementation makes the connection with local priority mechanisms most evident.
     
\begin{definition}
    The \textbf{cumulative deferred acceptance algorithm} takes as input a preference profile $\succ$ and returns an allocation $DA(\succ)$. The process is described as follows:

    \begin{tcolorbox}
        \fbox{Step $t\geq 1$} All students apply to their favorite school who has not yet rejected them. Any student $i$ who applied to a school $s$ such that there are $q_s$ students with higher priority at $s$ who also applied this round is rejected. If there are no rejections, return the allocation in which all students are assigned to the school they applied to this round. Otherwise, move to the next round. 
    \end{tcolorbox}
    The resulting mechanism is called the \textbf{deferred acceptance mechanism}.
\end{definition}

To view this as a local priority mechanism, let $CDA^t(\succ)$ be the allocation where every student is assigned the school where they applied in round $t$. If the DA algorithm terminates in $m$ steps, this gives a sequence of allocations $CDA^1(\succ), CDA^2(\succ), \dots , CDA^m(\succ)$ where $CDA^t(\succ)$ is infeasible for $t<m$ and $CDA^m(\succ)$ is the feasible allocation returned by the deferred acceptance algorithm. For any allocation $x$, let $\alpha(x)$ be the set of agents $i$ such that $\vert\{j : x_j=x_i \text{ and }j>_{x_i} i \}\vert \geq q_{x_i}$. By construction, the allocations considered by the local priority mechanism under $\alpha$ is exactly $CDA^1(\succ), CDA^2(\succ), \dots , CDA^m(\succ)$. This proves the following fact.

\begin{proposition}
     Deferred acceptance is a local priority mechanism.
\end{proposition}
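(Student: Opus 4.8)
The plan is to verify that the local compromiser assignment $\alpha$ defined in the text, where $\alpha(x)$ collects exactly the students who are over-subscribed rejects at $x$, in fact reproduces the cumulative deferred acceptance run for every preference profile. The statement to prove is that the allocations considered by the local priority algorithm under this $\alpha$ coincide, step for step, with the sequence $CDA^1(\succ), \dots, CDA^m(\succ)$; once this is established, feasibility of the terminal allocation and implementability of $\alpha$ follow immediately, and hence $DA = LP_\alpha$ is a local priority mechanism. So the whole argument reduces to an induction on the step counter matching the two processes.

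First I would fix $\succ$ and argue the base case: the local priority algorithm starts at $x^0 = \tau_1(\succ)$, every student pointing at her favorite school, which is exactly $CDA^1(\succ)$ since in round $1$ of deferred acceptance every student applies to her most-preferred school. For the inductive step, I would assume $x^{t-1} = CDA^t(\succ)$ and show two things. The first is that $\alpha$ identifies precisely the set of students rejected in round $t$ of cumulative deferred acceptance. By definition $i \in \alpha(x^{t-1})$ iff $|\{ j : x^{t-1}_j = x^{t-1}_i \text{ and } j >_{x^{t-1}_i} i \}| \geq q_{x^{t-1}_i}$, i.e. at least $q_{x^{t-1}_i}$ students with strictly higher priority at $i$'s current school also applied there this round; this is exactly the rejection rule in the definition of cumulative deferred acceptance. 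The second is that the local priority update moves each such $i$ to $\argmax_{\succ_i} LC_{\succ_i}(x^{t-1}_i)$, her next-favorite school among those she has not yet applied to, while leaving non-rejected students fixed at their current schools. I must check this coincides with how deferred acceptance advances: rejected students apply to the next school down their list, and — crucially — non-rejected students reapply to the same school in the cumulative formulation. Matching the ``next choice'' of the algorithm to the ``favorite school who has not yet rejected them'' in deferred acceptance requires noting that, because $\alpha$ keeps a student fixed at a school until she is over-subscribed there, the school a student sits at is always the top school from which she has not yet been rejected, so $LC_{\succ_i}(x^{t-1}_i)$ is exactly her remaining options. This gives $x^t = CDA^{t+1}(\succ)$, closing the induction.

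With the step-by-step correspondence in hand, I would conclude by observing that cumulative deferred acceptance is known to terminate in finitely many rounds at a feasible allocation (no student exhausts her list, since the dummy school with capacity $|N|$ never rejects), so the local priority algorithm under $\alpha$ never returns $\emptyset$; therefore $\alpha$ is implementable and $LP_\alpha(\succ) = CDA^m(\succ) = DA(\succ)$ for every $\succ$. Since this holds for all profiles, $DA$ is the local priority mechanism induced by $\alpha$.

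The main obstacle I expect is the bookkeeping around the cumulative formulation, specifically confirming that the school at which the local priority process leaves student $i$ at each step is genuinely her top choice among schools that have not yet rejected her, and that $\alpha$'s rejection set exactly equals deferred acceptance's. The subtlety is that a student can be rejected from a school, move on, and later the set of applicants at her old school could change; I need the invariant that once a student is rejected she never returns, which holds because the local priority update sends her strictly down her preference via $LC_{\succ_i}$ and the over-subscription at any school is monotone across rounds in the cumulative algorithm. The paper's choice to use the \emph{cumulative} version of deferred acceptance is precisely what makes this invariant transparent — in the non-cumulative version students not currently rejected would be tentatively held rather than re-applying, breaking the clean identification of $x^t_i$ with the school a student currently points to — so leaning on the cumulative implementation is the key to keeping this step routine rather than delicate.
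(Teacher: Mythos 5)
Your proposal is correct and takes essentially the same route as the paper: the paper defines the same local compromiser assignment $\alpha$ (students over-subscribed at their current school) and asserts that ``by construction'' the allocations considered by $LP_\alpha$ coincide with the cumulative deferred acceptance sequence $CDA^1(\succ),\dots,CDA^m(\succ)$. Your induction on the step counter, together with the rejection-permanence invariant and the implementability observation via the dummy school, simply makes explicit the details the paper leaves implicit in that one-line claim.
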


We demonstrate this with the following simple example.

\begin{example}
There are three students, $1, 2, 3$, and three schools, $a, b, c$ each with capacity $1$. Let the preferences and priorities be as follows:
\begin{center}
\begin{tabular}{ccccccc}
$\succ_1$ & $\succ_2$ & $\succ_3$ & $\hphantom{\ldots}$ & $>_{a}$ & $>_b$ & $>_c$ \\
\cmidrule(lr){1-1}
\cmidrule(lr){2-2}
\cmidrule(lr){3-3}
\cmidrule(lr){5-5}
\cmidrule(lr){6-6}
\cmidrule(lr){7-7}
$a$ & $a$ & $b$ &  & $3$ & $1$ & $1$\\
$\textcolor{blue}{b}$ & $b$ & $\textcolor{blue}{a}$ &  & $1$ & $2$ & $2$\\
$c$ & $\textcolor{blue}{c}$ & $c$ &  & $2$ & $3$ & $3$\\
\end{tabular}
\end{center}
The deferred acceptance algorithm results in the following steps:
\begin{tcolorbox}
\fbox{Step 1} Students $1$ and $2$ apply to $a$, and 3 applies to $b$. Student $2$ is rejected from $a$. 
\vspace{0.2cm}\\
\fbox{Step 2} Student $1$ applies to $a$, and students $2$ and $3$ apply to $b$. School $b$ rejects $3$.
\vspace{0.2cm}\\
\fbox{Step 3} Students $1$ and $3$ apply to $a$, and students $2$ applies to $b$. School $a$ rejects $1$.
\vspace{0.2cm}\\
\fbox{Step 4} Students $1$ and $2$ apply to $b$, and student $3$ applies to $a$. School $b$ rejects $2$.
\vspace{0.2cm}\\
\fbox{Step 5} Student $1$ applies to $b$, $2$ applies to $c$, and $3$ applies to $a$. There are no rejections. Students are matched to the schools to which they applied this round.
\end{tcolorbox}

Now consider the local priority algorithm induced by the local compromiser assignment shown in the figure below. Each of the $27$ possible allocations are listed. The infeasible allocations are shaded grey. Agent $3$'s allocation is determined by the three panels, so that the top-right allocation is $(a,c,c)$ which is infeasible since $2$ and $3$ can't both be assigned $c$. The allocation on the bottom-left is $(c,a,a)$. The local compromisers are listed in the infeasible squares. For example, in the infeasible square $(a,a,a)$ both agents $1$ and $2$ are local compromisers. 
\begin{figure}[H]\label{fig: DA example}
\centering
    \label{Deferred Acceptance Local Priority Mech}
    \includegraphics[width=0.9\linewidth]{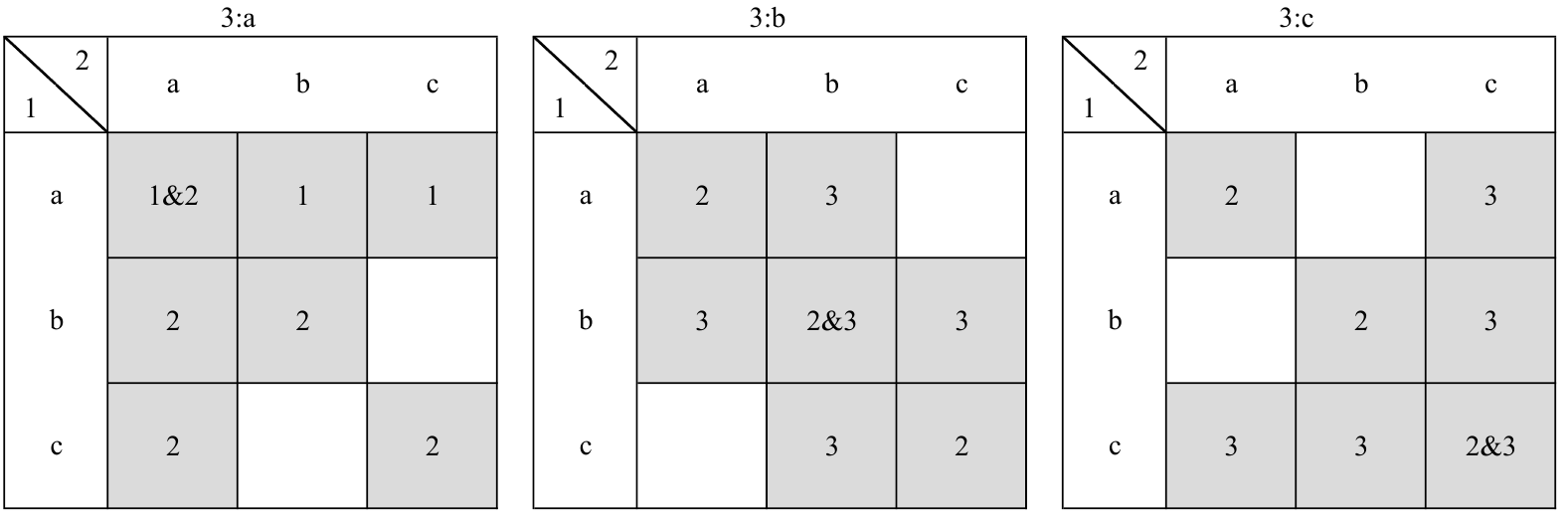}
    \caption{The local priority mechanism for the school priorities above.}
\end{figure}

\noindent The local priority algorithm considers the following allocations in sequence
\[
\begin{pmatrix} a &a &b \end{pmatrix} \to \begin{pmatrix} a &b &b \end{pmatrix} \to\begin{pmatrix} a &b &a \end{pmatrix} \to\begin{pmatrix} b &b &a \end{pmatrix} \to\begin{pmatrix} b &c &a \end{pmatrix} 
\]
\noindent resulting in the same allocation.
\end{example}

\subsubsection{Top Trading Cycles}

The top trading cycles mechanism introduced in \citeasnoun{ShSc74} is a prominent mechanism for the house allocation problem where $\vert \obj \vert = \vert N \vert$. We describe the mechanism below and show that it can be recast as a local priority mechanism. Since \citeasnoun{ShSc74}, many generalizations have been introduced. Two notable examples are the hierarchical exchange mechanisms of \citeasnoun{Papai00} and the broker and braiding mechanisms of \citeasnoun{PyUn17} and \citeasnoun{Bade16}. These are also local priority mechanisms.\footnote{ These mechanisms have already been shown to be efficient group strategy-proof \cite{PyUn17}. Since group strategy-proofness is equivalent to Maskin monotonicity, we immediately get compromiser invariance. Unanimity is implied by efficiency. It remains to show that the fixed compromiser condition holds. This, however, is immediate from the definition of the associated algorithms. In the appendix for this section, we give a more elementary proof of the fact that top-trading cycles is a local priority mechanism. That is, a proof that makes no reference to the nontrivial proof that top trading cycles is group strategy-proof.}

The top-trading cycles mechanism is parameterized by an initial ownership structure described by a bijection $\mu^0:N\rightarrow \obj$

\begin{definition} Given an initial ownership structure $\mu^0$, the \textbf{top-trading cycles algorithm} takes as input a preference profile $\succ$ and returns a feasible allocation $TTC(\succ)$ as follows:

\begin{tcolorbox}
    \fbox{Step $t\geq 1$} Each remaining agent points to the agent endowed with their favorite remaining object. For any cycle, defined as a collection of agents $i_1, \cdots i_m$ such that $i_1$ points to $i_2$, $i_2$ points to $i_3$ and so on while $i_m$ points to $i_1$, these agents trade their endowments along the cycle. These agents leave the algorithm with their new object. If all agents are assigned an object, stop. Otherwise, proceed to the next step.
\end{tcolorbox}
\noindent The resulting mechanism is called the \textbf{top trading cycles mechanism}.\footnote{It is simple to verify that there will be at least one cycle at every step.}
\end{definition}

\begin{proposition}\label{proposition: TTC}
For any initial ownership structure, top trading cycles is a local priority mechanism.
\end{proposition}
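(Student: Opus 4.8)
The plan is to exhibit a local compromiser assignment $\alpha$ that induces the top trading cycles mechanism for a fixed initial ownership structure $\mu^0$, mirroring the strategy used for deferred acceptance. The cleanest route is to define $\alpha$ directly from the combinatorial structure of the TTC algorithm so that the allocations considered by the local priority algorithm under $\alpha$ coincide, step for step, with the allocations arising from the TTC rounds. Rather than invoke the footnote's high-powered argument (that TTC is group strategy-proof, hence Maskin monotone, hence compromiser invariant), I would give the elementary verification suggested at the end of the footnote, since that makes the proposition self-contained.

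First I would set up the correspondence between a TTC run and a sequence of infeasible allocations. At each step of TTC each remaining agent points at the owner of her favorite \emph{remaining} object; equivalently, in terms of the full allocation $\obj^n$, think of the ``tentative'' allocation $x^t$ in which every agent is assigned her favorite object among those still available, where an object becomes unavailable once it has left in a completed cycle. Such an $x^t$ is feasible (a bijection) exactly when the process has terminated; otherwise some object is demanded by two agents. The key modelling step is to encode ``which objects have already been cleared by cycles'' into the coordinates of $x^t$, so that the abstract map $\alpha:\alloc\to 2^N$ can read off the correct compromisers purely from the current allocation, with no memory of the path. This is the crux: $\alpha$ must be a function of the current allocation alone, whereas the natural description of TTC is history-dependent. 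I expect this to be the main obstacle, and I would resolve it by arguing that along any TTC trajectory the current allocation $x^t$ in fact records enough information --- namely, an agent has received a not-yet-top object precisely when her earlier top choices were claimed within cycles she was excluded from --- so that the set of agents who must move down is determined by $x^t$.

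Concretely, I would define $\alpha(x)$, for infeasible $x$, to be the set of agents who are \emph{not} part of any cycle in the pointing graph induced by $x$ under the endowment $\mu^0$: every agent points to the owner of $x$'s currently demanded object, cycles are satisfied and must be frozen, and exactly the off-cycle agents are the ones forced to compromise to their next available choice. I would then verify by induction on $t$ that the local priority algorithm under this $\alpha$ reproduces the TTC sequence: agents in completed cycles have their coordinates fixed at the value they never subsequently abandon, while off-cycle agents advance down their preferences exactly as TTC has them re-point after the cleared objects are removed. The base case is Step~0, where $x^0=\tau_1(\succ)$ matches the first round of pointing; the inductive step matches one round of cycle-clearing to one application of $\alpha$.

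Finally I would check the three characterizing conditions of Proposition~\ref{nec suff conditions} are consistent with this, or simply note that reproducing the TTC allocations for every profile is exactly what is required to conclude $LP_\alpha = TTC$. Unanimity holds because if $\tau_1(\succ)$ is a feasible bijection then every agent is in a trivial cycle and $\alpha(\tau_1(\succ))=\emptyset$, so the algorithm halts immediately at $\tau_1(\succ)$. Implementability (the algorithm never returns $\emptyset$) follows from the standard fact that TTC always terminates with a complete matching, so no compromiser ever exhausts her list. The only delicate point remaining is confirming that $\alpha$ is well defined as a map on \emph{all} of $\bar C$, including infeasible allocations that never arise along any TTC path; but since $\alpha$ is defined by the purely local pointing-graph construction, it assigns a nonempty compromiser set to every infeasible $x$ (there is always at least one agent off some cycle when $x$ is not a bijection), and its behavior off the TTC trajectories is irrelevant to the induced mechanism. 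I would close by remarking that this gives the promised elementary proof, independent of the group strategy-proofness of TTC.
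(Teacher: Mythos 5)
Your construction of $\alpha$ contains a genuine error: you set $\alpha(x)$ to be \emph{all} agents not in a cycle of the pointing graph, but in TTC an off-cycle agent whose demanded object has not yet departed does not re-point --- she keeps demanding the same object until it actually leaves in a completed cycle. Only the off-cycle agents pointing \emph{into} a cycle lose their demanded object in the current round, so only they should compromise; this is exactly the assignment the paper uses ($\alpha(x)$ is the set of agents not in a cycle but pointing to an agent in a cycle). Your inductive claim that ``off-cycle agents advance down their preferences exactly as TTC has them re-point'' fails for off-cycle agents who point at other off-cycle agents. Concretely, take three agents endowed with $a,b,c$ respectively and preferences $a \succ_1 b \succ_1 c$, $a \succ_2 c \succ_2 b$, $b \succ_3 a \succ_3 c$. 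TTC sends agent $1$ home with $a$ in round 1, then agents $2$ and $3$ trade, yielding $(a,c,b)$. Under your $\alpha$: at $x^0=(a,a,b)$ the only cycle is $\{1\}$, so both $2$ and $3$ compromise, giving $x^1=(a,c,a)$; again the only cycle is $\{1\}$, so both compromise again, and the algorithm terminates at $(a,b,c)$. This differs from the TTC outcome and is in fact Pareto dominated by it, precisely because forcing agent $3$ to abandon $b$ prematurely destroys the round-2 trade between $2$ and $3$. Relatedly, you describe compromisers as moving to their ``next available choice,'' but the local priority algorithm moves a compromiser to the next object in her \emph{full} preference order; with the correct $\alpha$ this is harmless (an agent whose next object already departed simply remains a compromiser and keeps stepping down), but it is another place where your account of the dynamics and the actual algorithm diverge.

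Note also that the paper's own proof does not proceed by trajectory matching: it invokes the characterization in Proposition \ref{nec suff conditions}, verifying the fixed-compromiser condition via a chain decomposition of the first-round pointing graph, and compromiser invariance via the fact that TTC selects the unique core allocation. Your trajectory-matching plan is salvageable once $\alpha$ is corrected --- and your observations about unanimity, implementability, and memorylessness of the pointing-graph construction carry over --- but the induction then has to handle the fact that one TTC re-pointing may correspond to several steps of the local priority algorithm (as a compromiser steps through objects that have already departed), which is precisely the bookkeeping your current argument assumes away.
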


We prove this proposition in Appendix \ref{appendix: TTC} using Proposition \ref{nec suff conditions}, however the following simple example demonstrates the connection.

\begin{example}
There are three agents $1,2,3$ and preferences are as follows:
\begin{center}
\begin{tabular}{ccc}
$\succ_1$ & $\succ_2$ & $\succ_3$\\
\cmidrule(lr){1-1}
\cmidrule(lr){2-2}
\cmidrule(lr){3-3}
$a$ & $\textcolor{blue}{b}$ & $\textcolor{blue}{a}$ \\
$\textcolor{blue}{c}$ & $a$ & $c$ \\
$b$ & $c$ & $b$ \\
\end{tabular}
\end{center}
Let $\mu^0(1)=b$, $\mu^0(2)=c$ and $\mu^0(3)=a$. The top-trading cycles algorithm proceeds as follows:
\begin{tcolorbox}
\fbox{Step 1} Agent 1 points to 3, 2 points to 1, 3 points at herself. Agent 3 leaves with $a$.
\vspace{0.2cm}\\
\fbox{Step 2} Agent 1 points to 2 and 2 points to 1. Agents 1 and 2 trade endowments.
\end{tcolorbox}

This results in the allocation $(c,b,a)$. Alternatively, consider the local priority mechanism with local compromiser assignments as shown in the figure below.
\begin{figure}[H]
\centering
    \label{TTC Local Priority Mech}
    \includegraphics[width=0.9\linewidth]{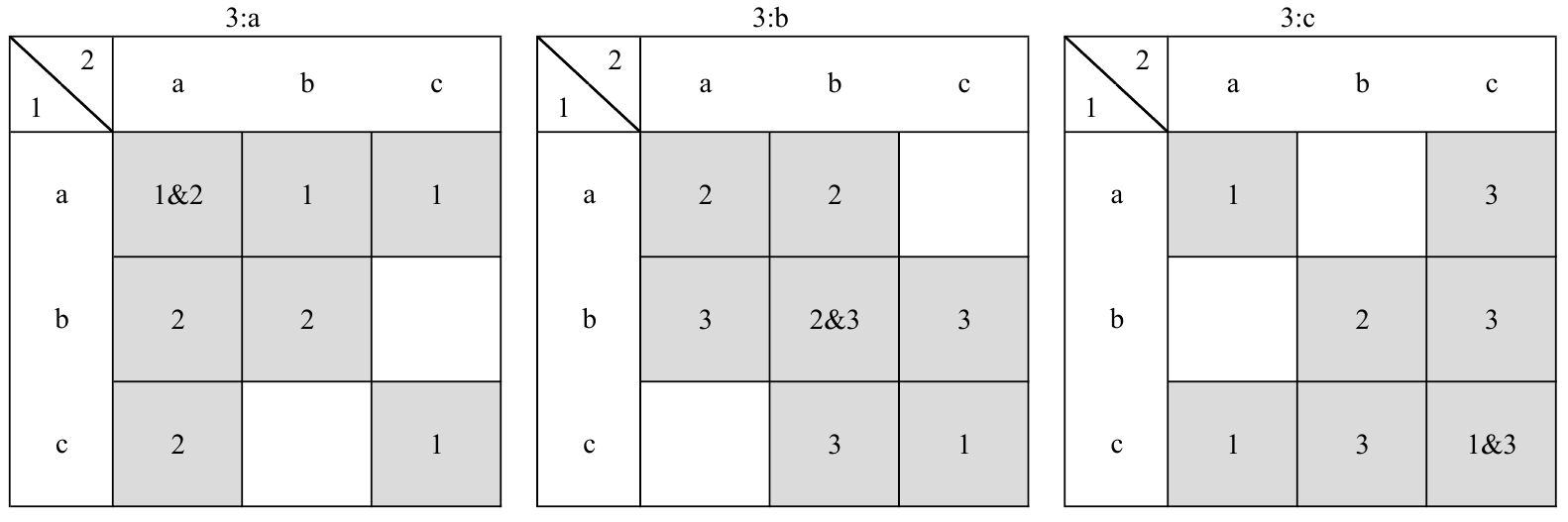}
    \caption{The local priority mechanism for 3 agent top trading cycles.}
\end{figure}

Utilizing this local priority mechanism, we again find the same final allocation of $(c, b, a)$ by moving through the following allocations:
\[
\begin{pmatrix} a &b &a \end{pmatrix} \to \begin{pmatrix} c &b &a \end{pmatrix}.
\]
\end{example}

To see why this particular choice of $\alpha$ works, consider the infeasible allocation $(a,b,a)$. Since $a$ is endowed to agent $3$, $1$ must be the local compromiser at this allocation. For another example, consider the infeasible allocation $(a,a,b)$. Now no one is assigned their endowment. However, since both $1$ and $2$ top-rank $a$ which is owned by $3$ and since $3$ top-ranks $b$, which is owned by $1$, agent $2$ must be the compromiser. More generally, for any $x$, generate the directed graph with one node for each agent and one edge directed away from each node $i$ to the agent $j$ such that $x_i=\mu^0(j)$. Then $\alpha(x)$ is the set of agents, not in a cycle, but who are pointing to an agent in a cycle. 

\subsection{Non-Examples}

While several important mechanisms are local priority mechanisms, this class is far from including all mechanisms. In this section we demonstrate that there are important mechanisms that are not local priority mechanisms. Some well-known examples will be shown here---these serve to demonstrate how the fixed compromiser and compromiser invariance conditions can be violated.

\subsubsection{Immediate Acceptance}

The first example of a mechanism that is not a local priority mechanism is the the immediate acceptance mechanism, also known as the Boston mechanism for school choice. It is well-known to have poor incentive properties which ultimately led to its replacement in many school districts \cite{abdulkadiroglu2006changing}. Nevertheless, some have argued that immediate acceptance can have appealing efficiency properties \cite{abdulkadirouglu2011resolving}. As in the case of deferred acceptance, each school $s$ has a priority ordering $>_s$ and a capacity $q_s$ which parameterize the mechanism. As before, we assume that $\sum_s q_s \geq n$.

\begin{definition}
The \textbf{immediate acceptance algorithm} takes as input a preference profile and returns a feasible allocation $IA(\succ)$ as follows:
\begin{tcolorbox}
    \fbox{Step $t\geq 1$} All unassigned students apply to their $t$-th ranked school. Any school $s$ with no remaining seats rejects all applicants. For any school with vacant seats, the highest priority students are admitted up to the capacity. If all students are assigned a seat, stop. Otherwise proceed to the next round.
\end{tcolorbox}
The resulting mechanism is called the \textbf{immediate acceptance mechanism}.
\end{definition}
Immediate acceptance is not a local priority mechanism, as it does not fulfill compromiser invariance. For instance, let there be $3$ schools $a, b, c$ and $q_a+q_b+1$ agents. Fix an agent $i$ and a profile $\succ$ where (1) $i$'s preference over the schools is $a\succ_i b\succ_i c$, (2) $q_a$ other students top-rank $a$ and (3) $q_b$ students top-rank $b$. Suppose that $a$ prioritizes $i$ last and $b$ prioritizes $i$ first. Then agent $i$ is a fixed compromiser at the profile $\succ$, but if they were to announce $\succ_i'$ with $b\succ_i' c \succ_i' a$, they would be accepted by school $b$ in step $1$, while under $\succ_i$ they would have been rejected since school $b$ is filled in the first step. So, $f(\succ) \neq f(\succ_i', \succ_{-i})$.

\subsubsection{Deferred Acceptance in Marriage Markets}
The deferred acceptance mechanism is now widely used in school choice settings. However, it was originally formulated for one-to-one matching in two-sided markets where each side has preferences over the other \cite{GaSh62}. Unlike school choice, where schools' preferences are a parameter of the mechanism, here the full vector of preferences is the input of the mechanism. For this application, the deferred acceptance algorithm is not a local priority mechanism. 

Consider the following example with $3$ agents on either side and where the $m$ agents are the proposers in the mechanism. Suppose we have the following vector of top-choices. 
\begin{center}
\begin{tabular}{ccccccc}
$\succ_{m_1}$ & $\succ_{m_2}$ & $\succ_{m_3}$ & $\hphantom{\ldots}$ & $\succ_{w_1}$ & $\succ_{w_2}$ & $\succ_{w_3}$ \\
\cmidrule(lr){1-1}
\cmidrule(lr){2-2}
\cmidrule(lr){3-3}
\cmidrule(lr){5-5}
\cmidrule(lr){6-6}
\cmidrule(lr){7-7}
$w_1$ & $w_2$ & $w_2$ &  & $m_3$ & $m_1$ & $m_3$\\
$\cdot$ & $\cdot$ & $\cdot$ &  & $\cdot$ & $\cdot$ & $\cdot$\\
$\cdot$ & $\cdot$ & $\cdot$ &  & $\cdot$ & $\cdot$ & $\cdot$\\
\end{tabular}
\end{center}

\noindent Consider the following three preference profiles. The outcome of the deferred acceptance algorithm is shown in blue for each. Only the rankings required to calculate the outcome are specified.

\begin{center}
\begin{tabular}{ccccccc}
$\succ_{m_1}$ & $\succ_{m_2}$ & $\succ_{m_3}$ & $\hphantom{\ldots}$ & $\succ_{w_1}$ & $\succ_{w_2}$ & $\succ_{w_3}$ \\
\cmidrule(lr){1-1}
\cmidrule(lr){2-2}
\cmidrule(lr){3-3}
\cmidrule(lr){5-5}
\cmidrule(lr){6-6}
\cmidrule(lr){7-7}
$\textcolor{blue}{w_1}$ & $\textcolor{blue}{w_2}$ & $w_2$ &  & $m_3$ & $m_1$ & $\textcolor{blue}{m_3}$\\
$\cdot$ & $\cdot$ & $\textcolor{blue}{w_3}$ &  & $\textcolor{blue}{m_1}$ & $\textcolor{blue}{m_2}$ & $\cdot$\\
$\cdot$ & $\cdot$ & $w_1$ &  & $m_2$ & $m_3$ & $\cdot$\\
\end{tabular}
\end{center}

\begin{center}
\begin{tabular}{ccccccc}
$\succ_{m_1}$ & $\succ_{m_2}$ & $\succ_{m_3}$ & $\hphantom{\ldots}$ & $\succ_{w_1}$ & $\succ_{w_2}$ & $\succ_{w_3}$ \\
\cmidrule(lr){1-1}
\cmidrule(lr){2-2}
\cmidrule(lr){3-3}
\cmidrule(lr){5-5}
\cmidrule(lr){6-6}
\cmidrule(lr){7-7}
$\textcolor{blue}{w_1}$ & $w_2$ & $\textcolor{blue}{w_2}$ &  & $m_3$ & $m_1$ & $m_3$\\
$\cdot$ & $\textcolor{blue}{w_3}$ & $\cdot$ &  & $\textcolor{blue}{m_1}$ & $\textcolor{blue}{m_3}$ & $\textcolor{blue}{m_2}$\\
$\cdot$ & $w_1$ & $\cdot$ &  & $m_2$ & $m_2$ & $m_1$\\
\end{tabular}
\end{center}

\begin{center}
\begin{tabular}{ccccccc}
$\succ_{m_1}$ & $\succ_{m_2}$ & $\succ_{m_3}$ & $\hphantom{\ldots}$ & $\succ_{w_1}$ & $\succ_{w_2}$ & $\succ_{w_3}$ \\
\cmidrule(lr){1-1}
\cmidrule(lr){2-2}
\cmidrule(lr){3-3}
\cmidrule(lr){5-5}
\cmidrule(lr){6-6}
\cmidrule(lr){7-7}
$w_1$ & $w_2$ & $w_2$ &  & $\textcolor{blue}{m_3}$ & $\textcolor{blue}{m_1}$ & $m_3$\\
$\textcolor{blue}{w_2}$ & $\textcolor{blue}{w_3}$ & $\textcolor{blue}{w_1}$ &  & $m_1$ & $m_2$ & $\textcolor{blue}{m_2}$\\
$w_3$ & $w_1$ & $w_3$ &  & $m_2$ & $m_3$ & $m_1$\\
\end{tabular}
\end{center}

\noindent Since each agent is given their top choice in at least one of these outcomes, deferred acceptance cannot satisfy the fixed compromiser condition.

\section{Incentives and Efficiency}\label{sec: characterization}

In this section, we introduce properties of the local compromiser assignment which guarantee that the resulting mechanisms will have robust incentive properties.  

\begin{definition}
A mechanism $f$ is \textbf{group strategy-proof} if, for every $\succ \, \in \pp$ and every $M\subset N$, there is no $\succ'_{M}$ such that
\begin{enumerate}
    \item $f_{j}(\succ'_{M},\succ_{-M})\succsim_{j}f_{j}(\succ) \text{  for all  }j\in M $;
    \item $f_{k}(\succ'_{M},\succ_{-M})\succ_{k}f_{k}(\succ)$ for at least one $k\in M$.
\end{enumerate}
\end{definition}

A mechanism is individually strategy-proof if the above holds for all coalitions of size $1$. It is worth remarking here that while, in principle, group strategy-proofness rules out manipulations by coalitions of any size, it is necessary and sufficient to rule out misreports by groups of size at most two \cite{Alva17}.

Group strategy-proofness is a strong incentive property. It is satisfied by top-trading cycles, but not by deferred acceptance. It is equivalent to either Maskin monotonicity, or to individual strategy-proofness and non-bossiness. Appendix \ref{appendix: preliminary} defines these conditions formally and proves these equivalences. Group strategy-proofness is also appealing since it gives efficiency properties for free. Recall that an allocation $\mu$ is Pareto efficient if there is no other feasible allocation $\nu$ which all agents weakly prefer and some agents strictly prefer. A mechanism is Pareto efficient if for every profile of preferences, it outputs a Pareto efficient allocation. 

\begin{proposition}\label{local priority PE}
If a local priority mechanism is group strategy-proof, it is Pareto efficient. However, a local priority mechanism can be Pareto efficient, but not group strategy-proof. 
\end{proposition}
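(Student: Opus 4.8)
The plan is to treat the two claims separately. For the first, I would reduce Pareto efficiency to Unanimity together with Maskin monotonicity. By Corollary \ref{cor: gsp}, a group strategy-proof local priority mechanism satisfies both of these, so it suffices to show that any feasible mechanism $f$ enjoying Unanimity and Maskin monotonicity is Pareto efficient. Suppose not: there is a profile $\succ$ with $f(\succ)=\mu$ and a feasible allocation $\nu$ that Pareto dominates $\mu$ at $\succ$, so $\nu_i \succsim_i \mu_i$ for every $i$ with strict preference for at least one, and in particular $\nu \neq \mu$. The device is to build a single profile $\succ^{*}$ at which Unanimity and monotonicity force two incompatible outcomes.

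Concretely, I would define $\succ^{*}_i$ by moving $\nu_i$ to the top of $\succ_i$ while leaving the relative order of all other objects unchanged. Then $\tau_1(\succ^{*})=\nu$, and since $\nu$ is feasible, Unanimity gives $f(\succ^{*})=\nu$. On the other hand, I would verify that $\succ^{*}$ is a monotonic transformation of $\mu$ relative to $\succ$, i.e. that $\mu_i \succ_i b$ implies $\mu_i \succ^{*}_i b$ for every $i$ and $b$. Indeed, if $\mu_i \succ_i b$ then $b \neq \nu_i$ (since $b=\nu_i$ would give $\mu_i \succ_i \nu_i$, contradicting $\nu_i \succsim_i \mu_i$), so the pair $(\mu_i,b)$ does not involve $\nu_i$ and its ranking survives the relocation of $\nu_i$; the case $\mu_i=\nu_i$ is immediate since then $\mu_i$ sits at the top of $\succ^{*}_i$. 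Maskin monotonicity applied to $f(\succ)=\mu$ therefore yields $f(\succ^{*})=\mu$, contradicting $f(\succ^{*})=\nu$. This proves the first claim.

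For the second claim I would exhibit an explicit local priority mechanism that is Pareto efficient yet violates Maskin monotonicity (equivalently, group strategy-proofness). The paper has already shown, via deferred acceptance, that compromiser invariance is strictly weaker than Maskin monotonicity, so there is genuine room between the local priority conditions and group strategy-proofness; the task is to occupy that room with a mechanism that is additionally efficient, which deferred acceptance is not. I would select a constraint and a compromiser assignment $\alpha$ so that efficiency holds by construction---arranging that the greedy descent of the local priority algorithm never terminates at a Pareto-dominated allocation---and then display profiles $\succ$ and $\succ^{*}$, with $\succ^{*}$ a monotonic transformation of $f(\succ)$, at which the outcome strictly changes. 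A natural place to look is not classical house allocation, where efficiency nearly forces group strategy-proofness, but a capacitated school-choice instance or the perturbed house-allocation constraint from the introduction, where $\alpha$ has more freedom.

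The main obstacle is exactly this joint demand: one must certify Pareto efficiency at every profile while engineering a single failure of monotonicity, and efficiency tends to discipline precisely the behavior that monotonicity also governs. I expect the cleanest witness to be a coalitional manipulation driven by bossiness---an agent who, by reporting a different lower ranking, reroutes the algorithm and alters a coalition partner's assignment without changing her own---since such bossiness is compatible with compromiser invariance but not with Maskin monotonicity, and can be localized so that no reachable allocation is wasteful.
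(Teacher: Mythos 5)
Your proof of the first claim is correct, but it takes a different route from the paper's. The paper argues directly from the definition of group strategy-proofness: since every local priority mechanism is onto $C$ (any feasible $\mu$ is the unanimous outcome at profiles in $P^\mu$), a feasible Pareto improvement $\mu$ of $f(\succ)$ equals $f(\succ')$ for some profile $\succ'$, and then the grand coalition $N$ misreporting $\succ'$ at $\succ$ is a profitable group deviation. You instead invoke Corollary \ref{cor: gsp} to get Unanimity and Maskin monotonicity, and derive the contradiction at the single profile $\succ^*$ obtained by lifting each $\nu_i$ to the top; your verification that $\succ^*$ is a monotonic transformation of $\mu$ at $\succ$ (the key point being that $b=\nu_i$ cannot lie below $\mu_i$ when $\nu$ Pareto dominates $\mu$) is sound. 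Both arguments work; the paper's is shorter and needs only the definition of group strategy-proofness plus ontoness, while yours trades that for the Maskin-monotonicity characterization but is equally self-contained given the paper's preliminary results.

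The second claim, however, is left unproven in your proposal. It is an existence statement, and you offer a search strategy rather than a witness: no constraint, no compromiser assignment $\alpha$, no pair of profiles, and no verification of Pareto efficiency at every profile are given. Your guiding intuition is exactly right---the failure should come from bossiness, an agent whose misreport reroutes the algorithm and changes a partner's assignment while leaving her own fixed, which is compatible with compromiser invariance but rules out group strategy-proofness via Proposition \ref{GSP equivalences}---and this is precisely how the paper's example works. The paper exhibits a concrete three-agent, three-object mechanism (Figure \ref{fig: efficient, not gsp}): at the profile where all agents rank $a \succ b \succ c$ the mechanism returns $(b,b,a)$, yet if agent $2$ reports $b \succ_2' c \succ_2' a$, agent $2$ still receives $b$ while agent $1$ now receives $a$; the constraint is constructed so that the mechanism is nonetheless Pareto efficient at every profile. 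The joint verification you flag as ``the main obstacle'' is the actual content of this half of the proposition, and your proposal acknowledges it without resolving it, so as written the second half stands incomplete.
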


It will be useful to now address the fact that local priority mechanisms may not have a unique local compromiser assignment that implements it. That is, it is easy to find examples of two different local compromiser assignments $\alpha$ and $\alpha'$ that induce the same mechanism $LP_\alpha=LP_{\alpha'}$. Such an example is given in Appendix \ref{appendix: non-uniqueness example}.  However, the next proposition shows that if the resulting mechanism is group strategy-proof, $LP_\alpha=LP_{\alpha'}=LP_{\alpha\cup \alpha'}$.\footnote{The notation $\alpha\cup \alpha'$ refers to the function which is the pointwise union of $\alpha$ and $\alpha'$.}

\begin{proposition} \label{local compromiser closure}
Let $f$ be a group strategy-proof local priority mechanism and let $A$ be the set of local compromiser assignments which induce $f$. Then $A$ is closed under (pointwise) unions.
\end{proposition}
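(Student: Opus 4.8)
The plan is to reduce the statement to the following assertion about a pair: if $\alpha,\alpha'$ both induce the group strategy-proof mechanism $f$, then $\alpha\cup\alpha'$ induces $f$ as well. (Closure under arbitrary pointwise unions then follows by the identical argument applied to the whole family, since at each $x$ only the membership of the compromiser set matters.) First I would note that $\alpha\cup\alpha'$ is a legitimate local compromiser assignment: at infeasible $x$ it is nonempty because $\alpha(x)$ already is, and at feasible $y$ it is empty because both $\alpha(y)$ and $\alpha'(y)$ are. The heart of the proof is then to fix a profile $\succ$, write $\mu^\ast=f(\succ)$, run the local priority algorithm under $\alpha\cup\alpha'$, and show it terminates at a feasible allocation that weakly Pareto-dominates $\mu^\ast$. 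Since $f$ is group strategy-proof, Proposition \ref{local priority PE} makes $\mu^\ast$ Pareto efficient, so any feasible allocation weakly dominating it must equal it, forcing $LP_{\alpha\cup\alpha'}(\succ)=\mu^\ast=f(\succ)$.

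The preliminary observation I would record is that for any assignment $\beta$ inducing $f$ and any infeasible $\mu$, one has $\beta(\mu)\subseteq K(\mu):=\bigcap_{\succ'\in P^\mu}\{i:f_i(\succ')\neq\mu_i\}$. Indeed, for $\succ'\in P^\mu$ the $\beta$-path starts at $x^0=\tau_1(\succ')=\mu$, and at the first step every $i\in\beta(\mu)$ is pushed strictly below $\mu_i$; because the algorithm only ever moves agents down their rankings, such an $i$ never regains $\mu_i$, so $f_i(\succ')\neq\mu_i$. This applies to both $\alpha$ and $\alpha'$, hence $(\alpha\cup\alpha')(\mu)\subseteq K(\mu)$.

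The key step is the invariant that every allocation $z^t$ on the union path satisfies $z^t_i\succsim_i\mu^\ast_i$ for all $i$, proved by induction: the base case $z^0=\tau_1(\succ)$ is immediate, and the inductive step reduces to the claim that at an infeasible $z=z^t$ with $z_i\succsim_i\mu^\ast_i$ for all $i$, no compromiser $k\in(\alpha\cup\alpha')(z)$ has $z_k=\mu^\ast_k$ (so that pushing $k$ down one step keeps it weakly above $\mu^\ast_k$). This claim is where the work lies, and I expect it to be the main obstacle. I would argue by contradiction: suppose such a $k$ exists with $z_k=\mu^\ast_k$. Using the invariant $z_j\succsim_j\mu^\ast_j$, define $\succ^z_j$ by moving $z_j$ to the top of $\succ_j$ while leaving the strict lower contour set $LC_{\succ_j}(\mu^\ast_j)$ unchanged (possible since $z_j$ lies weakly above $\mu^\ast_j$). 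Then $\succ^z\in P^{z}$, and $\succ^z$ is a monotonic transformation of $\succ$ at $\mu^\ast$, so Maskin monotonicity (equivalent to group strategy-proofness here) gives $f(\succ^z)=\mu^\ast$, whence $f_k(\succ^z)=\mu^\ast_k=z_k$. But $k\in(\alpha\cup\alpha')(z)\subseteq K(z)$ forces $f_k(\succ^z)\neq z_k$, a contradiction.

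With the invariant established, I would close the argument: since compromisers are always strictly above $\mu^\ast_i$, their strict lower contour sets are nonempty (they contain $\mu^\ast_i$), so the algorithm never returns $\emptyset$; and each agent can be pushed down only finitely often before reaching $\mu^\ast_i$, after which the claim keeps it fixed, so the path is finite and terminates at some feasible $z^{\text{final}}$ with $z^{\text{final}}_i\succsim_i\mu^\ast_i$ for every $i$. Pareto efficiency of $\mu^\ast$ then yields $z^{\text{final}}=\mu^\ast$, completing the proof that $\alpha\cup\alpha'$ induces $f$ and hence that $A$ is closed under pointwise unions.
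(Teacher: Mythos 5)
Your proof is correct, but it takes a genuinely different route from the paper's. The paper also reduces to a pairwise union and also relies on the two ingredients you isolate (the fixed-compromiser containment $\beta(\mu)\subseteq K(\mu)$ and Maskin monotonicity via group strategy-proofness), but it deploys them differently: it constructs a sequence of \emph{profiles} $\succ=\succ^0,\succ^1,\dots,\succ^p$ in which, at each step, every agent in $(\alpha\cup\alpha')(\tau_1(\succ^l))$ pushes her current top to the \emph{bottom} of her list; Maskin monotonicity then gives $f(\succ^l)=f(\succ^{l+1})$, the sequence of tops $\tau_1(\succ^0),\dots,\tau_1(\succ^p)$ is by construction exactly the path of the local priority algorithm under $\alpha\cup\alpha'$ at $\succ$, and unanimity at the feasible endpoint closes the argument---no appeal to Pareto efficiency is needed. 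You instead hold the profile fixed, follow the union path directly, and maintain the welfare invariant $z^t_i\succsim_i f_i(\succ)$, applying Maskin monotonicity to the opposite transformation (lifting the current allocation $z_j$ to the \emph{top}), and you finish by squeezing the feasible endpoint against Pareto efficiency (Proposition~\ref{local priority PE}, whose proof is independent of this proposition, so there is no circularity). Each approach has its advantages: the paper's is leaner and constructively identifies $LP_{\alpha\cup\alpha'}(\succ)$ as $\tau_1(\succ^p)$ using only unanimity; yours makes the implementability of $\alpha\cup\alpha'$ fully explicit (the invariant shows every compromiser's lower contour set contains $f_i(\succ)$, so the algorithm never returns $\emptyset$ and terminates), a point the paper's write-up leaves implicit.
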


So while there are potentially many $\alpha$ which induce the same mechanism, the pointwise union of them is a natural representative. Henceforth, when we refer to \textit{the} local compromiser assignment for a given group strategy-proof local priority mechanism we will mean the pointwise union of all local compromiser assignments which induce $f$.

\subsection{Group Strategy-proof Local Priority Mechanisms}

We now give two reasons a local priority mechanism can fail to be group strategy-proof. By ruling out these possibilities, we can generate group strategy-proof mechanisms. 

To help understand the possible opportunities for misreporting, we present concrete examples. First, consider the following: 

\begin{example}
There are three agents, three objects and a set of infeasible allocations together with a local compromiser assignment as shown in the figure below.
    \begin{figure}[H]
        \centering
        \includegraphics[width=0.9\linewidth]{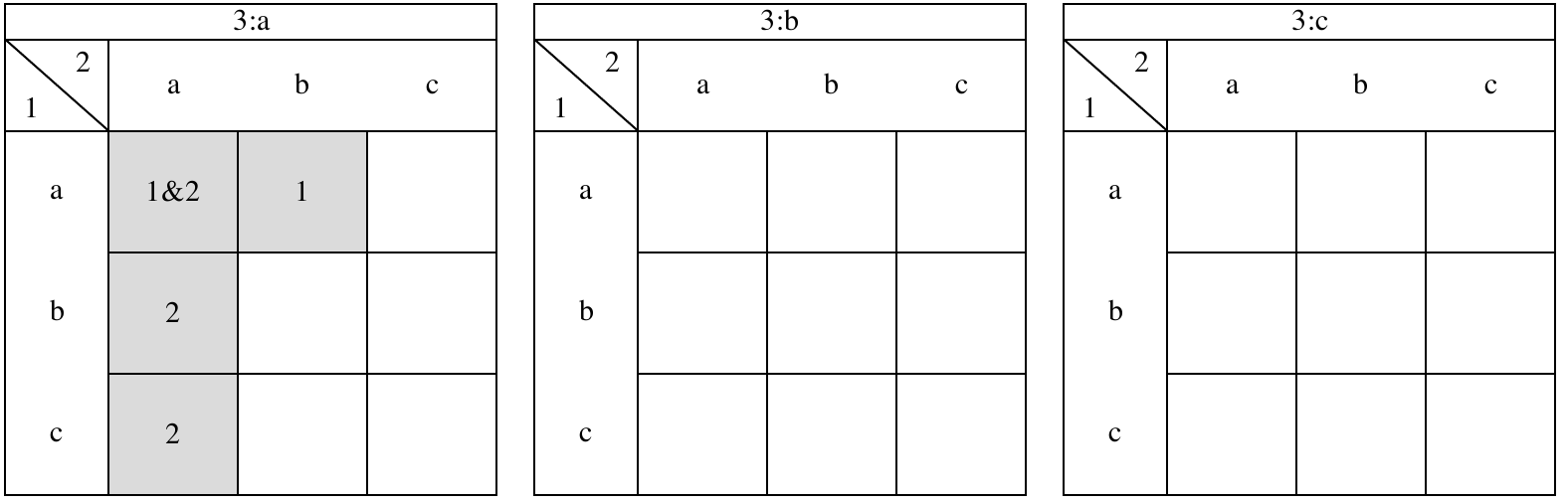}
        \label{fig: GSP violation 1}
    \end{figure}
Now consider the two profiles
   \begin{center}
        \begin{tabular}{ccccccc}
        $\succ_{1}$ & $\succ_{2}$ & $\succ_{3}$ & $\hphantom{\ldots}$ & $\succ_{1}$ & $\succ_{2}'$ & $\succ_{3}$ \\
        \cmidrule(lr){1-1}
        \cmidrule(lr){2-2}
        \cmidrule(lr){3-3}
        \cmidrule(lr){5-5}
        \cmidrule(lr){6-6}
        \cmidrule(lr){7-7}
        $a$ & $a$ & $\textcolor{blue}{a}$ &  & $\textcolor{blue}{a}$ & $\textcolor{blue}{c}$ & $\textcolor{blue}{a}$\\
        $\textcolor{blue}{b}$ & $\textcolor{blue}{c}$ & $b$ &  & $b$ & $b$ & $b$\\
        $c$ & $b$ & $c$ &  & $c$ & $a$ & $c$\\
        \end{tabular}
    \end{center} 
In the first step of the local priority algorithm under the first preference profile, both $1$ and $2$ are required to compromise. Agent $1$ compromises to $b$ and agent $2$ to $c$. The allocation $(b,c,a)$ is feasible so the algorithm returns this allocation. However, agents $1$ and $2$ could form a coalition and jointly misreport. If $1$ reports $\succ_1$ as before and $2$ reports $\succ_2'$ this has no effect on $2$'s allocation but makes $1$ strictly better, violating group strategy-proofness.
\end{example}

The issue in this example is that although both agents $1$ and $2$ are required to compromise at the allocation $(a,a,a)$, $2$ can preempt this by top-ranking $c$, thus reliving $1$ from having to compromise. We can rule this out with the following condition. For any two allocations $x$ and $y$, let $d(x,y)=\{i\,:\,x_i\neq y_i\}$.

\begin{definition}
    The local compromiser assignment $\alpha$ satisfies \textbf{forward consistency} if for any allocations $x$ and $y$ such that $d(x,y)\subset \alpha(x)$, then $\alpha(y)\supset \alpha(x)-d(x,y)$.
\end{definition}

In words, suppose a subset of agents is required to compromise at allocation $x$. If we move to another allocation  $y$ where only a strict subset of these agents have compromised, those that have not yet compromised are still required to. This condition is interesting in its own right. Both top trading cycles and deferred acceptance satisfy forward consistency. Furthermore, both mechanisms can be implemented in many different ways. In deferred acceptance, for instance, one can achieve the same outcome as DA by having a single rejection each round. Similarly, the top trading cycles algorithm can be implemented by an algorithm which executes a single cycle in each round. These are both consequences of the following fact. 

\begin{proposition}\label{prop: forward consistency}
    Fix a constraint $C$. Let $\alpha$ be an implementable local compromiser assignment that satisfies forward consistency. If $\alpha'\subset \alpha$ is a local compromiser assignment, then $LP_{\alpha'}=LP_{\alpha}$.
\end{proposition}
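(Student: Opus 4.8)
The plan is to adopt a ``displacement'' view of the algorithm and to prove a single commutation lemma from which the proposition follows immediately. Since the algorithm only ever moves an agent down her list, every allocation $x$ reached from $\tau_1(\succ)$ is determined by a vector of levels $(\ell_1,\dots,\ell_n)$ with $x_i = \tau_{\ell_i}(\succ_i)$; write $F(x)$ for the allocation returned by running the full-$\alpha$ algorithm started from $x$ (rather than from $\tau_1(\succ)$), with the convention $F(x)=x$ when $x$ is feasible. A preliminary observation makes $F$ well defined and feasible-valued on \emph{every} infeasible $x$: truncate each $\succ_i$ by deleting all objects above $x_i$ to obtain $\succ^x$ with $\tau_1(\succ^x)=x$. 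Because $\alpha$ depends only on the allocation and truncation leaves every set $LC_{\succ_i}(x_i)$ intact, the full-$\alpha$ run from $x$ coincides with the run from $\tau_1(\succ^x)$; implementability of $\alpha$ then guarantees termination in a feasible allocation without ever returning $\emptyset$.

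The heart of the argument is the commutation lemma: for every infeasible $x$ and every nonempty $S\subseteq\alpha(x)$, if $x'$ is obtained from $x$ by moving exactly the agents in $S$ down one step, then $F(x')=F(x)$. I would prove this by strong induction on the potential $\Phi(x)=n|\obj|-\sum_i \ell_i(x)$, which strictly decreases along every move. Writing $A=\alpha(x)$ and letting $x^{+}$ be the allocation obtained by moving all of $A$, the first step of the full-$\alpha$ run gives $F(x)=F(x^{+})$, so the case $S=A$ (where $x'=x^{+}$) is immediate. For $S\subsetneq A$, set $R=A\setminus S\neq\emptyset$ and note that $x^{+}$ is exactly the allocation obtained from $x'$ by moving $R$ down one step.

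The key step, and the one I expect to be the main obstacle, is to rule out the possibility that a proper-subset move already lands on a feasible allocation: if $x'$ were feasible then $F(x')=x'$, whereas $F(x^{+})$ differs from $x'$ on the coordinates $R$, and the lemma would fail. This is exactly where forward consistency is essential. Since $d(x,x')=S\subseteq\alpha(x)$, forward consistency yields $\alpha(x')\supseteq\alpha(x)\setminus S=R\neq\emptyset$, and because a local compromiser assignment is empty precisely on feasible allocations, $x'$ must be infeasible. This same inclusion $R\subseteq\alpha(x')$ lets me invoke the induction hypothesis at $x'$ (which satisfies $\Phi(x')<\Phi(x)$) for the nonempty subset $R$, giving $F(x^{+})=F(\text{move }R\text{ from }x')=F(x')$. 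Combined with $F(x)=F(x^{+})$ this yields $F(x')=F(x)$, completing the induction.

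Finally I would apply the lemma along the $\alpha'$-run $y^0=\tau_1(\succ),y^1,\dots$. Each transition moves the nonempty subset $\alpha'(y^t)\subseteq\alpha(y^t)$, so the lemma gives $F(y^{t+1})=F(y^t)$ and hence $F(y^t)=F(\tau_1(\succ))=LP_\alpha(\succ)$ for all $t$. The run cannot fail, since an agent with empty lower contour set in $\alpha'(y^t)\subseteq\alpha(y^t)$ would also cause the full-$\alpha$ run from $y^t$ to fail, contradicting that $F(y^t)$ is defined; and it terminates because $\Phi$ strictly decreases. At the terminal feasible allocation $y^p$ we have $LP_{\alpha'}(\succ)=y^p=F(y^p)=LP_\alpha(\succ)$. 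Since $\succ$ is arbitrary, $LP_{\alpha'}=LP_\alpha$ (and in particular $\alpha'$ is implementable).
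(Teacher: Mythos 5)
Your overall route---a commutation lemma for runs started at intermediate allocations, proved by induction on a potential, then applied along the $\alpha'$-run---is genuinely different from the paper's proof (which removes one compromiser at one allocation at a time and compares the two runs through auxiliary profiles, compromiser invariance, and an interleaving argument on upper-contour-set counts). However, your preliminary claim that $F$ is well defined at \emph{every} infeasible allocation is false, and the truncation argument offered for it does not work in this domain. Preferences in $P$ are strict rankings of all of $\obj$, so ``deleting'' the objects above $x_i$ does not yield a legitimate profile; if instead you demote $UC_{\succ_i}(x_i)$ to the bottom of the ranking, the resulting run agrees with the run from $x$ under $\succ$ only until some compromiser exhausts $LC_{\succ_i}(x_i)$, at which point the run from $x$ returns $\emptyset$ while the modified run continues into the demoted objects---so implementability of $\alpha$ gives no contradiction. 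Concretely, take the paper's serial-dictatorship assignment for house allocation with three agents and three objects ($\sigma$ the identity): it is implementable and, being singleton-valued, trivially forward consistent; let all three agents have preference $a\succ b\succ c$ and $x=(c,c,c)$. Then $\alpha(x)=\{2\}$ but $LC_{\succ_2}(c)=\emptyset$, so the full-$\alpha$ run from $x$ fails immediately and $F(x)$ is undefined. The well-definedness you need holds only for allocations reachable from $\tau_1(\succ)$ by moves of nonempty subsets of $\alpha$, and this is precisely what your induction cannot presuppose: in the case $S\subsetneq\alpha(x)$ you invoke the induction hypothesis ``at $x'$,'' which is meaningless unless $F(x')$ is already known to be defined.

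The gap is repairable inside your framework, but it costs a second application of forward consistency. Restate the lemma as: if $F(x)$ is defined and $\emptyset\neq S\subseteq\alpha(x)$, then $F(x')$ is defined and equals $F(x)$. The case $S=\alpha(x)$ is as you wrote. For $S\subsetneq\alpha(x)$, put $R=\alpha(x)\setminus S$ and $T=\alpha(x')$; forward consistency at $x$ gives $T\supseteq R$, so $x'$ is infeasible, but the first step of the run from $x'$ moves all of $T$, which may strictly contain $R$. If $T=R$, the run from $x'$ passes through $x^{+}$ and you are done. If $T\supsetneq R$, apply forward consistency again at $x'$ (with $d(x',x^{+})=R\subseteq T$) to get $T\setminus R\subseteq\alpha(x^{+})$, so $x^{+}$ is infeasible in this case; the agents in $R$ have nonempty lower contour sets at $x'$ because the run from $x$ takes its first step, and the agents in $T\setminus R$ do because they hold the same objects at $x'$ as at $x^{+}$ and the run from $x^{+}$ takes its first step. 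Hence the run from $x'$ moves $T$, reaching an allocation $x''$ which is exactly ``move $T\setminus R$ from $x^{+}$.'' Now the induction hypothesis applies at $x^{+}$ (where $F$ is defined and the potential is smaller), giving that $F(x'')$ is defined and $F(x'')=F(x^{+})=F(x)$; hence $F(x')=F(x'')=F(x)$. With this strengthening your final paragraph goes through verbatim, and the resulting proof is more direct than the paper's, avoiding profile surgery entirely.
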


 Note that in the proposition above, $\alpha'$ is still required to be nonempty on $\bar{C}$ since it is a local compromiser assignment for $C$. The following example demonstrates the second way in which a local compromiser assignment will fail to induce a group strategy-proof mechanism. 

\begin{example}
Consider a setting with three agents, three objects and a set of infeasible allocations together with a local compromiser assignment as shown in the figure below.
    \begin{figure}[H]
        \centering
        \includegraphics[width=0.9\linewidth]{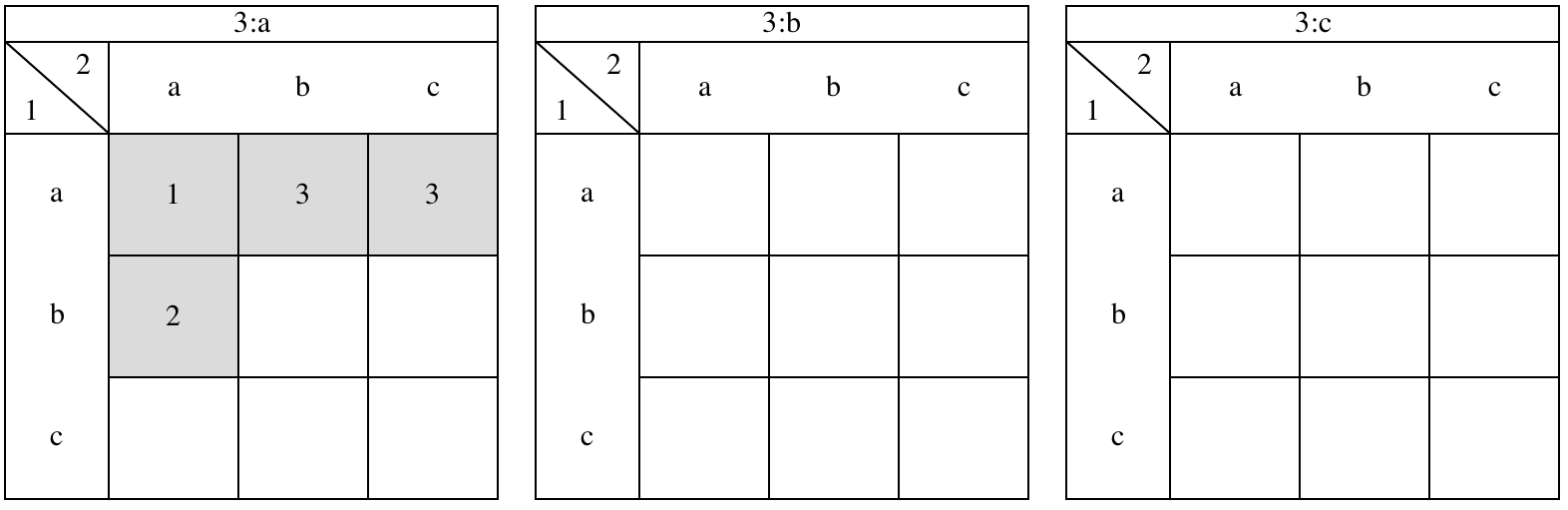}
        \label{fig: GSP violation 2}
    \end{figure}
Now consider the two profiles
   \begin{center}
        \begin{tabular}{ccccccc}
        $\succ_{1}$ & $\succ_{2}$ & $\succ_{3}$ & $\hphantom{\ldots}$ & $\succ_{1}$ & $\succ_{2}'$ & $\succ_{3}$ \\
        \cmidrule(lr){1-1}
        \cmidrule(lr){2-2}
        \cmidrule(lr){3-3}
        \cmidrule(lr){5-5}
        \cmidrule(lr){6-6}
        \cmidrule(lr){7-7}
        $a$ & $a$ & $\textcolor{blue}{a}$ &  & $\textcolor{blue}{a}$ & $\textcolor{blue}{b}$ & $a$\\
        $\textcolor{blue}{b}$ & $\textcolor{blue}{b}$ & $b$ &  & $b$ & $c$ & $\textcolor{blue}{b}$\\
        $c$ & $c$ & $c$ &  & $c$ & $a$ & $c$\\
        \end{tabular}
    \end{center} 
In the first, agent $1$ is required to compromise in the first step of the local priority algorithm. This leads to the allocation $(b,a,a)$ where $2$ is the local compromiser. After $2$ compromises we are left with the feasible allocation $(b,b,a)$. However, agents $1$ and $2$ can form a coalition and jointly misreport. Agent $2$ can instead announce $\succ_2'$. This does not change their allocation, but now in the first step of the algorithm, $3$ is required to compromise, leading to the allocation $(a,b,b)$ which is feasible. This violates group strategy-proofness.
\end{example}

The issue here is that if agent $2$ recognizes they are going to compromise and preempts this by submitting a profile where they already rank that object at the bottom, they can manipulate who compromises with them. At $\succ_2$, agent $1$ compromises with them, while at $\succ_2'$ it is agent $3$. We need a condition which implies that an agents cannot misreport to change the set of compromisers without affecting their own allocation.

Before stating the condition, it will be useful to give a few simple definitions. First, we define a sequence of allocations to be acyclic of no agent cycles through objects.

\begin{definition}
    A sequence of objects $a^1,\dots, a^p$ is \textbf{acyclic} if there is no $r<s<t$ such that $a^r=a^t$ and $a^s\neq a^r$. A sequence of allocations $x^1,\dots x^p$ is \textbf{acyclic} if for all $i$, the sequence of objects $x^1_i,\dots x^p_i$ is acyclic. 
\end{definition}

The next condition specifies conditions under which we could start from a given allocation $x$ and eventually land on $y$ after having $i$ compromise. 

\begin{definition}
    Given a local compromiser assignment $\alpha$, two infeasible allocations $x$ and $y$ are $i$\textbf{-connected} under $\alpha$ if there is an acyclic sequence of allocations $z^1,\dots z^p$ such that 
    \begin{enumerate}
        \item $z^1=x$ and $z^p=y$
        \item $d(z^1,z^2)=\{i\}$
        \item For all $l<p$, $d(z^{l+1},z^{l})\subset \alpha(z^l)$
    \end{enumerate}
\end{definition}

\begin{definition}
    A local compromiser assignment $\alpha$ is \textbf{backward consistent} if for any agent $i$ and any two infeasible allocations $x$ and $y$ that are $i$-connected, then for any $x'$ such that  $d(x,x')\subset \alpha(y)$ and $i\notin d(x,x')$ we have  $i\in \alpha(x')$.
\end{definition}

Going back to the previous example, $(a,a,a)$ and $(b,a,a)$ are $1$-connected. Since $2$ is the compromiser at $(b,a,a)$, backward consistency then requires that $1$ be among the compromisers at all $(a,d,a)$ which is not satisfied. Backwards consistency rules out situations where an agent who will need to compromise at some point can manipulate who compromises with them by preemptively compromising. Any local compromiser assignment that satisfies forward and backward consistency is called \textbf{consistent}.

\begin{theorem} \label{local priority Theorem}
Fix a constraint $C$ and an implementable local compromiser assignment $\alpha$. If $\alpha$ is consistent, then $LP_{\alpha}$ is group strategy-proof.
\end{theorem}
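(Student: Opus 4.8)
The plan is to show that any profitable group deviation must at some point involve a coalition member misreporting in a way that changes the set of compromisers at some infeasible allocation along the algorithm's path, and that the two consistency conditions together rule out exactly the mechanisms by which such a deviation could benefit the coalition. By the remark following the definition of group strategy-proofness (citing Alva), it suffices to rule out profitable deviations by coalitions of size at most two, so I would try to reduce to analyzing a pair of agents $\{i,j\}$, or even to a careful single-agent analysis augmented by a non-bossiness argument. Given Corollary \ref{cor: gsp} and the equivalence of group strategy-proofness with Maskin monotonicity (and with individual strategy-proofness plus non-bossiness) recorded in the text, the cleanest target is to prove individual strategy-proofness and non-bossiness separately.

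\textbf{First} I would establish individual strategy-proofness. Fix an agent $i$, a profile $\succ$, and a deviation $\succ_i'$. I would compare the two runs of the local priority algorithm, $LP_\alpha(\succ)$ and $LP_\alpha(\succ_i',\succ_{-i})$, tracking the sequence of infeasible allocations traversed. The key structural fact I would use is that, because $\alpha$ is preference-independent, the only way $i$'s report enters the algorithm is through which object $i$ is assigned at each step (her running top choice among objects she has not yet been forced below). I would argue that whenever $i$ is a compromiser at an allocation on the honest path, forward consistency guarantees that $i$ remains obligated to compromise along any run in which she has not already moved past that object, so she cannot escape compromising at that object by reordering her lower-ranked choices. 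The upshot should be that $i$'s final object under any deviation is no better (under $\succ_i$) than under truth-telling: she is pushed down her \emph{true} ranking exactly through the same sequence of ``forced'' rejections.

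\textbf{Next} I would handle non-bossiness, which is precisely where backward consistency does its work. Suppose $i$ deviates to $\succ_i'$ without changing her own assigned object; I must show no other agent's assignment changes. The danger, illustrated by the second example in the text, is that $i$ preemptively bottom-ranks an object she was going to be forced off of anyway, thereby altering \emph{which} other agents compromise alongside her at some infeasible allocation. I would formalize the divergence between the two algorithm runs as occurring at a first infeasible allocation $x$ where the compromiser sets differ, exhibit an $i$-connection (in the sense of the definition) between the relevant allocations on the honest and deviated paths, and then invoke backward consistency to force $i$ into the compromiser set at the deviated allocation, contradicting the supposition that $i$'s preemptive move let her step aside while someone else compromised. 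The acyclicity built into the local priority algorithm (no agent revisits an object, since she only ever moves strictly down her reported ranking) is what licenses treating the traversed allocations as an acyclic sequence, matching the hypothesis of the $i$-connected definition.

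\textbf{The hardest part} will be the bookkeeping in the non-bossiness step: precisely aligning the two divergent algorithm runs, identifying the correct $i$-connected pair of infeasible allocations to which backward consistency applies, and verifying the side conditions $d(x,x')\subset \alpha(y)$ and $i\notin d(x,x')$ hold at the point of divergence. There is a genuine subtlety in that the deviated run and honest run may traverse different allocations for several steps before diverging in a way that matters, so I would need to argue carefully that the \emph{first} point of disagreement in compromiser sets is reachable by an acyclic, compromiser-respecting sequence from the honest path, so that the $i$-connection hypothesis is legitimately satisfied. Once the $i$-connection is correctly set up, backward consistency should deliver the contradiction almost immediately, so the effort is concentrated in the reduction and the path-alignment rather than in any deep inequality.
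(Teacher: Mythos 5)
Your decomposition is broken at its first step. You plan to get individual strategy-proofness from forward consistency alone, reserving backward consistency for non-bossiness, but that split is false: forward consistency has \emph{no bite} when every compromiser set is a singleton (if $\alpha(x)=\{i\}$ and $d(x,y)\subset\alpha(x)$, then $\alpha(x)-d(x,y)=\emptyset$, so the condition holds vacuously), yet singleton assignments can violate individual strategy-proofness. Concretely: take two agents, objects $\{a,b,c\}$, infeasible set $\{(a,a),(b,a),(b,b)\}$, and $\alpha((a,a))=\{1\}$, $\alpha((b,a))=\{2\}$, $\alpha((b,b))=\{1\}$. This $\alpha$ is implementable and vacuously forward consistent. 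With true preferences $a\succ_1 b\succ_1 c$ and $a\succ_2 b\succ_2 c$, the truthful run is $(a,a)\to(b,a)\to(b,b)\to(c,b)$, so agent $1$ receives $c$; but reporting $b\succ_1' a\succ_1' c$ produces the run $(b,a)\to(b,b)\to(a,b)$, and agent $1$ receives her true favorite $a$. So forward consistency alone does not deliver even individual strategy-proofness; backward consistency is needed there too (and indeed this example violates it: $(b,a)$ and $(b,b)$ are $2$-connected, yet $2\notin\alpha((a,a))$). The specific claim in your sketch---that an agent ``cannot escape compromising at that object by reordering her lower-ranked choices''---is exactly what fails: agent $1$ escapes $(a,a)$ entirely by moving $b$ to the top of her report, and her later forced move at $(b,b)$ sends her \emph{up} her true ranking. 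Forward consistency only constrains what happens after a strict subset of a compromiser set has moved; it says nothing about runs that visit different allocations because the deviator's reported top differs, which is precisely how deviations pay off.

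The paper's proof takes a different and essentially unavoidable route in which the two conditions work jointly rather than being assigned to separate axioms. It first shows (using forward consistency, via Proposition \ref{prop: forward consistency}) that the induced mechanism is unchanged under passing to pointwise subsets of $\alpha$; it then proves that the marginal mechanism obtained by fixing one agent's preference is again a local priority mechanism for an induced constraint on the remaining agents, with an induced compromiser assignment $\alpha^*$ that inherits \emph{both} forward and backward consistency---both conditions are needed to build and verify $\alpha^*$. Iterating this reduction down to two agents, it invokes the characterization from \citeasnoun{RoAh23} that the only consistent two-agent local priority mechanisms are local dictatorships, which are exactly the group strategy-proof two-agent mechanisms, and then Corollary \ref{cor: two agent} (pairs suffice) concludes. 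If you want to salvage your outline, backward consistency must enter the individual-strategy-proofness argument from the start, and your path-alignment step must handle runs that diverge from the very first allocation, not merely runs that peel off at a common infeasible allocation.
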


Our approach in proving Theorem \ref{local priority Theorem} is to show that for any consistent $\alpha$, the mechanism $LP_{\alpha}$ has marginal submechanisms that are local priority mechanisms which are also consistent. That is, if we partition the set of agents into two groups $M$ and $-M$ and fix the preferences of the agents in $-M$, say at $\succ_{-M}$, then the marginal mechanism $\succ_M \, \mapsto LP_{\alpha}(\succ_M,\succ_{-M})\vert_{M}$ is also a local priority mechanism which can be implemented by a local compromiser assignment which is consistent. This holds in particular if $M$ is any pair of agents. Now, for a pair of agents, consistency implies that the marginal mechanism is strategy-proof and efficient with respect to its image.\footnote{That is, the marginal mechanism is efficient for the constraint which is exactly its image.} As a consequence, there is no potential for the pair to jointly misreport, establishing the result (recall that it is necessary and sufficient to rule out misreports by pairs). 

We show in Appendix \ref{appendix: consistency theorem} that forward consistency is a necessary condition. We provide an example which shows that backward consistency is not necessary. Nevertheless, backwards consistency is satisfied by many prominent mechanisms including top trading cycles. We stress that Theorem \ref{local priority Theorem} can be applied to any constraint. In section \ref{sec: illustration}, we apply consistency to recover well-known mechanisms for the house allocation problem and to generate new group strategy-proof mechanisms for a simple variant of the house allocation constraint.

\section{Comparisons Across Mechanisms}\label{sec: comparisons}

In this section we compare local priority mechanisms within and across constraints holding fixed the set of agents and the set of objects. We find that our consistency conditions are necessary to generate intuitive comparative statics results. The first result states that if $\alpha \subset \alpha'$ pointwise and if $\alpha'$ satisfies forward consistency, then all agents will be weakly better off under the local priority mechanism under $\alpha$ than under the local priority mechanism under $\alpha'$. Next, we consider the performance across local priority mechanisms for a single agent. We give an example which demonstrates that agents can prefer local compromiser assignments where they compromise more to others where they compromise less. However, holding fixed the other set of compromisers and requiring consistency, agents always prefer to compromise less.

First, without forward consistency, it is possible for all agents to prefer the outcome of a local priority mechanism under $\alpha'$ to the outcome of a local priority algorithm under $\alpha$ even if $\alpha\subset \alpha'$ pointwise. To see this, suppose that there are three agents $1,2,3$ and three objects $a,b,c$. Suppose that $(b,a,a)$ and $(a,a,a)$ are the only infeasible allocations. Let $\alpha'((a,a,a))=\{1,2\}$, $\alpha'((b,a,a))=\{1,2,3\}$ and $\alpha((a,a,a))=\{1\}$, $\alpha((b,a,a))=\{1,2,3\}$. At the profile where all agents rank $a$ above $b$ above $c$, the outcome of the local priority algorithm under $\alpha'$ is $(b,b,a)$ while under $\alpha$ it is $(c,b,b)$.

If, however, $\alpha'$ satisfies forward consistency, we recover the anticipated comparative statics result.

\begin{proposition}\label{prop: CS1}
    Suppose that $\alpha$ and $\alpha'$ are implementable, $\alpha'$ satisfies forward consistency and $\alpha\subset \alpha'$. Then for any profile $\succ$ we have $LP_{\alpha}(\succ)(i)\succsim_i LP_{\alpha}(\succ')(i)$ for all $i$.
\end{proposition}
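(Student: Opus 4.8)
The plan is to prove the comparison as it is evidently intended, namely $LP_{\alpha}(\succ)(i) \succsim_i LP_{\alpha'}(\succ)(i)$ for every agent $i$: shrinking the compromiser assignment from $\alpha'$ to $\alpha$ makes everyone weakly better off. Everything hinges on one elementary monotonicity fact about the algorithm. Along any run $z^0, z^1, \dots$ of a local priority algorithm under any assignment $\beta$, each agent's object only slides down her own ranking: a non-compromiser keeps her object, while a compromiser $k \in \beta(z^t)$ is reassigned $z^{t+1}_k = \argmax_{\succ_k} LC_{\succ_k}(z^t_k)$, which lies strictly below $z^t_k$. Hence $z^t_i \succsim_i z^{t+1}_i$ for all $i$ and $t$, so the terminal object of the run is weakly worse than every object the agent held en route, and no agent ever revisits an object. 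In particular, to prove the proposition it suffices to show that the $LP_{\alpha'}$ run passes through the terminal allocation of the $LP_\alpha$ run.

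First I would record the containment of infeasible sets. Writing $C_\beta = \{x : \beta(x) = \emptyset\}$ for any compromiser assignment $\beta$, the pointwise inclusion $\alpha \subset \alpha'$ forces $C_{\alpha'} \subset C_\alpha$, equivalently $\bar C_\alpha \subset \bar C_{\alpha'}$. Thus every intermediate allocation of the $LP_\alpha$ run, being $\alpha$-infeasible, is also $\alpha'$-infeasible.

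The crux is to manufacture a single compromiser assignment $\gamma$ for the constraint $C_{\alpha'}$ that first imitates $\alpha$ and then defers to $\alpha'$. Fix $\succ$ and let $x^0, \dots, x^m$ be the allocations considered by $LP_\alpha$, so $x^m = LP_\alpha(\succ)$. Set $\gamma(x) = \alpha(x)$ for $x \in \{x^0, \dots, x^{m-1}\}$ and $\gamma(x) = \alpha'(x)$ elsewhere. Using the containment above and the fact that $x^0, \dots, x^{m-1}$ are distinct $\alpha'$-infeasible allocations, I would verify that $\gamma$ is a bona fide local compromiser assignment for $C_{\alpha'}$ with $\gamma \subset \alpha'$. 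Since $\alpha'$ is implementable and forward consistent, Proposition \ref{prop: forward consistency} gives $LP_\gamma = LP_{\alpha'}$. But by construction $\gamma$ agrees with $\alpha$ on the $\gamma$-infeasible allocations $x^0, \dots, x^{m-1}$, so the $\gamma$ run traces the $\alpha$ run step for step and reaches $x^m$. Applying monotone descent to the $\gamma$ run then yields, for every $i$, $LP_\alpha(\succ)(i) = x^m_i \succsim_i LP_\gamma(\succ)(i) = LP_{\alpha'}(\succ)(i)$, which is exactly the claim.

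The hard part will be the bookkeeping around $\gamma$: checking that it is empty precisely on $C_{\alpha'}$, that the $\gamma$ run does not halt before $x^m$, and that once it passes $x^m$ it never re-enters $\{x^0, \dots, x^{m-1}\}$ (which would spoil the agreement $\gamma = \alpha'$ on the tail). All three points are dispatched by monotone descent, since it forbids any agent from revisiting an object and hence forbids the run from revisiting any earlier allocation. It is worth stressing that forward consistency of $\alpha'$ is precisely the hypothesis that unlocks Proposition \ref{prop: forward consistency}; without it the confluence it provides fails, and the preceding example shows the welfare comparison can then reverse.
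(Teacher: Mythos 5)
Your proof is correct and takes essentially the same route as the paper's: the paper likewise builds a sub-assignment $\alpha''\subset\alpha'$ agreeing with $\alpha$ wherever $\alpha$ is nonempty, invokes Proposition \ref{prop: forward consistency} to conclude $LP_{\alpha''}=LP_{\alpha'}$, and finishes by observing that the $\alpha''$-run extends the $\alpha$-run while welfare weakly decreases at every step. The only (cosmetic) difference is that your hybrid $\gamma$ is profile-specific, agreeing with $\alpha$ only on the allocations the $\alpha$-run actually visits, whereas the paper's $\alpha''$ is defined globally; this also makes your third bookkeeping concern (re-entry into $\{x^0,\dots,x^{m-1}\}$) unnecessary, since $LP_\gamma=LP_{\alpha'}$ holds as an identity of mechanisms and monotone descent along the $\gamma$-run alone delivers the comparison.
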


Note that $\alpha$ and $\alpha'$ are arbitrary maps from $\mathcal{A}\rightarrow 2^N$, so are not necessarily local compromiser assignments for the same constraint. 

\begin{proof}
    By Proposition \ref{prop: forward consistency}, there is some $\alpha''\subset \alpha'$ such that $\alpha''=\alpha$ whenever $\alpha$ is nonempty such that $LP_{\alpha''}=LP_{\alpha'}$.
    Then the only difference between $\alpha$ and $\alpha''$ is that there are potentially some $x$ where $\alpha(x)=\emptyset$ and $\alpha''(x)\neq \emptyset$.
    Fix a profile $\succ$. Following the local priority algorithm under $\alpha$ and $\alpha'$, either the allocations considered include no such $x$, in which case the outcome of the two algorithms is identical, or such an $x$ is considered and $LP_{\alpha}(\succ)=x$ while the local priority algorithm under $\alpha'$ considers at least one other allocation. Since welfare is decreasing in each step of the the local priority algorithm, $LP_{\alpha}(\succ)_i=x_i \succsim_i LP_{\alpha''}(\succ)_i=LP_{\alpha'}(\succ)_i$ for all $i$.
\end{proof}

As an example of Proposition \ref{prop: CS1}, increasing school capacities while holding fixed priorities makes all agents weakly better off in DA.

The next example shows that, even if $\alpha$ and $\alpha'$ are such that $i\in \alpha'(x)$ implies  $i\in \alpha(x)$, so that whenever $i$ is called to compromise using $\alpha'$, they are also called to do so in $\alpha$, they may still prefer the outcome under $\alpha$. That is, an agent may actually be better off by being made to compromise.

\begin{example}
    In Figure \ref{fig:comparisons} there are two mechanisms, each for two agents. In the first, agent $1$ is the local compromiser everywhere. In the second, $2$ is swapped for $1$ at the allocation $(a,a)$. If preferences are such that both agents prefer $a$ to $b$ to $c$, $1$ prefers their outcome in the first mechanism, $b$, to their outcome of the second mechanism, $c$. 
    \begin{figure}
        \centering
        \includegraphics[width=0.6\linewidth]{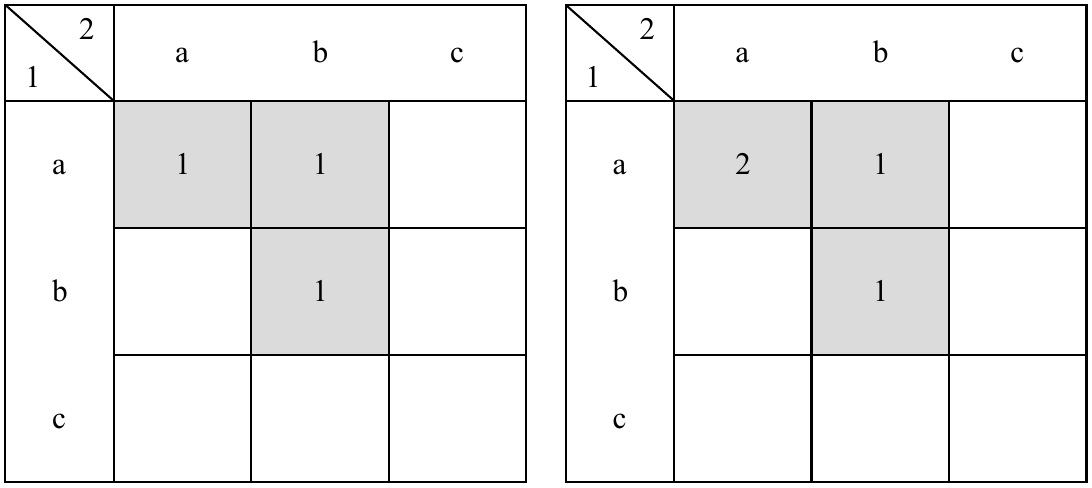}
        \label{fig:comparisons}
        \caption{A simple example where welfare is non-monotonic in the number of times an agent has to compromise}
    \end{figure}
\end{example}

The second mechanism in Figure \ref{fig:comparisons} doesn't satisfy backward consistency. If we require both $\alpha$ and $\alpha'$ to be consistent, however, we get an unambiguous improvement for $i$ under $\alpha'$

\begin{proposition}\label{prop:comparative statics}
    Fix a constraint $C$ and an agent $i$. Suppose that $\alpha$ and $\alpha'$ are two consistent, implementable local compromiser assignments for $C$ such that for every infeasible $x$:
    \begin{enumerate}
        \item for any $j\neq i$ if $j\in \alpha(x)$ then $j\in \alpha'(x)$
        \item if $i\in \alpha'(x)$ then $i\in \alpha(x)$
    \end{enumerate}
    Then $i$ weakly prefers the outcome in the local priority mechanism under $\alpha'$ to the outcome in the local priority mechanism under $\alpha$ for any preference profile. 
\end{proposition}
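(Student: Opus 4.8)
The plan is to fix the target profile $\succ$, freeze the other agents' reports $\succ_{-i}$, and analyze the one‑dimensional mechanism faced by agent $i$. Because $\alpha$ and $\alpha'$ are consistent, Theorem~\ref{local priority Theorem} makes $LP_\alpha$ and $LP_{\alpha'}$ group strategy-proof, hence individually strategy-proof. Thus the marginal maps $g_\alpha(\succ_i) := LP_\alpha(\succ_i,\succ_{-i})(i)$ and $g_{\alpha'}(\succ_i) := LP_{\alpha'}(\succ_i,\succ_{-i})(i)$ are strategy-proof mechanisms for a single agent on the full domain of strict preferences, and any such mechanism returns the $\succ_i$-maximal element of its range. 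Writing $S_\alpha$ and $S_{\alpha'}$ for the ranges of $g_\alpha$ and $g_{\alpha'}$, this gives $LP_\alpha(\succ)(i)=\max_{\succ_i}S_\alpha$ and $LP_{\alpha'}(\succ)(i)=\max_{\succ_i}S_{\alpha'}$. Consequently it suffices to prove the range inclusion $S_\alpha\subseteq S_{\alpha'}$, since that immediately yields $\max_{\succ_i}S_{\alpha'}\succsim_i\max_{\succ_i}S_\alpha$, which is the claim.

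To prove $S_\alpha\subseteq S_{\alpha'}$, fix $o\in S_\alpha$ and let $\hat\succ_i$ be any report that top-ranks $o$. By the top-of-range property, $g_\alpha(\hat\succ_i)=o$. Along any run of the local priority algorithm each agent's assigned object moves weakly down her own ranking, since a compromiser is sent strictly below her current object while everyone else is unchanged; so the sequence of objects assigned to $i$ is $\succsim_i$-decreasing. As $i$ both starts and ends at her top object $o$, she never compromises along the $\alpha$-trajectory $x^0,\dots,x^m$ generated by $(\hat\succ_i,\succ_{-i})$. In particular $i\notin\alpha(x^t)$ for every $t$, and then condition (1) yields $\alpha(x^t)\subseteq\alpha'(x^t)$ at each allocation on this path.

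The key step is to transfer this run from $\alpha$ to $\alpha'$. Define a sub-assignment $\alpha^\flat$ by $\alpha^\flat(x)=\alpha(x)$ whenever $i\notin\alpha(x)$ and $\alpha^\flat(x)=\alpha'(x)$ otherwise. Condition (1) guarantees $\alpha^\flat(x)\subseteq\alpha'(x)$ for every infeasible $x$, and $\alpha^\flat$ is nonempty on $\bar C$, so it is a genuine local compromiser assignment with $\alpha^\flat\subseteq\alpha'$. Since $\alpha'$ is forward consistent and implementable, Proposition~\ref{prop: forward consistency} gives $LP_{\alpha^\flat}=LP_{\alpha'}$. Running $\alpha^\flat$ on $(\hat\succ_i,\succ_{-i})$, at every allocation $x^t$ of the $\alpha$-trajectory we have $i\notin\alpha(x^t)$, hence $\alpha^\flat(x^t)=\alpha(x^t)$; by induction the $\alpha^\flat$-run reproduces the $\alpha$-trajectory step for step and terminates at $x^m$ with $i$ receiving $o$. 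Therefore $o=LP_{\alpha^\flat}(\hat\succ_i,\succ_{-i})(i)=LP_{\alpha'}(\hat\succ_i,\succ_{-i})(i)\in S_{\alpha'}$, establishing the inclusion.

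The main obstacle is exactly this transfer: because $\alpha$ and $\alpha'$ prescribe different compromisers, the two algorithms traverse different allocations, and one cannot compare their trajectories directly. The device that overcomes it is the observation that on the run where $i$ secures her top object she is never a compromiser, so along that path $\alpha$ sits below $\alpha'$; the sub-assignment $\alpha^\flat$ packages this into a global assignment dominated by $\alpha'$, and forward consistency certifies that shrinking $\alpha'$ to $\alpha^\flat$ does not change the outcome. I would verify the routine boundary points in the definition of $\alpha^\flat$ (that it is nonempty at every infeasible allocation and that the induction reaches a feasible allocation, both immediate from implementability). I note that this argument uses only condition (1) together with the consistency of both assignments, with the role of backward consistency entering implicitly through the strategy-proofness (top-of-range) of $g_\alpha$ and $g_{\alpha'}$; condition (2) does not appear to be needed for this direction, though it renders the hypotheses a clean symmetric ``swap'' of $i$ against the other agents.
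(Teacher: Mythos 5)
Your proof is correct, and it takes a genuinely different route from the paper's. The paper's own argument first shrinks $\alpha$ and $\alpha'$ to singleton-valued sub-assignments $\beta\subset\alpha$ and $\beta'\subset\alpha'$ with $\beta(x)=\{i\}$ whenever $i\in\alpha(x)$, $\beta'(x)=\{i\}$ whenever $i\in\alpha'(x)$, and $\beta(x)=\beta'(x)$ whenever $i\notin\alpha(x)$; it then tracks the two runs in parallel, and at the first allocation where $\beta$ names $i$ but $\beta'$ does not, it truncates preferences and uses group strategy-proofness of $LP_{\beta'}$ to induct along the ever-shorter remaining trajectory. You instead freeze $\succ_{-i}$, use group strategy-proofness of both mechanisms (via Theorem \ref{local priority Theorem}) only through the top-of-range property of the single-agent marginal mechanisms, and reduce the claim to the range inclusion $S_\alpha\subseteq S_{\alpha'}$, which you prove with the hybrid assignment $\alpha^\flat\subseteq\alpha'$ and Proposition \ref{prop: forward consistency}. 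Both proofs rest on the same two pillars --- Theorem \ref{local priority Theorem} and Proposition \ref{prop: forward consistency} --- but your decomposition is more modular, avoids the parallel-trajectory induction, and, as you note, never uses hypothesis (2). That observation is substantive: the paper's construction of $\beta,\beta'$ genuinely needs (2), since the requirements $\beta\subset\alpha$, ``$i\in\alpha'(x)\Rightarrow\beta'(x)=\{i\}$'' and ``$i\notin\alpha(x)\Rightarrow\beta(x)=\beta'(x)$'' are jointly inconsistent at any $x$ with $i\in\alpha'(x)\setminus\alpha(x)$, whereas your $\alpha^\flat$ needs only condition (1) to sit inside $\alpha'$. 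So your argument in fact establishes a slightly stronger proposition (condition (1) plus consistency and implementability of both assignments suffice); what the paper's approach buys in exchange is an explicit trajectory-level description of where and how the two algorithms diverge.
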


\section{A Three Agent, Three House Example}\label{sec: illustration}

We now turn to an application of Theorem \ref{local priority Theorem} to a simple variant of the house allocation problem. \citeasnoun{PyUn17} characterized the group strategy-proof and Pareto efficient mechanisms for the house allocation problem. With three agents and three objects, the house allocation constraint can be visualized as in Figure \ref{Three Agent Constraint}. In this section, we will make a slight perturbation to this constraint. With this small perturbation, existing analyses of the house allocation problem are inapplicable. However, by constructing local priority mechanisms for the constraint, we can find nontrivial group strategy-proof and Pareto efficient mechanisms for this problem. This is meant to concretely illustrate how local priority mechanisms can be constructed for reasonable problems for which no existing broad class of known mechanisms would apply. Moreover, the resulting mechanisms are of some interest on their own, since they describe how tensions between property rights and efficiency are adjudicated by the mechanism in the present of a slightly relaxed constraint.

\begin{figure}[h]
    \centering
    \includegraphics[scale=0.42]{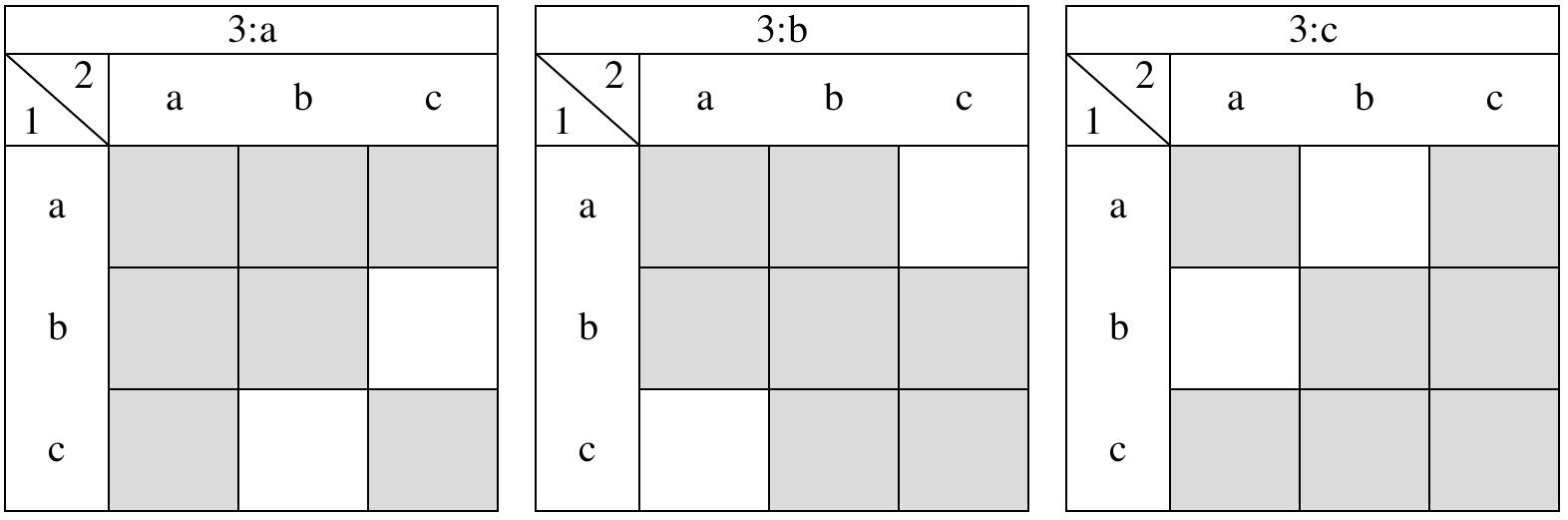}
    \caption{The constraint for the house allocation problem}
    \label{Three Agent Constraint}
\end{figure}

In Figure \ref{Three Agent House Allocation Mechanisms}, we list (up to a relabeling of the agents and objects) the set of implementable local compromiser assignments which satisfy both consistency conditions for the house allocation problem. We drop the labels above to make the figure more compact. When, for example we list ``1 or 2,'' we mean that the cell can be filled with a ``1'' or a ``2,'' but not both. It turns out that this is exactly the set of group strategy-proof and Pareto efficient mechanisms characterized by \citeasnoun{PyUn17}. There are ``hierarchical exchange" mechanisms as in \citeasnoun{Papai00} and ``broker" mechanisms as in \citeasnoun{PyUn17}. The second panel on the left is the ``braiding" mechanism of \citeasnoun{Bade16}.

\begin{figure}[h]
    \centering
    \includegraphics[scale=0.66]{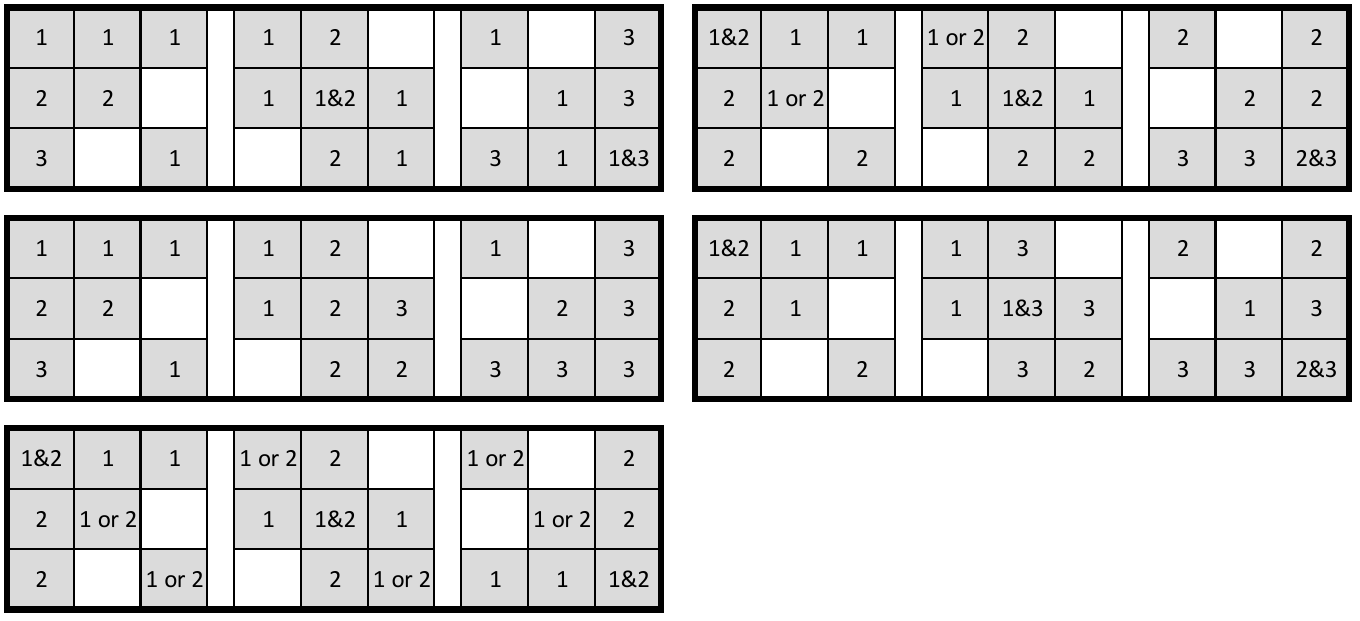}
    \caption{All group strategy-proof and Pareto efficient 3-Agent House Allocation Mechanisms (up to symmetry)}
    \label{Three Agent House Allocation Mechanisms}
\end{figure}

Suppose, however, that the constraint relaxes, as shown in Figure \ref{Three Agent Constraint Variation}. Now, the allocation $(a,a,a)$ is feasible. Otherwise the constraint is exactly the same. Without Theorem \ref{local priority Theorem}, one would need to find a way to modify the challenging proof of \citeasnoun{PyUn17} to fit this constraint. In light of this theorem, however, we can simply find the set of local compromiser assignments which satisfy forward and backward consistency. These are listed in Figure \ref{Three Agent Constraint Variation Mechs}. 
\begin{figure}[H]
    \centering
    \includegraphics[scale=0.42]{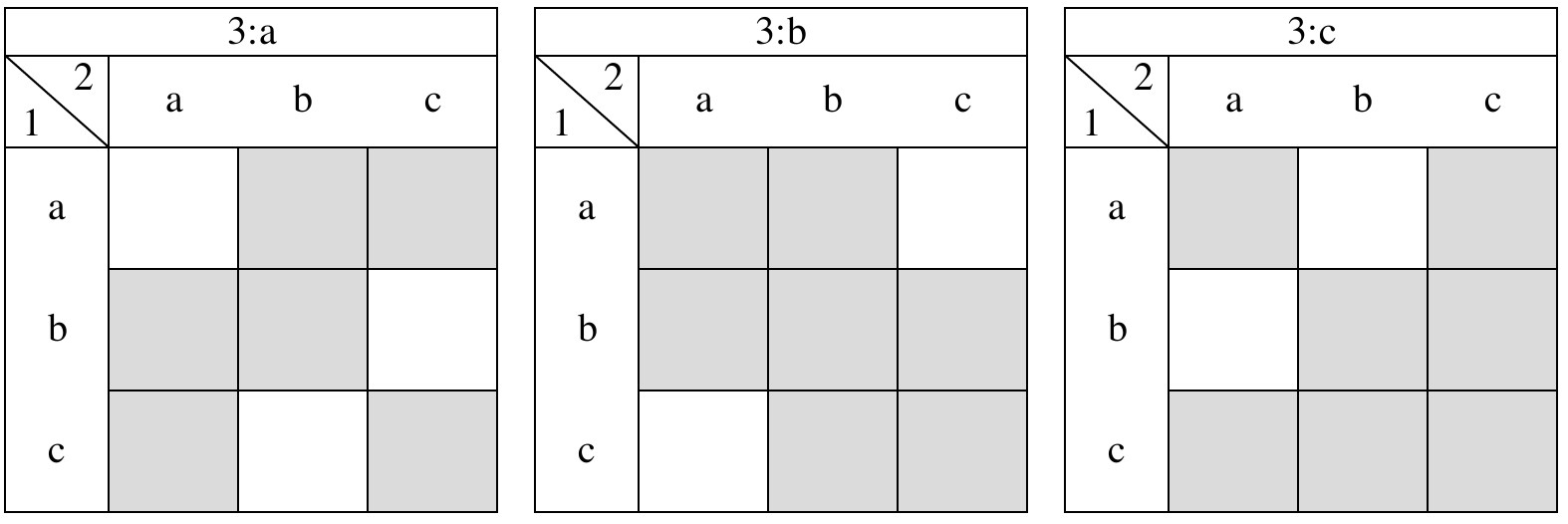}
    \caption{A variation on the house allocation constraint}
    \label{Three Agent Constraint Variation}
\end{figure} 

\begin{figure}[h]
    \centering
    \includegraphics[scale=0.66]{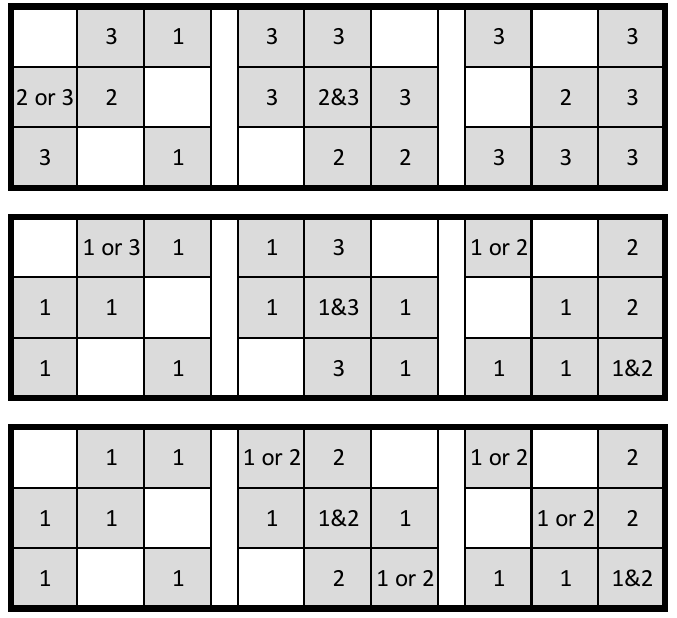}
    \caption{The consistent local priority mechanisms for the constraint in Figure \ref{Three Agent Constraint Variation}}
    \label{Three Agent Constraint Variation Mechs}
\end{figure}

These mechanisms demonstrate many of the qualities observed in house allocation problems. In the first mechanism, we have a broker as in \citeasnoun{PyUn17}. The second mechanism is a mix between serial dictatorship and top trading cycles. The mechanism behaves as though agent $2$ owns object $b$ and agent $3$ owns object $c$. If both agents $2$ and $3$ top-rank $a$, then the social allocation is $(a,a,a)$ regardless of $1$'s preferences. However, $1$ can also have some power. If we opt for $3$ in the square labeled ``$1$ or $3$" and $2$ in the square labeled ``$1$ or $2$", whenever either $2$ or $3$ top-ranks the object she owns and there is a conflict between the other two agents over $a$, in this case $1$ forces the other agent to compromise. This illustrates that simply following the consistency conditions to construct local priority algorithms can yield mechanisms with interesting properties.

\section{Discussion and Further Directions}\label{sec: discussion}

All group strategy-proof house allocation mechanisms are local priority mechanisms \cite{PyUn17}. \citeasnoun{bade2023royal} characterize the set of group strategy-proof two-sided matching mechanisms under an additional symmetry requirement. All these mechanisms are local priority mechanisms. These examples suggest that in many settings, local priority mechanisms are the most likely to yield strong incentive properties. Indeed, Corollary \ref{cor: gsp} suggests that one only needs to establish that the fixed compromiser condition holds to conclude that all unanimous and group strategy-proof mechanisms are local priority mechanisms. Both of our examples of prominent mechanisms which are not local priority mechanisms have well-known violations of strategy-proofness. Further work is needed to clarify when this can be expected to hold. 

While we have primarily studied group strategy-proofness, many important individually strategy-proof mechanisms are local priority mechanisms. In \citeasnoun{RoAh23} we gave an explicit characterization of the entire class of strategy-proof and efficient mechanisms for any constraint with just two agents. All such mechanisms are local priority mechanisms. Deferred acceptance is also a local priority mechanism. Future research should identify conditions under which a local priority mechanism is individually strategy-proof and efficient. 

In recent years, there has been interest in dynamic implementations of various mechanisms \cite{bo2020iterative}. Local priority mechanisms have an obvious dynamic implementation where each agent is asked to report a single desired object in each round. This approach may be easier for participants who need to engage in costly learning to uncover their preferences since students need only find the object which they top-rank in each round, rather than generate a full preference ranking. 

We have only considered deterministic mechanisms here, however local priority mechanisms can be used to generate random mechanisms with desirable properties. For example, randomizing over the set of group strategy-proof mechanisms for any constraint will yield a ordinally strategy-proof and ex-post efficient mechanism. In general, there is no guarantee that these mechanisms will be ex-ante efficient. However, \citeasnoun{echenique2022efficiency} discuss scenarios where ex-post and ex-ante efficiency coincide.

\section{Appendix}

\small

\subsection{Preliminary Observations}\label{appendix: preliminary}

It will be useful to relate group strategy-proofness with two other notions: nonbossiness and Maskin monotonicity.

\begin{definition}
A mechanism $f:\pp \rightarrow C $ is
\begin{enumerate}
    \item \textbf{nonbossy} if, for all $\succ\in \pp$,  $$f_{i}(\succ'_{i},\succ_{-i})= f_{i}(\succ) \implies f(\succ'_{i},\succ_{-i})= f(\succ). $$
    \item \textbf{Maskin monotonic} if, for all  $\succ, \succ'\in \pp$, 
    $$LC_{\succ_{i}'}(f_{i}(\succ))\supset LC_{\succ_{i}}(f_{i}(\succ))\text{  for all }i \implies f(\succ')=f(\succ).$$
\end{enumerate}
\end{definition}

\begin{proposition}[Root and Ahn, 2023] \label{GSP equivalences}
If $f:\pp\rightarrow \alloc$\ the following are equivalent:
\begin{enumerate}[leftmargin=*]
    \item $f$ is group strategy-proof
    \item $f$ is strategy-proof and nonbossy
    \item $f$ is Maskin monotonic.
\end{enumerate}
\end{proposition}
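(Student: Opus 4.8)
The plan is to prove the three-way equivalence through a cycle of implications organized around the middle condition: I would first establish Maskin monotonicity $\iff$ strategy-proofness-plus-nonbossiness cleanly, and then attach group strategy-proofness to this pair, since $(1)\Rightarrow(2)$ is partly immediate and the rest is the classical equivalence of group strategy-proofness with strategy-proofness and nonbossiness.

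First I would handle $(3)\Rightarrow(2)$ by a ``top-lifting'' argument that exploits the private-values, allocation-valued structure. For strategy-proofness, suppose $i$ gains by reporting $\succ_i'$, so $z_i\succ_i x_i$ where $x=f(\succ)$ and $z=f(\succ_i',\succ_{-i})$. Moving $z_i$ to the top of $\succ_i'$ is a monotonic transformation at $z_i$, so by monotonicity the outcome stays $z$; moving $z_i$ to the top of the truthful $\succ_i$ is a monotonic transformation at $x_i$ (nothing below $x_i$ moves, since $z_i$ was already above it), so that outcome stays $x$. But both modified preferences top-rank $z_i$ and hence share the lower contour set $\obj\setminus\{z_i\}$ at $z_i$, so a further appeal to monotonicity forces the two profiles to share the same own-object for $i$, contradicting $x_i\neq z_i$. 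Nonbossiness is analogous: if $i$'s own object is unchanged at value $o$ across $\succ$ and $(\succ_i',\succ_{-i})$ with outcomes $x\ne y$, lifting $o$ to the top in each profile preserves the respective outcomes $x$ and $y$, while the two lifted profiles are monotone transforms of one another, forcing $x=y$.

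Next I would prove $(2)\Rightarrow(3)$ by walking from $\succ$ to the monotonic transform $\succ'$ one coordinate at a time, inducting that the outcome stays $x=f(\succ)$. At the step where agent $k$ switches from $\succ_k$ to $\succ_k'$, strategy-proofness with true preference $\succ_k$ gives $x_k\succsim_k f_k(\text{new})$, and strategy-proofness with true preference $\succ_k'$ gives $f_k(\text{new})\succsim_k' x_k$; combined with the containment $LC_{\succ_k'}(x_k)\supseteq LC_{\succ_k}(x_k)$ these force $f_k(\text{new})=x_k$. Nonbossiness then upgrades equality of the single coordinate to equality of the whole allocation, completing the induction and yielding $f(\succ')=x$. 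This establishes $(2)\iff(3)$.

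Finally I would connect $(1)$. The implication $(1)\Rightarrow(2)$ has an immediate half---group strategy-proofness restricted to singletons is strategy-proofness---and a substantive half, that group strategy-proofness precludes bossiness, which I would obtain by converting a bossiness instance into a profitable coalitional misreport. For $(2)\Rightarrow(1)$ I would switch the coalition's reports one agent at a time and combine strategy-proofness with nonbossiness, invoking the reduction already noted in the text (following Alva) that it suffices to defeat coalitions of size at most two. I expect this last direction to be the main obstacle: along a path from the truthful profile to the joint misreport the intermediate unilateral switches can lower welfare and then restore it, and deviations by subcoalitions need not themselves be improvements, so a naive induction on coalition size breaks down. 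The remedy is to argue from a minimal manipulating coalition (or directly for pairs) and to use nonbossiness in its global form---whenever a mover's own object is unchanged, the entire allocation is unchanged---to show no net strict gain can survive, which is precisely where the private-values structure of $f$ does the work.
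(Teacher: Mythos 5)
Three of your four implications are correct and carefully executed: the top-lifting argument for $(3)\Rightarrow(2)$ is sound (for both strategy-proofness and nonbossiness), the one-agent-at-a-time induction with strategy-proofness applied in both directions at the switching agent plus nonbossiness to propagate is the right proof of $(2)\Rightarrow(3)$, and converting a bossiness instance into a pair manipulation (at $\succ$ or at $(\succ_i',\succ_{-i})$, depending on which way the affected agent $j$'s preference points) correctly gives $(1)\Rightarrow(2)$. (For what it is worth, the paper does not prove this proposition at all; it imports it from Root and Ahn (2023), so the comparison here is with the standard argument.) The genuine gap is your closing direction $(2)\Rightarrow(1)$. You correctly diagnose that switching the coalition's actual misreports one agent at a time cannot be chained, but your proposed remedy --- a minimal manipulating coalition plus ``global'' nonbossiness --- is never turned into an argument, and I do not see how it could be: even for a pair, the four strategy-proofness inequalities relating $f(\succ)$, $f(\succ_i',\succ_{-i})$, $f(\succ_j',\succ_{-j})$, and $f(\succ_{\{i,j\}}',\succ_{-\{i,j\}})$ mix comparisons under different preferences and do not close. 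Moreover, invoking the reduction to pairs is circular in this setting: in the paper that reduction is Corollary \ref{cor: two agent}, which is \emph{deduced from} the very proposition you are proving, and Alva's own proof of it runs through exactly the step (strategy-proofness plus nonbossiness implies group strategy-proofness) that you are trying to establish.

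The fix is available with tools you already built: having proven $(2)\Rightarrow(3)$, close the cycle by proving $(3)\Rightarrow(1)$ with the coalition version of your own top-lifting trick. Suppose $M$ manipulates at $\succ$ via $\succ_M'$, with $x=f(\succ)$, $y=f(\succ_M',\succ_{-M})$, $y_j\succsim_j x_j$ for all $j\in M$ and $y_k\succ_k x_k$ for some $k\in M$. For each $j\in M$ let $\hat\succ_j$ move $y_j$ to the top of $\succ_j$, leaving the rest unchanged. Applying Maskin monotonicity from $(\succ_M',\succ_{-M})$ to $(\hat\succ_M,\succ_{-M})$ gives $f(\hat\succ_M,\succ_{-M})=y$, since each lower contour set at $y_j$ only expands and agents outside $M$ are untouched. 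Applying it from $\succ$ to $(\hat\succ_M,\succ_{-M})$ gives $f(\hat\succ_M,\succ_{-M})=x$, because $y_j\succsim_j x_j$ implies $LC_{\hat\succ_j}(x_j)\supseteq LC_{\succ_j}(x_j)$: if $y_j\succ_j x_j$ nothing below $x_j$ moves, and if $y_j=x_j$ the lower contour set becomes all of $\obj\setminus\{x_j\}$. Hence $x=y$, contradicting $y_k\succ_k x_k$. This yields $(1)\Leftrightarrow(2)\Leftrightarrow(3)$ with no appeal to any external reduction.
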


Note that a corollary of this fact is that it is necessary and sufficient to rule out misreports by coalitions of size at most two.

\begin{corollary}\label{cor: two agent}
    A mechanism $f$ is group strategy-proof if and only if it is individually strategy-proof and there is no profile $\succ$, pair of agents $i,j$ and alternative $\succ_{\{i,j\}}'$ such that  
    \begin{enumerate}
    \item $f_{k}(\succ'_{\{i,j\}},\succ_{-})\succsim_{j}f_{j}(\succ) \text{  for all  }k\in \{i,j\} $ and
    \item $f_{k}(\succ'_{\{i,j\}},\succ_{-})\succ_{k}f_{k}(\succ)$ for at least one $k\in \{i,j\}$.
\end{enumerate}
\end{corollary}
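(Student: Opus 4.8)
The plan is to lean entirely on Proposition~\ref{GSP equivalences}, which states that group strategy-proofness is equivalent to the conjunction of individual strategy-proofness and nonbossiness. The forward implication is immediate: if $f$ is group strategy-proof, then no coalition of any size can profitably deviate, so in particular neither singletons nor pairs can, which is exactly individual strategy-proofness together with the stated pairwise condition. Hence all of the content lies in the converse.

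For the converse I would assume that $f$ is individually strategy-proof and that no pair $\{i,j\}$ admits a profitable joint deviation, and then show that $f$ is nonbossy; the conclusion then follows from Proposition~\ref{GSP equivalences}. I would argue by contraposition. Suppose $f$ is bossy, so there exist a profile $\succ$, an agent $i$, and a report $\succ_i'$ with $f_i(\succ_i',\succ_{-i}) = f_i(\succ)$ yet $f(\succ_i',\succ_{-i}) \neq f(\succ)$. Since the two allocations differ while $i$'s own component agrees, there is some $j \neq i$ with $f_j(\succ_i',\succ_{-i}) \neq f_j(\succ)$.

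The key step is then a case split on how $j$ strictly ranks these two distinct objects. If $f_j(\succ_i',\succ_{-i}) \succ_j f_j(\succ)$, then at the true profile $\succ$ the pair $\{i,j\}$ deviates to the report $(\succ_i',\succ_j)$ (only $i$ changes): agent $i$'s allocation is unchanged, hence weakly preferred, while $j$ strictly gains, contradicting the pairwise condition. If instead $f_j(\succ) \succ_j f_j(\succ_i',\succ_{-i})$, then I would take the true profile to be $(\succ_i',\succ_{-i})$ and let the pair deviate to $(\succ_i,\succ_j)$, i.e.\ only $i$ reverts to $\succ_i$; by bossiness agent $i$'s allocation is again unchanged and weakly preferred, while agent $j$ strictly gains since $f_j(\succ) \succ_j f_j(\succ_i',\succ_{-i})$. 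Either way a pair profitably deviates, contradicting the hypothesis, so $f$ is nonbossy and therefore group strategy-proof.

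The main obstacle is the bookkeeping in the second case: one must treat $(\succ_i',\succ_{-i})$ rather than $\succ$ as the true profile so that $j$'s strict preference points in the beneficial direction, and one must note that a pairwise deviation in which only one coordinate actually changes is still admissible under the corollary (taking $\succ'_{\{i,j\}}$ to agree with the truth on $j$). Once it is clear which profile plays the role of truth in each case, both reduce to a single-coordinate report change whose welfare effect is pinned down by indifference of $i$ (from bossiness) and the strict preference of $j$.
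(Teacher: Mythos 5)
Your proposal is correct and follows essentially the same route as the paper: both directions reduce to Proposition~\ref{GSP equivalences}, with the converse established by showing that the no-pairwise-deviation condition forces nonbossiness. The paper merely asserts this last implication, whereas you supply the case analysis (including the correct bookkeeping in the second case, where $(\succ_i',\succ_{-i})$ plays the role of the true profile), so your write-up is a faithful, fully detailed version of the paper's argument.
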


\begin{proof}
    If $f$ is group strategy-proof, it is individually strategy-proof and no such pair can exist. If it is individually strategy-proof and no such pair can exist, the mechanism is non-bossy so by Proposition \ref{GSP equivalences}, the mechanism is group strategy-proof.
\end{proof}

\subsection{Proof of Proposition \ref{nec suff conditions}}
 Suppose that a feasible mechanism $f$ satisfies the fixed compromiser and compromiser invariance properties. Let $C$ be the image of $f$. Define $\alpha:\alloc \rightarrow 2^N$ by setting $\alpha(x)=\emptyset$ for any $x\in C$ and for any $y\in \bar{C}$, let $\alpha(y)=\{i : f(\succ)(i)\neq y_i \text{ for all } \succ \,\in P^{x}\}$. By the fixed compromiser condition, $\alpha(y)$ is nonempty. We will show that $f(\succ)=LP_{\alpha}(\succ)$ for all $\succ$.  Fix a profile $\succ$. Construct a sequence of preference profiles as follows. First, set $\succ^0=\succ$. If $\tau_1(\succ^0)\in C$, stop. Otherwise, for each subsequent step $t\geq 1$, let $\succ^{t}$ be the profile derived from $\succ^{t-1}$ by pushing $\tau_1(\succ^{t-1})(i)$ to the bottom of $i$'s preference for all $i\in \alpha(\tau_1(\succ^{t-1}))$. If the resulting profile $\succ^t$ is such that $\tau_1(\succ^t)\in C$, stop. Otherwise, proceed to the next step. This results in a sequence of profiles $\succ^0,\dots, \succ^m$. By compromiser invariance $f(\succ^0)=\cdots = f(\succ^m)$. By the fixed compromiser condition, $f(\succ^m)=\tau_1(\succ^m)$. Furthermore, the sequence $\tau_1(\succ^0),\dots, \tau_1(\succ^m)$ is exactly the sequence of allocations considered by the local priority algorithm for $\alpha$.

 Going the other direction, it is immediate that local priority mechanisms satisfy the fixed compromiser condition. To establish compromiser invariance,  fix an implementable $\alpha$ and the associated local priority mechanism $LP_{\alpha}$. Given $\succ$ and $\succ'$ as in the compromiser invariance condition, if $x^0,x^1,\dots, x^m$ is the sequence of allocations considered in the local priority algorithm under $\alpha$ at $\succ$, $x^1,\dots, x^m$ is the sequence considered in the local priority algorithm under $\alpha$ at $\succ'$.\qed

\subsection{Proof of Proposition \ref{proposition: TTC} }\label{appendix: TTC}
Fix an initial allocation $\mu_0$.
By Proposition \ref{nec suff conditions}, it is sufficient to establish the fixed compromiser and compromiser invariance conditions. 

Fix a vector of top-choices $\mu$. If $\mu$ is feasible (i.e. all agents top-rank a different object), then in the first step of the TTC algorithm for any profile in $P^{\mu}$, all agents are involved in a cycle so that $TTC(\succ)=\mu$ as required. Now suppose that $\mu$ is infeasible so that at least two agents top-rank the same object. Pick any agent $i$ such that no agent top-ranks their initial endowment. For each such agent form a chain $i_0, i_1, \dots, i_r$ so that (a) $i_0=i$, (b) for every $l<r$, $i_l$ top-ranks the endowment of $i_{l+1}$ and (c) $i_r$ top ranks $i_p$ for $p<r$. This creates a disjoint collection of chains which includes all agents. The set of agents $i_{p-1}$ who top-rank the same object as the agent at the end of the chain are fixed compromisers since in the first step of the TTC algorithm, this object will leave with $i_p$.

Now, we want to establish compromiser invariance, or that $f(\succ) = f(\succ_M', \succ_{-M})$ for the compromisers identified in the previous paragraph where $\succ_i'$ bottom-ranks $\tau_1(\succ_i)$ and ranks everything other object the same for all $i\in M$. Let $\nu=f(\succ_{M}',\succ_{-M})$ and $\mu=f(\succ)$. Since TTC implements the unique core allocation, we need to show that $\mu$ is still a core allocation at the profile $(\succ_{M}',\succ_{-M})$. Suppose, by way of contradiction, that there is a cycle $i_1,\dots, i_r$ of agents such that each for each $l<r$, agent $i_l$ prefers prefers the object $\mu(i_{l+1})$ to $\mu(i_l)$ and $i_r$ prefers $\mu(i_1)$ to $\mu(i_r)$ at the profile $(\succ_{M}',\succ_{-M})$. This cycle must include an agent $i$ since $\mu$ is core under $\succ$ and the agents $M$ are the only ones whose preferences have changed. Furthermore, all trades executed under the first step of TTC under $\succ$ are also executed in the first step of TTC under $(\succ_{M}',\succ_{-M})$. As a consequence, for any agent $i\in M$ we have that $\nu(i)\neq \tau_1(\succ_i)$. Then the cycle $i_1,\dots, i_r$ is also a cycle of $\nu$ which is core, establishing the desired contradiction.
\qed

\subsection{Example of a Non-unique Local Compromiser Assignment}\label{appendix: non-uniqueness example}

\begin{example}
    The three panels in Figure \ref{local priority nonunique} correspond to different two-agent local compromiser assignments. In panels (II) and (III), 1 or 2 is listed as the sole compromiser at the allocation $(a,a)$, despite the fact that the other agent must compromise next no matter what the compromiser chooses. In panel (I), both agents are asked to compromise immediately, and will arrive at the same outcome.
    \begin{figure}[h]
    \centering
    \includegraphics[scale=0.45]{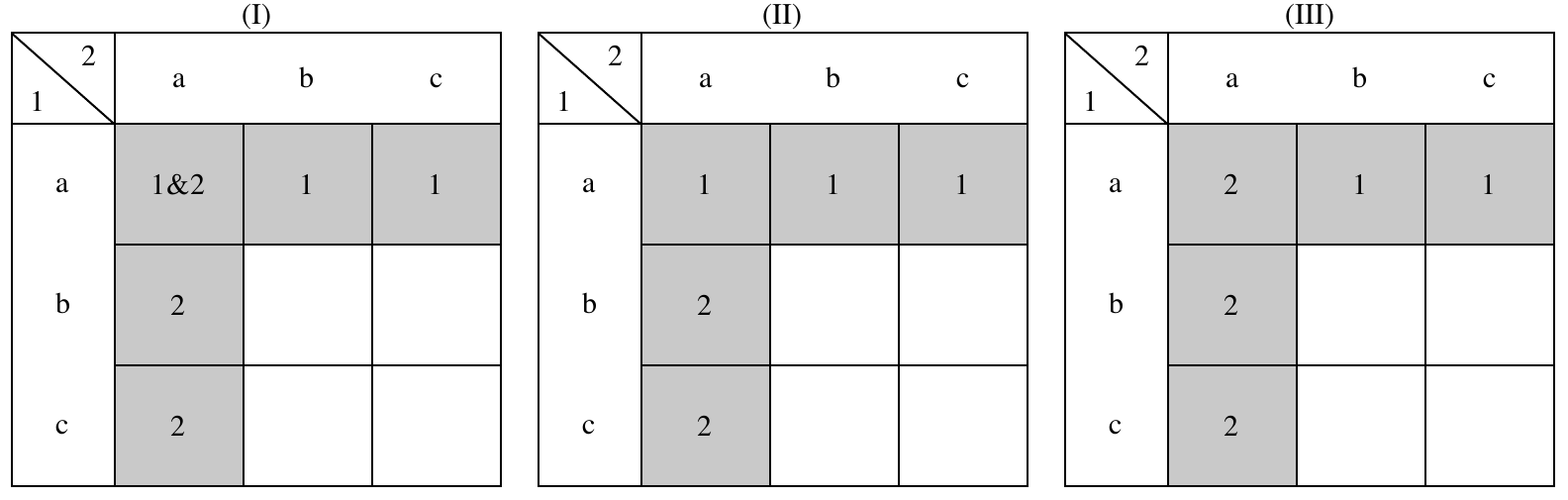}
    \caption{Three different local constraint assignments which induce the same mechanism.}
    \label{local priority nonunique}
    \end{figure}
\end{example}

\subsection{Proof of Proposition \ref{local priority PE}}

Fix some constraint $C$ and a local compromiser assignment $\alpha$ such that $LP_{\alpha}$ is group strategy-proof. By construction, any local priority mechanism's image is exactly $C$. Suppose that there is some profile $\succ$ such that $LP_{\alpha}(\succ)=\nu$ but that $\mu\in C$ is a Pareto improvement. Since $\mu$ is in the image of $LP_{\alpha}(\succ)$ there is some $\succ'$ such that $LP_{\alpha}(\succ')=\mu$. However, this gives a violation of group strategy-proofness. The mechanism shown in Figure \ref{fig: efficient, not gsp} is easily seen to be Pareto efficient. Consider the profile where all agents rank $a$ above $b$ above $c$. This mechanism results in the outcome $(b,b,a)$. However, this mechanism is bossy since if $2$ announces $b\succ_2' c \succ_2' a$, $2$ still gets $b$ but $1$ gets $a$.
\begin{figure}[h]
\centering
\includegraphics[scale=0.45]{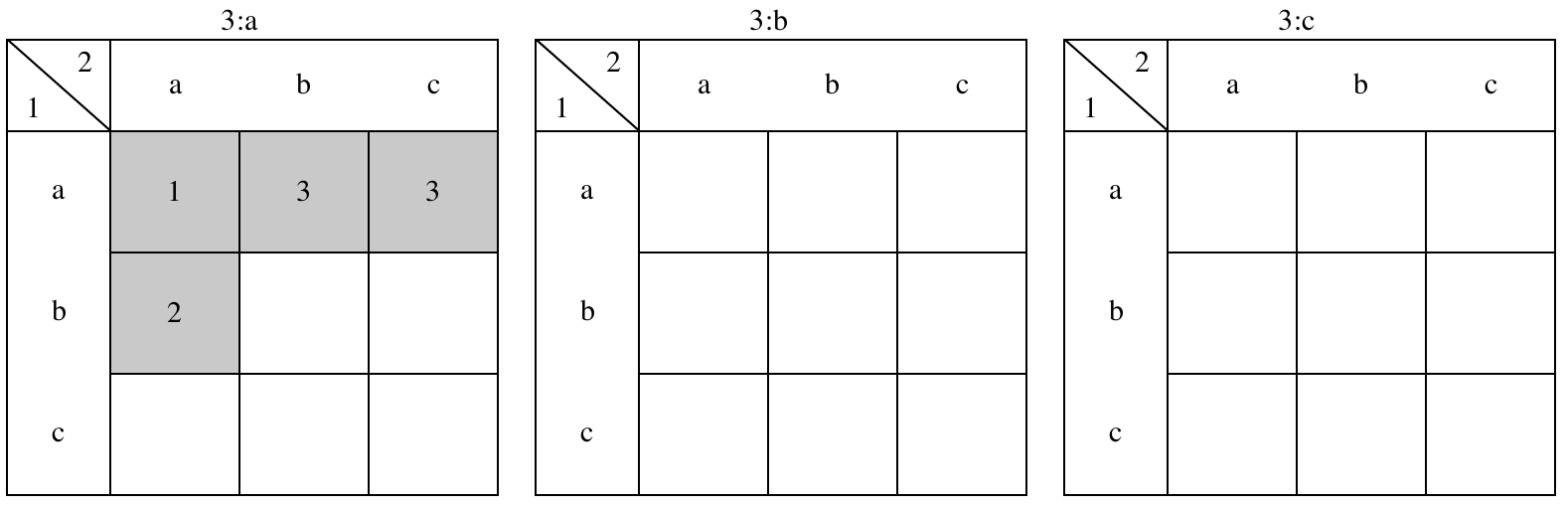}
\caption{An inefficient local priority mechanism}
\label{fig: efficient, not gsp}
\end{figure}
\qed

\subsection{Proof of Proposition \ref{local compromiser closure}}

Let $\alpha$ and $\alpha'$ induce $f$ which is group strategy-proof. Let $\succ$ be an arbitrary preference profile and set $\succ^{0}=\succ$. Iteratively define the sequence $\succ^{0},\succ^{1},\dots, \succ^{p}$ so long as $\tau_1(\succ^{n})\notin C$ by setting $\succ^{l+1}_{i}=\succ^{l}_{i}$ for all $i\notin \alpha(\tau_1(\succ^{l}))\cup \alpha'(\tau_1(\succ^{l}))$ and for all other $j$, let $\succ^{l+1}_{j}$ be identical to $\succ^{l}_{j}$ except $\tau_1(\succ_{j}^{l})$, is sent to the bottom of their list. At each step, we have $f_{j}(\succ^{l})\neq \tau_1(\succ^{l}_{j})$ for all $j\in \alpha(\tau_1(\succ^{l}))\cup \alpha'(\tau_1(\succ^{l}))$ so Maskin monotonicity implies that $f(\succ^{l})=f(\succ^{l+1})$ for each $l<p$. However the sequence $\tau_1(\succ^{l})$ is precisely the sequence of allocations considered under the local priority algorithm under $\alpha\cup \alpha'$. The algorithm ends at the first feasible assignment, and since $f(\succ^{n})$ is unchanged throughout the process, we get that $f(\succ)=f(\succ^{0})=f(\succ^{1})=\dots=f(\succ^{p})=LP_{\alpha\cup \alpha'}(\succ)$ which gives the result. 
\qed

\subsection{Proof of Proposition \ref{prop: forward consistency}}

Fix a constraint $C$ and let $\alpha$ be implementable and forward consistent. Let $\alpha'\subset \alpha$ be a local compromiser assignment such that $LP_{\alpha'}=LP_{\alpha}$. Fix an infeasible allocation $x$ such that $\vert\alpha'(x)\vert\geq 2$ and some $i\in \alpha'(x)$. Let $\alpha''(x)=\alpha'(x)-\{i\}$ and otherwise set $\alpha''=\alpha'$. We will show that $LP_{\alpha''}=LP_{\alpha}$. This will prove the proposition since we can then remove agents one at a time to go from $\alpha$ to any pointwise subset.

For any profile $\succ$ under which the local priority algorithm under $\alpha'$ never considers $x$, the local priority algorithm under $\alpha''$ also never considers $x$ so we have $LP_{\alpha'}(\succ)=LP_{\alpha''}(\succ)$ and therefore
$LP_{\alpha''}(\succ)=LP_{\alpha}(\succ)$. If, instead it does consider $x$, let $\succ'$ be the profile such that, $$x_{j}\succ'_{j} LC_{\succ_{j}}(x_{j})\succ_{j}' UC_{\succ_{j}}(x_{j})$$ where the ranking within groups is identical to $\succ_{j}$ for all $j$. Finally, let $\succ''$ be the profile derived from $\succ'$ where all agents in $\alpha'(x)-\{i\}$ put their top choice to the bottom of their lists.

The following chain of equalities will prove the result:
$$LP_{\alpha}(\succ)\stackrel{\rm A}{=}LP_{\alpha'}(\succ)\stackrel{\rm B}{=}LP_{\alpha'}(\succ')\stackrel{\rm C}{=}LP_{\alpha}(\succ')\stackrel{\rm D}{=}LP_{\alpha}(\succ'')\stackrel{\rm E}{=}LP_{\alpha'}(\succ'')\stackrel{\rm F}{=}LP_{\alpha''}(\succ'')\stackrel{\rm G}{=}LP_{\alpha''}(\succ')\stackrel{\rm H}{=}LP_{\alpha''}(\succ)$$

\noindent The equalities (A), (C) and (E) are simply a consequence of the fact that $LP_{\alpha}=LP_{\alpha'}$ by assumption.

\noindent The equalities (B) and (H) are due to the fact that the sequences of allocations considered by the local priority algorithm at $\succ'$ under $\alpha'$ and $\alpha''$ are simply truncations of those considered at $\succ$.

\noindent The equality (G) is a direct consequence of compromiser invariance.

\noindent The equality (F) is due to the fact that the local priority algorithm under $\alpha''$ at $\succ''$ does not consider $x$ so since $\alpha'=\alpha''$ outside of $x$, we get the result. 

The only remaining equality is (D), i.e. that $LP_{\alpha}(\succ')=LP_{\alpha}(\succ'')$  which we now prove. Let $y^1,\dots y^q$ be the sequence of allocations considered by the local priority algorithm under $\alpha$ at $\succ'$ and let  $z^1,\dots z^p$ be the sequence of allocations considered by the local priority algorithm under $\alpha$ at $\succ''$. For any allocation $w$, let $$\Gamma(w)=\left(\vert UCS_{\succ_i'} w_i \vert\right)_{i\in N}\in \mathbb{R}^N$$ be the vector which counts the size of the upper contour sets at $\succ'$ for the allocation $w$. By our assumptions $0=\Gamma(y^1)< \Gamma(z^1) < \Gamma(y^2)$.\footnote{For two vectors $x$ and $y$, we write $x\geq y$ if $x_i\geq y_i$ for all $i$ and $x>y$ if $x_i\geq y_i$ for all $i$ and for at least one $j$ $x_j>y_j$.}

First, we claim that if there is an integer $r$ such that $y^r=z^r$, then $p=q$ and $y^q=z^p$. Similarly, if there is an $r$ such that  $y^{r+1}=z^r$ then $q=p+1$ and $y^q=z^p$. In either case, $r\geq 2$. This means that for every $s\geq r$, $y^s_i\neq x_i$ for all $i\in \alpha(x)\supset \alpha''(x)$. However, only the rankings of the $x_i$ change between $\succ'$ and $\succ''$ for the agents in $\alpha''(x)$, and all these agents rank these objects last in $\succ''$. Thus we have, in the first case that $y^{r+1}=z^{r+1},\dots y^q=z^p$ and this last allocation is feasible. In the second case, we have $y^{r+1}=z^r, y^{r+2}=z^{r+1}, \dots, y^{p+1}=z^q$ and this last allocation is feasible. 

Now $\{i: y^1_i\neq z^1_i\}=\alpha''(x)\subsetneq \alpha(x) = \{i: y^2_i\neq y^1_i\}$ so by forward consistency, $\alpha(z^1)\supset  \{i: y^2_i\neq y^1_i\} - \{i: y^1_i\neq z^1_i\}$ and $\Gamma(y^2)\leq \Gamma(z^2)$. If $\Gamma(y^2)= \Gamma(z^2)$ we are done by the previous claim. Otherwise, $\Gamma(y^2)< \Gamma(z^2)$ so that $\alpha(z^1)=\{i: z^2_i\neq z^1_i\}\supsetneq \{i: y^2_i\neq z^1_i\}$ and by forward consistency, $\alpha(y^2)\supset \alpha(z^1)=\{i: z^2_i\neq z^1_i\}- \{i: y^2_i\neq z^1_i\}$ and $\Gamma(z^2)\leq \Gamma(y^3)$. Again, if $\Gamma(z^2)= \Gamma(y^3)$ we are done by the previous claim. Continuing in this way we get a sequence $$\Gamma(y^1)< \Gamma(z^1) < \Gamma(y^2) < \Gamma(z^2) < \Gamma(y^3) < \Gamma(z^3) \cdots .$$
However, for any $l$ such that either $\Gamma(y^l)< \Gamma(z^l)$ or $\Gamma(z^l)<\Gamma(y^{l+1})$ we have that $l\neq p$ and $\Gamma(y^l)< \Gamma(z^l)\leq \Gamma(y^{l+1})$ or $l\neq q$ and $\Gamma(z^l)<\Gamma(y^{l+1})\leq\Gamma(z^{l+1})$ respectively. Since these sequences terminate there is some $r$ such that either $\Gamma(y^r)=\Gamma(z^r)$ or $\Gamma(z^r)=\Gamma(y^{r+1})$ so that $LP_{\alpha}(\succ')=LP_{\alpha}(\succ'')$ by the previous claim. \qed

\subsection{Proof of Theorem \ref{local priority Theorem}}\label{appendix: consistency theorem}
We first show that forward consistency is necessary for a local priority mechanism to be group strategy-proof. Then we prove the theorem. Finally, we give an example to show that backward consistency is not necessary.

Fix a constraint $C$ and let $f$ be a group strategy-proof local compromiser assignment and let $\alpha$ be the union of the local compromiser assignments which implement $f$. By Proposition \ref{local compromiser closure}, $f=LP_{\alpha}$. It will be useful to have the following lemma. 

\begin{lemma}
    Suppose that $f$ is a group strategy-proof local priority mechanism and $\alpha$ is the union of the local compromiser assignments which implement it. Then for any $y$, if $i\notin \alpha(y)$ there is some $\succ\,\in P^y$ such that $f(\succ)_i=y_i$.
\end{lemma}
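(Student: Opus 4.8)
The plan is to prove the contrapositive and to recognize the set in question as exactly the ``canonical'' compromiser set built in the proof of Proposition \ref{nec suff conditions}. First I would dispose of the feasible case: if $y$ is feasible then $\alpha(y)=\emptyset$, so the hypothesis $i\notin\alpha(y)$ holds for every $i$; and for any $\succ\in P^y$ we have $\tau_1(\succ)=y$, so unanimity gives $f(\succ)=y$ and in particular $f(\succ)_i=y_i$. Thus the conclusion holds for every $i$, and I may assume from here on that $y$ is infeasible.

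The heart of the argument is to compare $\alpha$ with the specific assignment $\hat\alpha$ from the proof of Proposition \ref{nec suff conditions}, defined by $\hat\alpha(x)=\emptyset$ on feasible $x$ and
\[\hat\alpha(x)=\{\,j : f(\succ)_j\neq x_j \text{ for all } \succ\in P^x\,\}\]
on infeasible $x$. Since $f$ is a local priority mechanism it satisfies the fixed compromiser condition (Proposition \ref{nec suff conditions}), which says precisely that $\hat\alpha(x)$ is nonempty for every infeasible $x$; hence $\hat\alpha$ is a bona fide local compromiser assignment. The constructive direction of the proof of Proposition \ref{nec suff conditions} shows that this $\hat\alpha$ induces $f$, i.e.\ $LP_{\hat\alpha}=f$, so $\hat\alpha$ belongs to the set $A$ of local compromiser assignments inducing $f$. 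Because $\alpha$ is the pointwise union of all members of $A$, we obtain $\hat\alpha(y)\subseteq\alpha(y)$.

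The conclusion is then immediate by contraposition. Suppose $i\notin\alpha(y)$. Then $i\notin\hat\alpha(y)$, and by the definition of $\hat\alpha$ this means it is false that $f(\succ)_i\neq y_i$ for all $\succ\in P^y$; that is, some profile $\succ\in P^y$ satisfies $f(\succ)_i=y_i$, as required. The only point needing care, and the main (albeit modest) obstacle, is justifying $\hat\alpha\in A$, i.e.\ that the canonical assignment genuinely reproduces $f$. Rather than re-prove this, I would quote it directly from Proposition \ref{nec suff conditions}, where compromiser invariance is used to push each agent's top object to the bottom along a sequence of profiles and the fixed compromiser condition identifies the terminal (feasible) allocation. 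Once $\hat\alpha\in A$ is in hand, the domination $\hat\alpha\subseteq\alpha$, and hence the lemma, is forced by the maximality of $\alpha=\bigcup A$.
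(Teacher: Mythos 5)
Your proof is correct, but it takes a genuinely different route from the paper's. The paper argues by contradiction: assuming no $\succ\,\in P^y$ gives $i$ the object $y_i$, it modifies $\alpha$ only at $y$ by setting $\alpha'(y)=\{i\}$, and then establishes $LP_{\alpha'}=LP_{\alpha}=f$ through a chain of equalities that repeatedly invokes Maskin monotonicity (i.e.\ group strategy-proofness, via Proposition \ref{GSP equivalences}), compromiser invariance, and truncation of the algorithm's allocation sequence; this puts $\alpha'$ in the set $A$ of implementing assignments and contradicts $i\notin \alpha(y)=\bigl(\bigcup A\bigr)(y)$. You instead recognize the lemma as exactly the pointwise containment $\hat\alpha\subseteq\alpha$, where $\hat\alpha$ is the canonical fixed-compromiser assignment built in the constructive direction of Proposition \ref{nec suff conditions}: since $f$ is a local priority mechanism it satisfies unanimity, fixed compromiser, and compromiser invariance (the ``only if'' direction), so that construction applies and yields $\hat\alpha\in A$, whence $\hat\alpha\subseteq\bigcup A=\alpha$ and the conclusion is just the definition of $\hat\alpha$ read in contrapositive. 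Your argument is shorter and, notably, never uses group strategy-proofness, so it actually proves the lemma for \emph{every} local priority mechanism with $\alpha$ defined as the union of implementing assignments; the paper's argument needs GSP but stays self-contained at the level of the abstract $\alpha$ (and GSP is in any case maintained in the surrounding development, where Proposition \ref{local compromiser closure} is used to ensure $\alpha$ itself implements $f$). The one thing to be careful about, which you flag correctly, is that you are citing the \emph{proof} of Proposition \ref{nec suff conditions} rather than its statement: the statement alone does not assert that the specific assignment $\hat\alpha$ induces $f$, so your write-up should either quote that construction explicitly (including the minor observation that the image of $f$ is exactly $C$, so $\hat\alpha$ is an assignment for the same constraint) or restate it as a standalone fact.
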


\begin{proof}
    Suppose that there is no such $\succ$. Let $\alpha'(y)=\{i\}$ and otherwise, let $\alpha'$ be identical to $\alpha$. We claim that $\alpha'$ is implementable and $LP_{\alpha}=LP_{\alpha'}$ which contradicts $\alpha$ being the union of all local compromiser assignments which implement $f$. Of course, the local priority algorithm for $\alpha$ and $\alpha'$ can only yield different outcomes for preference profiles $\succ$ where the local priority algorithm lands at $y$ at some point (the set of such profiles is the same for $\alpha$ and $\alpha'$). 
     Any such preference profile must satisfy $y_{j}\succsim_{j}LP_{\alpha}(\succ)_{j}$ for all $j$. 
     Given  such a profile $\succ$, define $\succ'$ so all agents $j\neq i$ have the preference $\succ'_{j}$ defined by $$y_{j}\succ'_{j} LC_{\succ_{j}}(y_{j})\succ_{j}' UC_{\succ_{j}}(y_{j})$$ where the ranking within sets is identical to $\succ_{j}$. By Maskin monotonicity, $f(\succ)=f(\succ')$. Furthermore, by assumption, $f(\succ)_i = f(\succ')_i\neq y_i$. Now, let $\succ_{i}''$ be such that $$ LC_{\succ_{i}}(y_{i})\succ_{i}'' UC_{\succ_{i}}(y_{i})\succ_{i}'' y_{i}$$ where again the ranking within the three sets is determined by $\succ_{i}$. 
     Maskin monotonicity implies that $LP_{\alpha}(\succ)=LP_{\alpha}(\succ')=LP_{\alpha}(\succ_i'',\succ_{-i}')$. Furthermore, $LP_{\alpha'}(\succ)=LP_{\alpha'}(\succ')$ since the sequence of allocations considered under $\alpha'$ at $\succ'$ is a truncation of those considered at $\succ$. Now, $LP_{\alpha'}(\succ')=LP_{\alpha'}(\succ_i'',\succ_{-i}')$ by compromiser invariance. Finally, $LP_{\alpha'}(\succ_i'',\succ_{-i}')=LP{\alpha}(\succ_i'',\succ_{-i}')$ since the local priority algorithms under $\alpha$ and $\alpha'$ at $(\succ_i'',\succ_{-i}')$ never consider $y$, the only allocation where they differ. Putting this together, we have $$ LP_{\alpha}(\succ)=LP_{\alpha}(\succ')=LP_{\alpha}(\succ_i'',\succ'_{-i})=LP_{\alpha'}(\succ_i'',\succ'_{-i})=LP_{\alpha'}(\succ')=LP_{\alpha'}(\succ).$$ Since $\succ$ was an arbitrary profile where the local priority algorithm under $\alpha$ lands on $y$, we have that $\alpha'$ implements $f$, contradicting $\alpha$ as the union of all such local compromiser assignments. 
\end{proof}

\paragraph{Necessity of forward consistency:}
Fix a group strategy-proof local priority mechanism $f$ for some constraint $C$ and let $\alpha$ be the union of all local compromiser assignments which implement $f$.  Fix an infeasible $x$ such that there is a nonempty subset $A\subsetneq\alpha(x)$. Let $y$ be an allocation where $y_j=x_j$ for all $j\notin A$. If $y$ were feasible, then $LP_{\alpha}$ would be inefficient since for any profile $\succ\,\in P^x$ we would have that $y$ strictly Pareto improves $LP_{\alpha}(\succ)$. However, by Proposition \ref{local priority PE}, this is not possible so $y\notin C$. Suppose that some $i\in \alpha(x)-A$ is not in $\alpha(y)$. Then by the previous lemma, there is some profile $\succ\, \in P^y$ such that $f(\succ)_i=y_i$. Let $\succ'$ be identical to $\succ$ except that all agents $j$ move $x_j$ to the top of their ranking. By Maskin monotonicity, $f(\succ)=f(\succ')$, however $f(\succ')_i\neq y_i$, a contradiction. Therefore, $i\in \alpha(y)$.

\begin{definition}
Let $f:\pp\rightarrow C$ and let $M$ be a proper subset of $N$. Let $\succ_{-M}$ be a profile of preferences of agents not in $M$. The \textbf{marginal mechanism} of $f$ holding fixed $\succ_{-M}$ is denoted $f_{\succ_{-M}} :P^{M} \rightarrow C^{M}$ and defined by $$\succ_M \enspace \mapsto \, \left[f_{j}(\succ_M, \succ_{-M}) \right]_{j\in M}.$$
\end{definition}

We now turn to sufficiency. We restate the theorem for the readers convenience.

\newtheorem*{T1}{Theorem~\ref{local priority Theorem}}
\begin{T1} 
Fix a constraint $C$ and an implementable local compromiser assignment $\alpha$. If $\alpha$ is consistent then $LP_{\alpha}$ is group strategy-proof.
\end{T1}

We prove the result in four steps. Let $\alpha$ be an implementable and consistent local compromiser assignment for the constraint $C$. First, by Proposition \ref{prop: forward consistency} for any other local compromiser assignment $\alpha'\subset \alpha$ we have $LP_{\alpha}=LP_{\alpha'}$. Next, we show that this, along with forward consistency imply that the marginal mechanisms holding a single agents' preference fixed are all local priority mechanisms for some constraint. Third, we show that these local priority mechanisms also satisfy forward and backward consistency since $\alpha$ does. Finally, we establish the result by showing that forward and backward consistency imply the group strategy-proofness of all two-agent marginal mechanisms. Then Corollary \ref{cor: two agent} gives the result.

Let $h$ be the marginal mechanism holding some agent $k$ at a fixed $\succ_{k}$. We first want to show that $h$ is a local priority mechanism. To do this, for every $x\in \bar{C}$ define 
\[ \alpha'(x)=
\begin{cases}
\alpha(x) & \text{ if }k\notin \alpha(x) \\
\{k\} & \text{ if }k\in \alpha(x) 
\end{cases}
\]
by Proposition \ref{prop: forward consistency}, $LP_{\alpha'}=LP_{\alpha}$. Now, for any suballocation $z$ of the agents other than $k$, consider the sequence $(\tau_{1}(\succ_{k}),z), (\tau_{2}(\succ_{k}),z),\dots$. If there is an $l$ such that $(\tau_{l}(\succ_{k}),z)$ is feasible and for all $m<l$, $\alpha'((\tau_{l}(\succ_{m}),z))=\{k\}$ then set $\alpha^*(z)=\emptyset$. Otherwise, set $\alpha^*(z)=\alpha'(y)$ where $y$ is the first allocation in the sequence $(\tau_{1}(\succ_{k}),z), (\tau_{2}(\succ_{k}),z),\dots$ such that $\alpha'(y)\neq \{k\}$. Let $C^{*}=\{z : \alpha^*(z)=\emptyset\}$.

\begin{lemma}
    $h=LP_{\alpha^*}$.
\end{lemma}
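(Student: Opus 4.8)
The plan is to compare, step by step, the run of the local priority algorithm under $\alpha'$ at the full profile $\succ=(\succ_M,\succ_k)$ with the run under $\alpha^*$ at $\succ_M$; since $LP_{\alpha'}=LP_{\alpha}$ by Proposition \ref{prop: forward consistency}, matching the $\alpha'$-run with the $\alpha^*$-run suffices. First I would record the structural feature that makes $\alpha'$ tractable: by its definition, at every infeasible allocation the compromiser set is \emph{either} exactly $\{k\}$ \emph{or} a subset of $M$, never a mix. Consequently, along the sequence $x^0,\dots,x^T$ considered by $LP_{\alpha'}(\succ)$, each step is either a ``$k$-step'' (only $k$ moves, the $M$-coordinates are frozen) or an ``$M$-step'' (only agents in $M$ move, $k$ is frozen). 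The $M$-projection $x^0_M,\dots,x^T_M$ is therefore piecewise constant, and I denote its distinct successive values $w^0,w^1,\dots,w^S$. The goal reduces to showing that $w^0,\dots,w^S$ is exactly the sequence considered by $LP_{\alpha^*}(\succ_M)$ and that $x^T_M=w^S$, whence $h(\succ_M)=[LP_{\alpha}(\succ_M,\succ_k)]_M=x^T_M=w^S=LP_{\alpha^*}(\succ_M)$.

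I would prove this by induction on $r$, carrying the invariant that when the full run first reaches $M$-part $w^r$ (at step $t_r$), agent $k$ sits at $\tau_{p_r}(\succ_k)$ and, in the sequence $(\tau_1(\succ_k),w^r),(\tau_2(\succ_k),w^r),\dots$ used to define $\alpha^*(w^r)$, every entry before position $p_r$ is infeasible with compromiser set $\{k\}$, while position $p_r$ is the first entry that is either feasible or has $\alpha'\neq\{k\}$. The base case $r=0$ is immediate: $x^0=(\tau_1(\succ_k),\tau_1(\succ_M))$, so $w^0=\tau_1(\succ_M)$, $k$ starts at its top choice, and $p_0$ is by definition the first feasible-or-non-$\{k\}$ position of the $w^0$-sequence. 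Granting the invariant at $w^r$, the definition of $\alpha^*(w^r)$ reads off exactly $\alpha'(x^{t_r})=\alpha(x^{t_r})\subset M$ (using $\alpha'(x^{t_r})\neq\{k\}$), or declares $w^r\in C^*$ when position $p_r$ is feasible. Hence the $\alpha^*$-step at $w^r$ moves the same agents to the same objects as the $M$-step at $x^{t_r}$, producing $w^{r+1}$, and when $w^r\in C^*$ both runs terminate with $M$-part $w^r$.

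The crux, and the step I expect to be the main obstacle, is re-establishing the invariant at $w^{r+1}$. The difficulty is that in the full run $k$ enters $w^{r+1}$ already lowered to $\tau_{p_r}(\succ_k)$, whereas $\alpha^*(w^{r+1})$ inspects the $w^{r+1}$-sequence from $k$'s top choice; I must rule out that some position $j<p_r$ of the $w^{r+1}$-sequence is feasible or has a non-$k$ compromiser, which the full run would silently skip. To handle this I fix $j<p_r$ and invoke backward consistency with $i=k$. The $k$-steps exhibit a $k$-connection between $x:=(\tau_j(\succ_k),w^r)$ and $y:=x^{t_r}=(\tau_{p_r}(\succ_k),w^r)$: the connecting sequence lowers only $k$ one notch at a time through positions that are $\{k\}$-only (hence have $k\in\alpha$) by the invariant, keeps the $M$-coordinates fixed, and is acyclic. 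Taking $x'=(\tau_j(\succ_k),w^{r+1})$, I have $d(x,x')=\alpha(x^{t_r})=\alpha(y)$ and $k\notin d(x,x')$, so backward consistency yields $k\in\alpha(x')$; in particular $x'$ is infeasible with compromiser set $\{k\}$. Thus every position $j<p_r$ of the $w^{r+1}$-sequence is $\{k\}$-only, so its first feasible-or-non-$\{k\}$ position occurs at some $p_{r+1}\geq p_r$, and from $p_r$ onward this sequence is precisely the trajectory the full run traverses by $k$-steps out of $(\tau_{p_r}(\succ_k),w^{r+1})$. This reproduces the invariant at $w^{r+1}$. Since $\alpha$ is implementable the full run terminates, so the induction closes and the lemma follows.
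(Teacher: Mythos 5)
Your proof is correct, and its mathematical core coincides with the paper's: the only place backward consistency of $\alpha$ enters is exactly your crux step, where the $k$-connection formed by lowering $k$ within the fixed $M$-part $w^r$, together with $x'=(\tau_j(\succ_k),w^{r+1})$, $d(x,x')=\alpha^*(w^r)=\alpha(y)$ and $k\notin d(x,x')$, forces $k\in\alpha(\tau_j(\succ_k),w^{r+1})$; the paper applies backward consistency to precisely this configuration. Where you differ is the packaging. The paper never simulates a single run from start to finish: it verifies that the marginal mechanism $h$ satisfies three properties relative to $C^*$ and $\alpha^*$ --- unanimity, agreement of the fixed compromisers with $\alpha^*$, and compromiser invariance --- where the first two are immediate and the third is proved by the backward-consistency argument above, comparing the runs at $(\succ_k,\succ_{-k})$ and at the modified profile $(\succ_k,\succ_{-k}')$ in which the agents of $\alpha^*(z)$ bottom-rank their top choices. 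It then concludes by iterating such profile modifications, exactly as in the proof of Proposition \ref{nec suff conditions}: the top-choice vectors of the modified profiles trace out the $\alpha^*$-run while $h$ stays constant along them. Your version replaces that outer loop of profile modifications by a direct induction along the $M$-steps of one run, which is somewhat more self-contained (no modified profiles, no appeal to the characterization-style scheme) at the cost of carrying the matching invariant explicitly; the paper's route gets unanimity and the fixed-compromiser property ``for free'' from the definition of $\alpha^*$ and reuses existing machinery for the conclusion. One caveat on your write-up: the invariant as literally stated (``when the full run first reaches $w^r$, agent $k$ sits at $\tau_{p_r}(\succ_k)$'') is false at the moment of first arrival, since $k$ enters $w^{r+1}$ at $\tau_{p_r}(\succ_k)$ and may still have to descend to $p_{r+1}>p_r$; but your crux paragraph states and proves exactly the right statement (all positions below the entry point are infeasible with compromiser set $\{k\}$, and the run then descends from the entry point to the first feasible-or-non-$\{k\}$ position), so this is a wording slip rather than a gap.
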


\begin{proof}
    Fix a profile $\succ_{-k}$ and let $z=\tau_1(\succ_{-k})$. We will observe that the mechanism $h$ satisfies three properties:
    \begin{enumerate}
        \item $h$ exhibits unanimity with respect to $C^*$. That is, if $z\in C^*$ then $h(\succ_{-k})=z$
        \item The compromisers are consistent with $\alpha^*$. Specifically, if $z$ is not in $C^*$ then $h(\succ_{-k})_i\neq z_i$ for all $i\in \alpha^*(z)$.
        \item $h$ exhibits compromiser invariance. If $z$ is not in $C^*$ and $\succ_{-k}'$ is a profile where all agents in $\alpha^*(z)$ move their top choice to the bottom of their list, then $h(\succ_{-k})=h(\succ_{-k}')$.
    \end{enumerate}
    The first two facts follow immediately from the definition of $\alpha^*$ and given that $LP_{\alpha}=LP_{\alpha'}$. To prove the third part, we will need backward consistency. Granting that these three properties hold, the result follows from noticing that for any profile $\succ_{-k}$ we can construct a sequence of profiles $\succ^1_{-k},\succ^2_{-k}, \dots, \succ^p_{-k}$ such that $\succ^1_{-k}=\succ_{-k}$ and for any step $l$, $\succ^{l+1}_{-k}$ differs from $\succ_{-k}^l$ in that all agents from $\alpha^{*}(\tau_1(\succ_{-k}^l))$ put their top-ranked alternative to the bottom of their list. By properties $2$ and $3$ we have that $h(\succ^{l+1}_{-k})=h(\succ_{-k}^l)$ for all $l$ and the sequence $\tau_1(\succ^1_{-k}),\tau_1(\succ^2_{-k}), \dots, \tau_1(\succ^p_{-k})$ is exactly the sequence of allocations considered by the local priority algorithm under $\alpha^*$ at $\succ_{-k}$.

    It remains to show that the third property above holds. To do this, let $\succ_{-k}$ and $\succ_{-k}'$ be as described. The local priority algorithm under $\alpha'$ at the profile $(\succ_k,\succ_{-k})$ starts with $k$ compromising a finite number of times before landing on some $(\tau_r(\succ_k),z)$ where $\alpha'=\alpha^*$. For each $l<r$, $(\tau_l(\succ_k),z)$ and $(\tau_r(\succ_k),z)$ are $k$-connected. By backward consistency of $\alpha$, for each $l<r$, we have that $k\in \alpha(\tau_l(\succ_k),z')$. Hence the local priority algorithm under $\alpha'$ at the profile $(\succ_k,\succ_{-k}')$ starts by considering each $(\tau_l(\succ_k),z')$ where $k$ compromises, before eventually landing on $(\tau_r(\succ_k),z')$. From that point forward, the local priority algorithms under $\alpha'$ at $(\succ_k,\succ_{-k})$ and $(\succ_k,\succ_{-k}')$ are identical. That is, only the first $r$ steps differ between profiles $(\succ_k,\succ_{-k})$ and $(\succ_k,\succ_{-k}')$. This establishes the third property and the Lemma.
\end{proof}

\begin{lemma}
    $\alpha^*$ satisfies forward consistency.
\end{lemma}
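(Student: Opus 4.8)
The plan is to verify the forward-consistency inequality for $\alpha^*$ directly, using backward consistency of $\alpha$ to control the steps in which agent $k$ is still compromising and forward consistency of $\alpha$ to propagate the remaining compromisers. Throughout write $\tau_m$ for $\tau_m(\succ_k)$, and recall that for an infeasible suballocation $z$ over $M=N\setminus\{k\}$ we have $\alpha^*(z)=\alpha(y_z)$, where $y_z=(\tau_r,z)$ is the first allocation in the sequence $(\tau_1,z),(\tau_2,z),\dots$ at which $k$ is no longer the sole compromiser; by the preceding analysis $y_z$ is infeasible, $k\notin\alpha(y_z)$, and $\alpha^*(z)=\alpha(y_z)\subset M$. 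Fix $z\notin C^*$ and a suballocation $z'$ with $d(z,z')\subset\alpha^*(z)=:A$. Since $A\subset M$ we automatically have $k\notin d(z,z')$. The goal is to show $\alpha^*(z')\supset A-d(z,z')$.

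First I would pin down where $k$ stops compromising along the $z'$-sequence. For each $m<r$ the allocation $(\tau_m,z)$ is infeasible with $k\in\alpha$, and the chain $(\tau_m,z),(\tau_{m+1},z),\dots,(\tau_r,z)$ is acyclic (the objects $\tau_m,\dots,\tau_r$ are distinct and the $M$-coordinates are frozen at $z$) and changes only $k$ at each step, so $(\tau_m,z)$ and $y_z$ are $k$-connected. Applying backward consistency with $x=(\tau_m,z)$, $y=y_z$, and $x'=(\tau_m,z')$, using $d(x,x')=d(z,z')\subset\alpha(y_z)$ together with $k\notin d(z,z')$, gives $k\in\alpha((\tau_m,z'))$ for every $m<r$. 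Hence $k$ is still a compromiser at the first $r-1$ entries of the $z'$-sequence, so if $r'$ denotes the index of $y_{z'}$ then $r'\ge r$.

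Next I would propagate the surviving compromisers by forward consistency. The base step compares $y_z=(\tau_r,z)$ with $(\tau_r,z')$: since $d(y_z,(\tau_r,z'))=d(z,z')\subset\alpha(y_z)$, forward consistency yields $\alpha((\tau_r,z'))\supset\alpha(y_z)-d(z,z')=A-d(z,z')$. For the inductive step, for $r\le m<r'$ agent $k$ compromises at $(\tau_m,z')$, so $d((\tau_m,z'),(\tau_{m+1},z'))=\{k\}\subset\alpha((\tau_m,z'))$ and forward consistency gives $\alpha((\tau_{m+1},z'))\supset\alpha((\tau_m,z'))-\{k\}$. Because the agents of $A-d(z,z')$ never move in this sequence (only $k$ does) and $k\notin A$, the containment $A-d(z,z')\subset\alpha((\tau_m,z'))$ is preserved at every step up to $m=r'$, so $\alpha^*(z')=\alpha(y_{z'})=\alpha((\tau_{r'},z'))\supset A-d(z,z')$. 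The case $z'\in C^*$ is subsumed: there $y_{z'}$ is feasible and $\alpha(y_{z'})=\emptyset$, so the same chain of containments forces $A-d(z,z')=\emptyset$ and forward consistency holds vacuously.

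The step I expect to be the main obstacle is the possibility $r'>r$, i.e.\ that $k$ keeps compromising in the $z'$-sequence past the point where it stopped in the $z$-sequence; a naive argument hoping for $r'=r$ breaks down. The resolution is to notice that one need not match the stopping indices: the agents in $A-d(z,z')$ are frozen throughout both sequences, so each single-coordinate descent of $k$ merely reinstates them via forward consistency, and the induction carries the containment all the way to $y_{z'}$ no matter how far $k$ must descend. The only other points needing care are the acyclicity of the connecting chains (required to invoke backward consistency) and the bookkeeping that $k\notin d(z,z')$, both of which follow immediately from $A\subset M$ and the strictness of $\succ_k$.
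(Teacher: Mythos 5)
Your proof is correct and takes essentially the same route as the paper's: backward consistency of $\alpha$ (via the $k$-connected chains $(\tau_m(\succ_k),z),\dots,(\tau_r(\succ_k),z)$) shows $k$ remains a compromiser at $(\tau_l(\succ_k),z')$ for all $l<r$, forward consistency at index $r$ gives $\alpha((\tau_r(\succ_k),z'))\supset\alpha^*(z)-d(z,z')$, and repeated forward consistency along $k$'s further single-coordinate compromises carries this containment to $y_{z'}$. You are merely more explicit than the paper in verifying the $k$-connectedness hypotheses and in handling the case $z'\in C^*$, both of which the paper leaves implicit.
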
 

\begin{proof}
    Fix some $z$ and $z'$ such that $d(z,z')\subsetneq \alpha^*(z)$. Let $r$ be the first natural number such that $\alpha'(\tau_r(\succ_k),z)\neq \{k\}$. By definition, $\alpha^*(z)=\alpha'(\tau_r(\succ_k),z)=\alpha(\tau_r(\succ_k),z)$. By backward consistency of $\alpha$, for each $l< r$, we have that $\alpha'(\tau_l(\succ_k),z')=\{k\}$. By forward consistency of $\alpha$,
    \begin{equation*}
        \alpha(\tau_r(\succ_k),z')\supset \alpha(\tau_r(\succ_k),z)-d(z,z').
    \end{equation*} If $k\notin \alpha(\tau_r(\succ_k),z')$, then $\alpha(\tau_r(\succ_k),z') = \alpha'(\tau_r(\succ_k),z')=\alpha^*(z')$ and we are done. Otherwise, $k\in \alpha(\tau_r(\succ_k),z')$. In this case, we can let $k$ compromise and repeatedly apply forward consistency of $\alpha$ to conclude that $\alpha^*(z')\supset \alpha^*(z)-d(z,z')$ as desired.
\end{proof}

We now need to establish that $\alpha^*$ satisfies backward consistency. To that end, suppose that $z$ and $z'$ are $i$-connected in $\alpha^*$. Let $r$ be the smallest value such that $k\notin \alpha(\tau_r(\succ_k),z)$. Similarly, let $s$ be the smallest value such that $k\notin \alpha(\tau_s(\succ_k),z')$. The following lemma does most of the work in proving that $\alpha^*$ satisfies backward consistency. 

\begin{lemma}\label{lemma: backward consistency}
$\alpha$ satisfies the following two properties:
\begin{enumerate}
    \item For any $l<r$, $(\tau_l(\succ_k),z)$ and $(\tau_s(\succ_k),z')$ are $k$-connected under $\alpha$
    \item $(\tau_r(\succ_k),z)$ and $(\tau_s(\succ_k),z')$ are $i$-connected under $\alpha$.
\end{enumerate}
\end{lemma}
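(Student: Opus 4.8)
The plan is to lift the $i$-connecting path that witnesses $i$-connectedness under $\alpha^*$ in the reduced space up to honest $\alpha$-connecting paths in the full space, inserting steps in which $k$ descends its own preference $\succ_k$ at exactly the right moments. By hypothesis there is an acyclic sequence $w^1,\dots,w^p$ of suballocations of $N\setminus\{k\}$ with $w^1=z$, $w^p=z'$, $d(w^1,w^2)=\{i\}$, and $d(w^{l+1},w^l)\subset\alpha^*(w^l)$ for every $l<p$, and each $w^l$ lies outside $C^*$. I would first record bookkeeping notation: write $\rho(w)$ for the smallest index with $k\notin\alpha(\tau_{\rho(w)}(\succ_k),w)$, so that $\alpha^*(w)=\alpha(\tau_{\rho(w)}(\succ_k),w)$, so that $k\in\alpha(\tau_m(\succ_k),w)$ for every $m<\rho(w)$, and so that $\rho(z)=r$ and $\rho(z')=s$; note that each $(\tau_{\rho(w^l)}(\succ_k),w^l)$ is infeasible, as otherwise $w^l\in C^*$.

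The main obstacle, and the only place backward consistency of $\alpha$ is used, is the monotonicity claim $\rho(w^{l+1})\ge\rho(w^l)$ for each $l<p$. To prove it I would fix $m<\rho(w^l)$ and observe that letting $k$ descend $\succ_k$ one object at a time exhibits $(\tau_m(\succ_k),w^l)$ and $(\tau_{\rho(w^l)}(\succ_k),w^l)$ as $k$-connected under $\alpha$: each step changes only $k$, each is legal since $k\in\alpha(\tau_{m'}(\succ_k),w^l)$ for $m'<\rho(w^l)$, and the chain is acyclic because $k$'s object strictly descends. I would then apply backward consistency to this $k$-connected pair with perturbed allocation $(\tau_m(\succ_k),w^{l+1})$: here $d\big((\tau_m(\succ_k),w^l),(\tau_m(\succ_k),w^{l+1})\big)=d(w^l,w^{l+1})\subset\alpha^*(w^l)=\alpha(\tau_{\rho(w^l)}(\succ_k),w^l)$ and $k\notin d(w^l,w^{l+1})$, so backward consistency forces $k\in\alpha(\tau_m(\succ_k),w^{l+1})$. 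Since $m<\rho(w^l)$ was arbitrary, $\rho(w^{l+1})\ge\rho(w^l)$.

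Granting monotonicity, I would build the path for part 2 directly. Start at $(\tau_r(\succ_k),w^1)$ and let $i$ compromise: since $d(w^1,w^2)=\{i\}\subset\alpha^*(w^1)=\alpha(\tau_r(\succ_k),w^1)$, the legal first step $(\tau_r(\succ_k),w^1)\to(\tau_r(\succ_k),w^2)$ has difference exactly $\{i\}$. Inductively, once the construction has reached $(\tau_{\rho(w^l)}(\succ_k),w^{l+1})$, let $k$ descend from level $\rho(w^l)$ to level $\rho(w^{l+1})$ (legal because $\rho(w^{l+1})\ge\rho(w^l)$ and $k\in\alpha(\tau_m(\succ_k),w^{l+1})$ for all $m<\rho(w^{l+1})$), then perform the $\alpha^*$-move $w^{l+1}\to w^{l+2}$ at level $\rho(w^{l+1})$, which is legal because $d(w^{l+2},w^{l+1})\subset\alpha^*(w^{l+1})=\alpha(\tau_{\rho(w^{l+1})}(\succ_k),w^{l+1})$. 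This terminates at $(\tau_{\rho(w^p)}(\succ_k),w^p)=(\tau_s(\succ_k),z')$. I expect acyclicity to be routine: $k$'s object descends $\succ_k$ weakly monotonically with only consecutive repeats, and each other agent's object sequence is that of the acyclic sequence $w^1,\dots,w^p$ with consecutive repeats inserted, which preserves acyclicity. Since the only difference of size one to $i$ is the first, this gives $i$-connectedness and hence part 2.

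Part 1 should follow by prepending a $k$-descent. For $l<r$ we have $k\in\alpha(\tau_{l'}(\succ_k),z)$ for all $l'<r$, so $(\tau_l(\succ_k),z)\to(\tau_{l+1}(\succ_k),z)\to\dots\to(\tau_r(\succ_k),z)$ is a legal acyclic chain whose first difference is $\{k\}$; concatenating it with the part 2 path (whose leading $i$-move now counts as an interior step, legal since $i\in\alpha(\tau_r(\succ_k),z)$) exhibits $(\tau_l(\succ_k),z)$ and $(\tau_s(\succ_k),z')$ as $k$-connected. In short, the entire argument hinges on the $\rho$-monotonicity established through backward consistency; once that is in hand, the construction and the legality and acyclicity checks are mechanical.
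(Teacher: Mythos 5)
Your proof is correct and follows essentially the same route as the paper's: both interleave $k$-descents down $\succ_k$ with the $\alpha^*$-moves of the witnessing acyclic sequence, and both obtain the key monotonicity of the first index at which $k$ ceases to be a compromiser (the paper's non-decreasing $n_1,\dots,n_p$; your $\rho$-monotonicity) by applying backward consistency of $\alpha$ to the $k$-descent paths. The only difference is presentational: the paper builds a single path starting at $(\tau_1(\succ_k),z)$ and reads off both parts by truncation, whereas you construct the part-2 path first and prepend a $k$-descent for part 1.
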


Before proving the lemma, we argue that it delivers backward consistency of $\alpha^*$. Let $z^*$ be a suballocation for the agents other than $k$ such that $d(z^*,z')\subset \alpha^*(z')$ and such that $i\notin d(z^*,z')$. Now, by definition, $\alpha^*(z')=\alpha(\tau_s(\succ_k),z')$ and $k\notin \alpha^*(z')$. By (1) above, using backward consistency of $\alpha$, we have that $k\in \alpha(\tau_l(\succ_k),z^*)$ for all $l<r$. By (2) above, again using backward consistency of $\alpha$, we have that $i$ is in $\alpha(\tau_r(\succ_k),z^*)$. We need to show that $i\in \alpha^*(z^*)$. However, we have shown that $\alpha'(\tau_l(\succ_k),z^*)=\{k\}$ for all $l<r$ and $i\in \alpha(\tau_r(\succ_k),z^*)$. If $k\notin \alpha(\tau_r(\succ_k),z^*)$ we are done. Otherwise, we can repeatedly apply forward consistency to remove $k$ from the set of compromisers to find that $i\in \alpha^*(z^*)$ as desired.

We now prove the lemma above.

\noindent \textit{Proof of Lemma \ref{lemma: backward consistency}:}
It suffices to show that for any acyclic sequence of suballocations $z^1,\dots, z^p$ for the agents other than $k$, we can find a non-decreasing list of natural numbers $n_1,\dots, n_{p}$ such that $n_1=r$ and $n_{p}=s$ such that the sequence of (complete) allocations 
\begin{equation*}
    \begin{split}
        (\tau_1(\succ_k),z^1),\dots, &(\tau_{n_1}(\succ_k),z^1),\\
        &(\tau_{n_1}(\succ_k),z^2), \dots, (\tau_{n_2}(\succ_k),z^2), \\
        &\hspace{2.57cm}(\tau_{n_2}(\succ_k),z^3), \dots \\
        &\hspace{4.44cm} \dots, (\tau_{n_{p-1}}(\succ_k),z^{p}), \dots ,(\tau_{n_{p}}(\succ_k),z^{p})
    \end{split}
\end{equation*}
is acyclic such that at each step the set of agents whose allocation changes is a subset of compromisers under $\alpha$. Note that in the sequence above, for any step where $k$ does not compromise, the set of compromisers are from $\alpha^*$. We need to show that this is consistent with $\alpha$. Start with $$(\tau_1(\succ_k),z^1),\dots, (\tau_{n_1}(\succ_k),z^1),(\tau_{n_1}(\succ_k),z^2).$$ Since $n_1=r$, we have that $\alpha(\tau_{n_1}(\succ_k),z^1)=\alpha^*(z^1)$. By backward consistency of $\alpha$, we have that $k\in \alpha(\tau_l(\succ_k),z^2)$ for all $l<n_1$. If $k\notin (\tau_{n_1}(\succ_k),z^2)$ then set $n_2=n_1$. Otherwise, let $n_2$ be the first number such that $k\notin\alpha(\tau_{n_2}(\succ_k),z^2)$. By definition, $\alpha(\tau_{n_2}(\succ_k),z^2)=\alpha^*(z^2)$. For the next step, we have, by backward consistency of $\alpha$ that for any $l<n_2$, $k\in \alpha(\tau_l(\succ_k),z^3)$. Let $n_3$ be the smallest number such that $k\notin\alpha(\tau_{n_3}(\succ_k),z^3)$. By definition, $\alpha(\tau_{n_3}(\succ_k),z^3)=\alpha^*(z^3)$. Continue in this way to get the desired sequence. This will be acyclic by definition since it is acyclic for all agents other than $k$, and by construction $k$ is moving down their preference list under $\succ_k$. 
\qed

We have established the following fact.

\begin{lemma}
    $\alpha^*$ satisfies backward consistency.
\end{lemma}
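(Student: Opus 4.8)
The plan is to deduce backward consistency of $\alpha^*$ from Lemma~\ref{lemma: backward consistency} together with the forward and backward consistency of $\alpha$ that we already have in hand. Unwinding the definition, I must take two suballocations $z$ and $z'$ of the agents other than $k$ that are $i$-connected under $\alpha^*$, together with any $z^*$ satisfying $d(z^*,z')\subset \alpha^*(z')$ and $i\notin d(z^*,z')$, and conclude that $i\in \alpha^*(z^*)$. The first move is purely bookkeeping: let $r$ be the least index with $k\notin\alpha(\tau_r(\succ_k),z)$ and $s$ the least index with $k\notin\alpha(\tau_s(\succ_k),z')$, so that by construction of $\alpha^*$ one has $\alpha^*(z)=\alpha(\tau_r(\succ_k),z)$, $\alpha^*(z')=\alpha(\tau_s(\succ_k),z')$, and $k\notin\alpha^*(z')$. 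This converts a statement about $\alpha^*$ on suballocations into statements about $\alpha$ on the full allocations $(\tau_l(\succ_k),\cdot)$, where the consistency hypotheses on $\alpha$ can be applied directly.

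Next I would invoke the two conclusions of Lemma~\ref{lemma: backward consistency}. Part~(1) gives that $(\tau_l(\succ_k),z)$ and $(\tau_s(\succ_k),z')$ are $k$-connected under $\alpha$ for every $l<r$, and part~(2) gives that $(\tau_r(\succ_k),z)$ and $(\tau_s(\succ_k),z')$ are $i$-connected under $\alpha$. Since $d(z^*,z')\subset\alpha^*(z')=\alpha(\tau_s(\succ_k),z')$ and $i\notin d(z^*,z')$, applying backward consistency of $\alpha$ along the connections of part~(1) yields $k\in\alpha(\tau_l(\succ_k),z^*)$ for every $l<r$, i.e. $\alpha'(\tau_l(\succ_k),z^*)=\{k\}$; applying it to the connection of part~(2) yields $i\in\alpha(\tau_r(\succ_k),z^*)$.

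Finally I would translate back to $\alpha^*$ by a short case split. If $k\notin\alpha(\tau_r(\succ_k),z^*)$, then, since $\alpha'(\tau_l(\succ_k),z^*)=\{k\}$ for all $l<r$, the allocation $(\tau_r(\succ_k),z^*)$ is exactly the first one in the sequence $(\tau_1(\succ_k),z^*),(\tau_2(\succ_k),z^*),\dots$ whose compromiser set is not $\{k\}$, so $\alpha^*(z^*)=\alpha(\tau_r(\succ_k),z^*)\ni i$ and we are done. If instead $k\in\alpha(\tau_r(\succ_k),z^*)$, I would let $k$ compromise and repeatedly apply forward consistency of $\alpha$ to strip $k$ out of the compromiser set while retaining $i$, until reaching the first allocation past step $r$ whose compromiser set is not $\{k\}$; that set is $\alpha^*(z^*)$ and still contains $i$. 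Either way $i\in\alpha^*(z^*)$, which is backward consistency of $\alpha^*$. I expect the last case---peeling $k$ off via forward consistency while keeping $i$ and confirming that the process terminates exactly at the allocation defining $\alpha^*(z^*)$---to be the main obstacle, since it is the one place where the two consistency conditions on $\alpha$ must interact; the remainder is the routine translation between $\alpha^*$ on suballocations and $\alpha$ on full allocations set up in the first step.
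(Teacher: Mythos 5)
Your proposal is correct and takes essentially the same route as the paper: you invoke parts (1) and (2) of Lemma~\ref{lemma: backward consistency} together with backward consistency of $\alpha$ to get $\alpha'(\tau_l(\succ_k),z^*)=\{k\}$ for all $l<r$ and $i\in\alpha(\tau_r(\succ_k),z^*)$, and then split on whether $k\in\alpha(\tau_r(\succ_k),z^*)$, stripping $k$ out via repeated applications of forward consistency in the latter case. This is precisely the paper's own argument (given just before the proof of Lemma~\ref{lemma: backward consistency}), down to the final case analysis.
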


Finally, we may prove the result. Every 2-agent marginal mechanism can be derived by taking margins, one-at-a-time to fix all agents preferences other than the two agents. By the results above, the two-agent marginal mechanism will be a local priority mechanism which satisfies forward and backward consistency. The only two-agent mechanisms which satisfy consistency are the set of local dictatorships (\citeasnoun{RoAh23}), which are exactly the set of group strategy-proof two-agent mechanisms. \qed

Now we show that backward consistency is not necessary through an example. 

\begin{example}
Figure \ref{fig: counterexample} gives a local priority mechanism with four agents and three objects. First, we show that this example violates backward consistency. Note that $(a,a,a,a)$ and $(b,b,a,a)$ are $1$ and $2$-connected. $3$ is the local compromiser at $(b,b,a,a)$. However, $4$ is the only local compromiser at $(a,a,b,a)$ violating backward consistency. One can verify that this mechanism is group strategy-proof.
    \begin{figure}[H]
        \centering
        \includegraphics[scale=0.5]{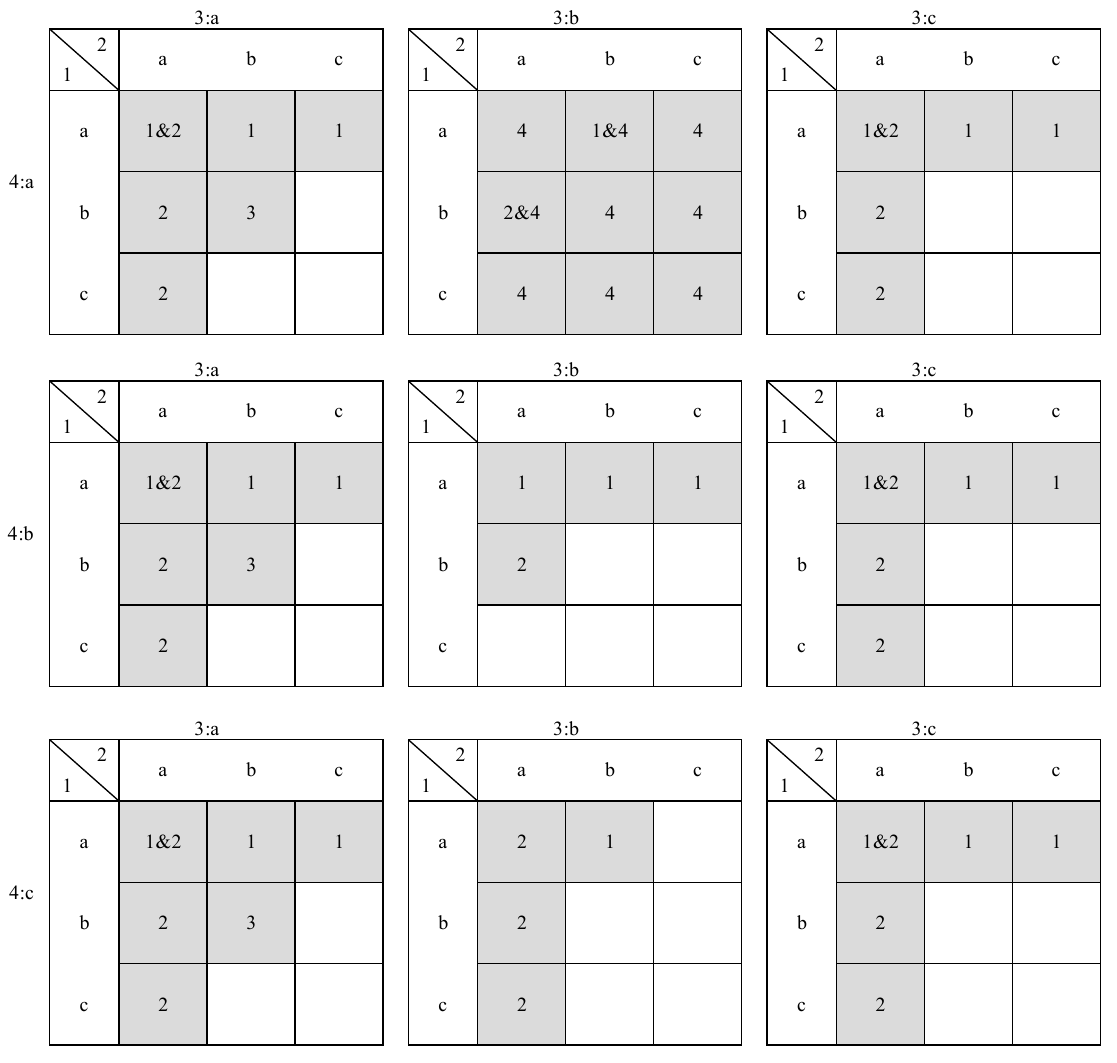}
        \caption{A local priority mechanism with 3 objects and four agents. Agent $1$'s allocation is determined by the row, agent $2$'s by the column. Agent $3$'s allocation is determined by the panel going left to right and $4$'s is determined by the panel going up and down.}
        \label{fig: counterexample}
    \end{figure}
\end{example}

\subsection{Proof of Proposition \ref{prop:comparative statics}}

Suppose that $\alpha$ and $\alpha'$ are as in the statement of the proposition. By Proposition \ref{local compromiser closure}, we can modify $\alpha$ and $\alpha'$ to $\beta$ and $\beta'$ so that
\begin{enumerate}
    \item $\beta\subset \alpha$ and $\beta'\subset \alpha'$ pointwise
    \item $\beta$ and $\beta'$ are single-valued on the infeasible set
    \item If $i$ is in $\alpha(x)$ for any $x$ then $\beta(x)=\{i\}$
    \item If $i$ is in $\alpha'(x)$ for any $x$ then $\beta'(x)=\{i\}$
    \item If $i\notin \alpha(x)$ then $\beta(x)=\beta'(x)$
\end{enumerate} and $\beta$ and $\beta'$ induce the same local priority mechanism as $\alpha$ and $\alpha'$ respectively. Fix any preference profile $\succ$ and let $z^{0}\xrightarrow{j_{1}} z^{1}\xrightarrow{j_{2}} \cdots \xrightarrow{j_{p}} z^{p}$  and $w^{0}\xrightarrow{k_{1}} w^{1}\xrightarrow{k_{2}} \cdots \xrightarrow{k_{q}} w^{q}$ be the sequence of allocations reached under the local priority mechanism for $\beta$ and $\beta'$ respectively. The two sequences are identical until the first $l$ such that $\beta(z^l)=\{i\}$ and $\beta'(w^l)\neq\{i\}$. If there is no $l'>l$ such that $\beta'(w^{l'})=\{i\}$ we have the desired result since $i$ never compromises along the sequence under $\beta'$ but does under the sequence under $\beta$. Otherwise there is such a $l'$. Modify the preference profile so that all agents put all alternatives they prefer to their allocation under $z^{l+1}$ to the bottom of their list, otherwise maintaining the rankings over the other objects. Call this profile $\succ'$. The sequence of allocations reached under the local priority mechanism for $\beta$ at $\succ'$ is simply the original sequence starting at $z^{l+1}$:  $z^{l+1}\xrightarrow{j_{l+2}} z^{l+2}\xrightarrow{j_{l+3}} \cdots \xrightarrow{j_{p}} z^{p}$. The sequence of allocations reached under the local priority mechanism for $\beta'$ at $\succ'$ may be changed, but since the local priority mechanism under $\beta'$ is group strategy-proof, the new sequence $r^{0}\xrightarrow{i_{1}} r^{1}\xrightarrow{i_{2}} \cdots \xrightarrow{i_{q}} r^{t}$ is such that $r^t=w^q$. Proceeding inductively, the sequence reached in the local priority algorithm under $\beta$ and $\beta'$ at $\succ'$ agree until they reach an allocation $y$ such that $\beta(y)=\{i\}$ and $\beta'(y)\neq \{i\}$. If there is an infeasible allocation later in the sequence under $\beta'$ where $i$ compromises, repeat the step above. Otherwise, $i$ gets a strict improvement. At each step, the sequence $z^{0}\xrightarrow{j_{1}} z^{1}\xrightarrow{j_{2}} \cdots \xrightarrow{j_{p}} z^{p}$  is truncated, so only finitely many steps need to taken. \qed

{\footnotesize
\bibliography{gspbib}
\bibliographystyle{econometrica}
}
\clearpage

\appendix

\end{document}